%
\documentclass[runningheads]{llncs}
\usepackage[T1]{fontenc}

\usepackage{graphicx}
%

\usepackage{amsmath}
\usepackage{proof} 
\usepackage{booktabs} 
\usepackage{amssymb} 
\usepackage{mathtools} 
\usepackage{tikz} 
\usetikzlibrary{automata}
\usetikzlibrary{shapes}
\usetikzlibrary{arrows,calc,fit}
\usetikzlibrary{snakes}
\usepackage{color,soul} 
\setul{0.5ex}{0.15ex}
\setulcolor{red}
\usepackage{comment} 
	\usepackage{hyperref} 
	\usepackage{enumerate} 

	\newcommand{\rNum}[1]{\expandafter{\romannumeral #1\relax}}
	\newcommand{\rNUM}[1]{\uppercase\expandafter{\romannumeral #1\relax}}

	\newcommand{\bff}[1]{\mathbf{#1}}
	\newcommand{\mbb}[1]{\mathbb{#1}}
	\newcommand{\mcl}[1]{\mathcal{#1}}
	\newcommand{\spc}[1]{\begin{spacing}{#1}}
		\newcommand{\spce}{\end{spacing}}
	
	\newcommand{\la}[0]{\langle}
	\newcommand{\ra}[0]{\rangle}
	\newcommand{\mfr}[1]{\mathfrak{#1}}

	\newcommand{\Romann}[1]{\MakeUppercase{\romannumeral #1}}
	

	\newcommand{\ConfV}[0]{\mathit{CVar}}
	\newcommand{\Conf}[0]{\bff{Conf}}
	\newcommand{\Prop}[0]{\bff{Prop}}
	\newcommand{\Prog}[0]{\bff{Prog}}

	\newcommand{\LDL}[0]{\mathit{DL}_p}
	\newcommand{\abort}[0]{\uparrow}
	\newcommand{\ter}[0]{\downarrow}
	\newcommand{\true}[0]{\mathit{true}}
	\newcommand{\false}[0]{\mathit{false}}
	\newcommand{\Fmla}[0]{\bff{Form}}
	\newcommand{\Pred}[0]{\bff{Prop}}
	\newcommand{\Sem}[1]{[\![ #1 ]\!]}
	\newcommand{\cnt}[0]{\mathit{C}}

	\newcommand{\trans}[0]{\longrightarrow}

	\newcommand{\Clo}[0]{\bff{Cl}}

	\newcommand{\GDL}[0]{\textit{GDL}}
	\newcommand{\Assn}[0]{\bff{Assn}}

	\newcommand{\suf}[0]{\preceq_s}
	\newcommand{\psuf}[0]{\prec_s}
	
	\newcommand{\psufr}[0]{\succ_s}
	\newcommand{\sufeq}[0]{\approx_s}
	\newcommand{\mult}[0]{\preceq_m}
	\newcommand{\pmult}[0]{\prec_m}
	\newcommand{\multr}[0]{\succeq_m}
	\newcommand{\pmultr}[0]{\succ_m}
	\newcommand{\multeq}[0]{{\approx_m}}

	\newcommand{\TA}[0]{\bff{TA}}
	\newcommand{\app}[0]{\mfr{I}}
	\newcommand{\free}[0]{\textit{free}}
	
	\newcommand{\DLF}[0]{\bff{DL_p}}
	
	\newcommand{\boolsem}[0]{{\mfr{T}}}
	\newcommand{\cfeq}[0]{=_{\textit{cf}}}
	\newcommand{\Eval}[0]{\bff{Eval}}

	\newcommand{\ClProg}[0]{{\Clo(\Prog)}}
	\newcommand{\ClConf}[0]{{\Clo(\Conf)}}
	\newcommand{\ClFmla}[0]{{\Clo(\Fmla)}}
	\newcommand{\pfDLp}[0]{{P_\textit{DLP}}}
	\newcommand{\Beha}[0]{\bff{Tran}}
	
	\newcommand{\Oper}[0]{\Lambda}
	\newcommand{\AFmla}[0]{\bff{F}}
	
	\newcommand{\FV}[0]{\textit{FV}}
	
	\newcommand{\CT}[0]{\textit{CT}}
	\newcommand{\Termi}[0]{\bff{Ter}}
	\newcommand{\termi}[0]{\Downarrow}
	\newcommand{\Terminate}[0]{\Omega}
	\newcommand{\Domain}[1]{{(\TA_{#1}, \Eval_{#1}, \app_{#1})}}
	\newcommand{\Extra}[0]{E}
	\newcommand{\SExtra}[0]{E_s}
	\DeclareMathOperator{\seq}{;}
	\DeclareMathOperator{\cho}{\cup}
	\newcommand{\WP}[0]{{\textit{WP}}}
	\newcommand{\E}[0]{E}
	\newcommand{\fodl}[0]{\textit{FO}}
	\newcommand{\se}[0]{\leadsto}
	\newcommand{\sepc}[0]{{*}}
	\newcommand{\sepi}[0]{{-\!*}}
	\newcommand{\dlFmla}[0]{\Fmla_{\textit{dl}}}
	\newcommand{\partto}[0]{\rightarrowtail}
	\newcommand{\Alloc}[0]{\bff{cons}}
	\newcommand{\disAlloc}[0]{\bff{disp}}
	\newcommand{\dom}[0]{\textit{dom}}
	\DeclareMathOperator{\disj}{\bot}
	\newcommand{\allocto}[0]{\dashrightarrow}
	\newcommand{\Sep}[0]{{\textit{SP}}}
	\newcommand{\std}[0]{\textit{std}}
	\newcommand{\BD}[0]{\textit{BD}}
	\newcommand{\Exp}[0]{\bff{Exp}}
	\newcommand{\Term}[0]{\TA}
	
	\newcommand{\g}[0]{g}
	\newcommand{\Sub}[0]{{\textit{Sub}}}
	
	\newcommand{\dddef}[0]{=_{df}}

	\newcommand{\Var}[0]{\mathit{Var}}
	\newcommand{\Etrap}[0]{\textit{trap}}
	\newcommand{\Eloop}[0]{\textit{loop}}
	\newcommand{\Eend}[0]{\textit{end}}
	\newcommand{\Eif}[0]{\textit{if}}
	\newcommand{\Ethen}[0]{\textit{then}}
	
	\newcommand{\Eemit}[0]{\textit{emit}}
	\newcommand{\Eexit}[0]{\textit{exit}}
	
	\newcommand{\Epause}[0]{\textit{pause}}
	\DeclareMathOperator{\Split}{|}
	\DeclareMathOperator{\suc}{\$}
	\newcommand{\Wwhile}[0]{\textit{while}}
	\newcommand{\Wdo}[0]{\textit{do}}
	\newcommand{\Wif}[0]{\textit{if}}
	\newcommand{\Wthen}[0]{\textit{then}}
	\newcommand{\Welse}[0]{\textit{else}}
	\newcommand{\Wend}[0]{\textit{end}}
	

	\begin{document}
		\title{Parameterized Dynamic Logic --- Towards A Cyclic Logical Framework for General Program Specification and Verification}
		%
		%
		
		\author{Yuanrui Zhang\inst{1}\orcidID{0000-0002-0685-6905}}
	
	%
	
	\authorrunning{Y. Zhang}
	
	%
	
	\institute{Collage of Software, Nanjing University of Aeronautics and Astronautics, China\\
		\email{yuanruizhang@nuaa.edu.cn, zhangyrmath@126.com} \\
	}

	\maketitle              
	\begin{abstract}
		We present a theory of parameterized dynamic logic, namely $\LDL$, for specifying and reasoning about a rich set of program models based on their transitional behaviours. 
		Different from most dynamic logics that deal with regular expressions or a particular type of formalisms, $\LDL$ introduces a type of labels called ``program configurations'' as explicit program status for symbolic executions, allowing programs and formulas to be of arbitrary forms according to interested domains. 
		This characteristic empowers dynamic logical formulas with a direct support of symbolic-execution-based reasoning, while still maintaining reasoning based on syntactic structures in traditional dynamic logics through a rule-lifting process.   
		We propose a proof system and build a cyclic preproof structure special for $\LDL$, which guarantees the soundness of infinite proof trees induced by symbolically executing programs with explicit/implicit loop structures. 
		The soundness of $\LDL$ is formally analyzed and proved. 
		$\LDL$ provides a flexible verification framework based on the theories of dynamic logics.
		It helps reduce the burden of developing different dynamic-logic theories for different programs, and save the additional transformations in the derivations of non-compositional programs. 
		We give some examples of instantiations of $\LDL$ in particular domains, showing the potential and advantages of using $\LDL$ in practical usage.

		\keywords{Dynamic Logic \and Program Deduction \and Logical Framework \and Cyclic Proof \and Symbolic Execution }
	\end{abstract}

	\section{Introduction}
	\label{section:Summery}
	
	\subsection{Background and Motivations}
	
	Program verification based on theorem proving has become a hot topic recently, especially with the rapid development of AI-assisted technologies that can greatly enhance the automation of interactive proof processes~\cite{yang2024formalmathematicalreasoningnew}. 
	
	Dynamic logic~\cite{Harel00} is an extension of modal logic for reasoning about programs.
	Here, the term `program' can mean an abstract formal model, not necessarily an explicit computer program. 
	In dynamic logic, a program $\alpha$ is embedded into a modality $[\cdot]$ in a form of $[\alpha]\phi$, meaning that after all executions of $\alpha$, formula $\phi$ holds. 
	Formula $\phi\to [\alpha]\psi$ exactly captures partial correctness of programs expressed by triple $\{\phi\}\alpha\{\psi\}$ in Hoare logic~\cite{Hoare63}. 
	Essentially, 
	verification in dynamic logic is a process of continuously altering the form $[\alpha]\phi$ based on how $\alpha$ evolves according to its semantics. 
	This characteristic brings an advantage of clearly observing programs' behavioral changes throughout the whole verification processes. 
	With a dynamic form $[\alpha]\phi$ that mixes programs and formulas, 
	dynamic logic is able to express many complex program properties such as 
	$[\alpha]\la\beta\ra\phi$, expressing that after all executions of $\alpha$, there is an execution of $\beta$ after which $\phi$ holds (where $\la\cdot\ra$ is the dual operator of $[\cdot]$ with $\la\beta\ra\phi$ defined as $\neg[\beta]\neg\phi$). 
	As one of popular logical formalisms, dynamic logic 
	has been used for many types of programs, such as process algebras~\cite{Benevides10}, programming languages~\cite{Beckert2016}, synchronous systems~\cite{Zhang21,Zhang22}, hybrid systems~\cite{Platzer07b,Platzer18} and probabilistic systems~\cite{Pardo22,kozen85,Feldman84}. 
	Because of modality $\la\cdot\ra$, it is a natural candidate for unifying correctness and `incorrectness' reasoning recently proposed and developed in work e.g.~\cite{Hearn19,Zilberstain23}. 
	
	The programs of dynamic logics are usually (tailored) regular expressions with tests (e.g.~\cite{Fischer79,Platzer07b}) 
	or actual programming languages like Java~\cite{Beckert2016}. 
	The denotational semantics of these models is usually \emph{compositional}, in the sense that 
	the deductions of the logics mean to dissolve the syntactic structures of these models. 
	For example, in propositional dynamic logic (PDL)~\cite{Fischer79}, to prove a formula $[\alpha \cup \beta]\phi$, we prove both formulas $[\alpha]\phi$ and $[\beta]\phi$, where $\alpha$ and $\beta$ are sub-regular-expressions of $\alpha\cup \beta$.  
	This so-called `divide-and-conquer' approach to program verification has brought numerous benefits, the most significant being its ability to scale verifications by allowing parts of programs to be skipped (see an example in Section~\ref{section:Lifting Process in While Programs}). 
	
	However, this typical \emph{structure-based} reasoning heavily relies on programs' denotational semantics, thus brings two major drawbacks: 
	
	Firstly, dynamic logics are some special theories aiming for particular domains. 
	Thus for a new type of programs, one needs to either perform a 
	transformation from target programs into the program models of a dynamic logic so that existing inference rules can be utilized, or
	carefully design a set of particular rules to adapt the new programs' semantics. 
	The former usually means losses of partial program structural information during the transformations, while the latter often demands a large amount of work. 
	For example, Verifiable C~\cite{Appel_Dockins_Hobor_Beringer_Dodds_Stewart_Blazy_Leroy_2014}, a famous program verifier for C programming language based on Hoare logic, used nearly 40,000 lines of Coq code to define the logic theory. 
	
	Secondly but importantly as well, some programs are naturally non-compositional or do not temporally provide a compositional semantics at some level of system design. Typical examples are neural networks~\cite{Goodfellow16}, automata-based system models (e.g. a transition model of AADL~\cite{ZBYang14}) and some programming languages (as we will see soon). 
	For these models, one can only
	re-formalize their structures by additional transformation procedures so that structure-based reasoning is possible. 
	One example is synchronous imperative programming languages such as Esterel~\cite{Berry92} and Quartz~\cite{Gesell12}. 
	In a synchronous parallel program $\alpha \parallel \beta$, 
	the executions of $\alpha$ and $\beta$ cannot be reasoned independently because the concurrent executions at each instance are generally not true interleaving. 
	To maintain the consistency of synchrony, events in one instance have to be arranged in a special order, which makes symbolic executions of  $\alpha \parallel \beta$ as a whole a necessary step in program derivations. 
	We will see an example in Appendix~\ref{section:Example Two: A Synchronous Loop Program}. 

	In this paper, to compensate for the above two shortcomings in existing dynamic-logic theories, we present a theory of parameterized dynamic logic ($\LDL$). 
	$\LDL$ supports a general approach for specification and verification of computer programs and system models. 
	On one hand, $\LDL$ offers an abstract-leveled setting for describing a broad set of programs and formulas, and 
	is empowered to
	support a \emph{symbolic-execution-based} reasoning based on a set of program transitions, called ``program behaviours''. This reduces the burden of developing different dynamic-logic theories for different programs, and saves the additional transformations in the derivations of non-compositional programs. 
	On the other hand, $\LDL$'s forms are still compatible with  the existing structural rules of dynamic logics through a lifting process. 
	This makes $\LDL$ subsume (at least partial) existing dynamic-logic theories into a single logical framework.

	\subsection{Illustration of Main Idea}
	Informally, in $\LDL$ we treat dynamic formula $[\alpha]\phi$ as a `parameterized' one in which program $\alpha$ and formula $\phi$ can be of arbitrary forms, provided with a structure  $\sigma$ called ``program configuration'' to record current program status for programs' symbolic executions. 
	So, a $\LDL$ dynamic formula is of the form: $\sigma : [\alpha]\phi$, following a convention of labeled formulas. 
	It expresses the same meaning as $[\alpha]\phi$ in traditional dynamic logics, except that program status is shown explicitly alongside $[\alpha]\phi$. 
	
	\ifx
	Informally, $\LDL$ is based on the typical forms: $[\alpha]\phi$ with modality $[\cdot]$, but 
	`parameterizes' the programs $\alpha$ and formulas $\phi$, and 
	adds a parameter: \emph{program configurations} $\sigma$. 
	Therefore, a $\LDL$ dynamic formula is of the \emph{labelled} form: $\sigma : [\alpha]\phi$, which expresses the same meaning as $[\alpha]\phi$ in dynamic logics, except that it 
	assumes $\alpha, \phi$ and $\sigma$ to be of arbitrary forms. 
	As we will see, the extra label $\sigma$ plays the critical role 
	of recording the context of a program execution. 
	\fi

	To see how $\LDL$ formulas are powerful for supporting both symbolic-execution-based and structure-based reasoning, consider a simple example. 
	We prove a formula $\phi_1 \dddef (x\ge 0\to [x := x + 1]x > 0)$ in first-order dynamic logic~\cite{Harel79} (FODL), 
	where $x$ is a variable ranging over the set of integer numbers.  
	Intuitively, $\phi_1$ means that if $x\ge 0$ holds, then $x>0$ holds after the execution of the assignment $x := x + 1$. 
	In FODL, to derive formula $\phi_1$, 
	we apply the structural rule: 
	$$
	\begin{aligned}
		\label{equ:assignRule}
		\infer[^{(x := e)}]
		{[x := e]\phi}
		{\phi[x / e]}
	\end{aligned}
	$$
	for assignment
	on formula $[x:=x+1]x > 0$ by substituting $x$ of $x > 0$ with $x + 1$, and obtain $x + 1 > 0$. 
	Formula  $\phi_1$ thus becomes $\phi'_1\dddef (x\ge 0 \to x + 1 > 0)$, which is true for any integer number $x\in \mbb{Z}$. 
	
	While in $\LDL$, formula $\phi_1$ can be expressed as a form:  $\psi_1\dddef (t\ge 0\to \{x\mapsto t\} : [x := x + 1]x > 0)$, where formula $[x := x + 1]x > 0$ is labeled by
	configuration $\{x\mapsto t\}$, capturing the current program status with $t$ a free variable, meaning 
	``variable $x$ has value $t$''. 
	This form may seem tedious at first sight. 
	But one soon can find out that with a configuration explicitly showing up, to derive formula $\psi_1$, we no longer need rule $(x:=e)$, but instead can directly perform a program transition of $x := x + 1$ as: $(x := x + 1, \{x\mapsto t\})\trans (\ter, \{x\mapsto t + 1\})$~\footnote{note that to distinguish `$\trans$' and `$\to$'}, which assign $x$'s value with its current value added by $1$. 
	Here $\ter$ indicates a program termination. 
	Formula $\psi_1$ thus becomes $\psi'_1\dddef (t\ge 0\to \{x\mapsto t + 1\} : [\ter] x > 0)$, which is 
	actually $\psi''_1\dddef (t\ge 0\to \{x\mapsto t + 1\} : x > 0)$ by eliminating the dynamic part `$[\ter]$' since $\ter$ behaves nothing. 
	By applying configuration $\{x\mapsto t + 1\}$ on formula $x > 0$ (which means replacing variable $x$ of $x > 0$ with its value $t+1$ in $\{x\mapsto t + 1\}$), 
	we obtain the same valid formula $\phi'_1$: $t\ge 0\to t + 1 > 0$ (modulo the free variables $x, t$). 
	
	The above derivations of $\LDL$ formulas benefit from that for many programs,  transitions like $(x := x + 1, \{x\mapsto t\})\trans (\ter, \{x\mapsto t + 1\})$ are natural from their operational semantics and can be easily formalized (as will be seen in Appendix~\ref{section:Formal Definitions of While Programs}), while 
	structural rules like $(x := e)$ would consume more efforts to design. And there might exists no structural rules in certain cases, as will be illustrated in the example of Esterel programs in Appendix~\ref{section:Example Two: A Synchronous Loop Program}. 
	
	With the labels $\sigma$ structurally independent from their dynamic formulas $[\alpha]\phi$, $\LDL$ formulas $\sigma : [\alpha]\phi$ also support programs' structural rules through a so-called ``lifting process'' (Section~\ref{section:Lifting Process From Program Domains}). 
	This capability provides $\LDL$ with a flexible framework in which different 
	inference rules can be applied to make a trade-off between  symbolic-execution-based and structure-based reasoning in practical deduction processes. 
	For example, here, to derive the formula $\psi$ above, one can also apply a lifted rule from rule $(x := e)$: 
	$$\begin{aligned} \infer[]
		{\{x\mapsto t\} : [x := e]\phi}
		{\{x\mapsto t\} : \phi[x / e]}\end{aligned}$$ 
	by fixing $\{x\mapsto t\}$, 
	and then $\psi_1$ becomes $\psi'''_1\dddef (t\ge 0)\to (\{x\mapsto t\} : x + 1 > 0)$, which is also formula $\phi'_1$: $t\ge 0\to t + 1 > 0$ after applying configuration $\{x\mapsto t\}$ on formula $x + 1 > 0$. 
	
	\ifx
	$\LDL$ formulas benefit from that no structural rules special for a target program are needed, but only a program operational semantics and a definition of how a configuration applies to a non-dynamical formula. 
	Normally, for existing programs and system models, formalizing a program's operational semantics (like $(x := x + 1, \{x\mapsto t\})\trans (\ter, \{x\mapsto t + 1\})$) is easier and more direct than designing an inference rule (like rule $(x := e)$).
	As will be illustrated in Section~\ref{section:Example Two: A Synchronous Loop Program}, in some case, for a target program with a certain structure, there might exist no structural rules, or, designing such a rule can be complex. 
	\fi
	
	\ifx
	On the other hand, 
	with the labelled configurations structurally independent from their dynamic formulas, 
	$\LDL$  formulas are still compatible with the syntactic-structure-based rules in specific domains. 
	One can still apply these rules just as before by keeping the labelled parts unchanged. 
	For this example, we can apply a ``lifting version'' of rule (\ref{equ:assignRule}): $\begin{aligned} \infer[]
		{\sigma : [x := e]\psi}
		{\sigma : \psi[x / e]}\end{aligned}$, and $\psi$ then becomes $\psi''\dddef (\{x\mapsto t\} : x\ge 0)\to (\{x\mapsto t\} : x + 1 > 0)$, which also can be transformed into formula $\phi'$: $t\ge 0\to t + 1 > 0$ after the applications of $\{x\mapsto t\}$. 
	This ``lifting ability'' allows $\LDL$ to subsume special dynamic logic theories and 
	provides a flexible framework in which 
	different inference rules can be applied to make a trade-off between 
	compositional reasoning and symbolic executions in practical deduction processes. 
	\fi
	
	\subsection{Main Contributions}
	
	In this paper, different from the traditional approach that relies on Kripke structures (cf.~\cite{Harel00}), we give the semantics of $\LDL$ directly based on the program behaviours of target programs. 
	
	After building the logic, we propose a proof system for $\LDL$, providing a set of rules for reasoning about general programs based on program behaviours. 
	To provide compatibilities of $\LDL$ with existing dynamic-logic theories, 
	we propose a lifting process for introducing additional structural rules from particular program domains. 
	
	Unlike the derivation processes based on dissolving program structures, deriving a $\LDL$ formula may lead to an infinite proof tree. 
	We develop a cyclic proof (cf.~\cite{Brotherston07}) approach for $\LDL$.
	Generally, cyclic proof is a technique to ensure that a certain infinite proof tree, called `preproof', can lead to a valid conclusion if it is `cyclic', that is, containing a certain type of derivation traces (Definition~\ref{def:Progressive Step/Progressive Derivation Trace}). 
	Cyclic proof system provides a solution for reasoning about programs with implicit loop structures and well supports the symbolic-execution-based reasoning of $\LDL$ formulas. 
	We propose a ``cyclic preproof'' structure for $\LDL$ (Definition~\ref{def:Cyclic Preproof})
	and prove that it is sound under certain conditions.
	
	\ifx
	In this paper, 
	to give the semantics of $\LDL$, we follow the way of defining a traditional dynamic logic by building a special Kripke structure based on the operational semantics of target programs. 
	Our way follows the way of defining a labelled logical formula, as in~\cite{Docherty19}. 
	After building  $\LDL$, we propose a proof system for $\LDL$, providing a set of rules for reasoning arbitrary programs. 
	Unlike the derivation processes based on dissolving program structures, the derivation process of a $\LDL$ formula may be infinite. 
	To solve this problem, 
	we adopt the cyclic proof approach (cf.~\cite{Brotherston07}), a technique to ensure that a certain proof tree, called \emph{preproof}, can lead to a valid conclusion if it is `cyclic', that is, containing a certain type of derivation traces (Definitions~\ref{def:Progressive Step/Progressive Derivation Trace}).    
	Cyclic proof system provides a support for reasoning about programs with implicit loop structures and is well-suited for symbolic-execution-based reasoning of $\LDL$ formulas. 
	We propose a ``cyclic preproof structure'' for $\LDL$ (Definition~\ref{def:Cyclic Preproof})
	and prove that it is sound under certain conditions. 
	\fi

	To summarize, our contributions are mainly three folds:
	\begin{itemize}
		\item We give the syntax and semantics of $\LDL$ formulas.
		\item We build a proof system for $\LDL$.
		\item We construct a cyclic preproof structure for $\LDL$ and prove its soundness. 
	\end{itemize}
	
	The rest of the paper is organized as follows. 
	Section~\ref{section:Dynamic Logic LDL} defines the syntax and semantics of $\LDL$ formulas. 
	In Section~\ref{section:A Cyclic Proof System for LDL}, we propose a cyclic proof system for $\LDL$. 
	In Section~\ref{section:Case Studies}, we show how $\LDL$ can be useful as a verification framework by giving some examples.  
	Section~\ref{section:Proof of Theorem - theo:Soundness of A Cyclic Preproof}
	analyzes and proves the soundness of $\LDL$. 
	Section~\ref{section:Related Work} introduces some previous work that is closely related to ours, while Section~\ref{section:Discussions and Future Work} 
	makes some discussions and talks about future work. 
	
	\section{Dynamic Logic $\LDL$}
	\label{section:Dynamic Logic LDL}
	
	
	In this section, we mainly define the syntax and semantics of $\LDL$. 
	What mostly distinguishes $\LDL$ from other dynamic logics is that its programs and formulas are not in particular forms, but instead can be of any forms, only provided that some restrictions (see Definition~\ref{def:Program Properties}) are respected. 
	Program configurations are introduced in $\LDL$ as explicit program status for describing program behaviours, enabling reasoning directly through symbolic executions. 
	In Section~\ref{section:Programs and Configurations}, we introduce the basic settings for the ingredients of $\LDL$. They are prerequisites for formally defining $\LDL$. 
	In Section~\ref{section:Examples of Term Structures} we introduce two examples of instantiations of program structures of $\LDL$ which will be used throughout this paper to illustrate our ideas. 
	Section~\ref{section:Program Behaviours} introduces the notion of program behaviours and its restriction under which this current work is concerned. 
	In Section~\ref{section:Syntax and Semantics of LDL}, we formally define the syntax and semantics of $\LDL$.

	
	
	
	
	
	\ifx
	The theory of $\LDL$ is based on a pre-defined set of terms over a countable set $\Sigma$ of signatures and a countable set of variables $\Var$. 
	In the following  subsections we will firstly define terms, based on which 
	we will introduce the concepts of programs, configurations, formulas and program behaviours. 
	They form the basic ingredients of the theory of $\LDL$. 
	\fi

	\subsection{Programs, Configurations and Formulas}
	\label{section:Programs and Configurations}
	The construction of $\LDL$ is based on a set $\TA$ of \emph{terms} defined over a set $\Var$ of variables and a set $\Sigma$ of signatures. 
	In $\TA$, we distinguish three subsets 
	$\Prog, \Conf$ and $\Fmla$, representing the sets of programs, configurations and formulas respectively. 
	$\TA \supseteq \Prog\cup \Conf\cup \Fmla$. 
	We use $\equiv$ to represent the identical equivalence between two terms in $\TA$. 
	A function $f : \TA \to \TA$ is called \emph{structural} (w.r.t. $\Sigma$) if it satisfies that for any $n-$ary signature $s_n\in \Sigma$ ($n\ge 0$, $0-$ary signature is a constant) and term $s_n(t_1,...,t_n)$ with $t_1,...,t_n\in \TA$, $f(s_n(t_1,...,t_n)) \equiv s_n(f_1(t_1),...,f_n(t_n))$ for some structural functions $f_1,...,f_n : \TA\to \TA$. 
	Naturally, we assume that a structural function always maps a term to a term with the same type: $\Prog, \Conf$ and $\Fmla$. 

	\ifx
	In $\Fmla$, 
	we distinguish a type of formulas as ``program transitions'', denoted by $\Beha$, and a type of formulas as ``program terminations'', denoted by $\Termi$. 
	Thus $\Beha\subseteq \Fmla, \Termi\subseteq \Fmla$. 
	They will be introduced later in Section~\ref{section:Program Behaviours}. 
	\fi
	
	\ifx
	For $A\in \{\Prog, \Conf, \Fmla\}$, we use $\equiv$ to represent the identical equivalence between two terms of $A$. 
	We distinguish a subset $\Clo(A)$ of \emph{closed terms} of $A$. 
	As we will see, closed terms correspond to determined boolean semantics in our discussions. 
	A closed term is usually, though not necessarily in our setting, a term without \emph{free variables} --- a notion whose definition depends on explicit structures of $A$. 
	Non-closed terms in $A\setminus \Clo(A)$ are also called \emph{open}. 
	An \emph{evaluation} $\rho : A\to \Clo(A)$ maps each term of $A$ to a closed term of $\Clo(A)$. 
	We write as $\Eval$ the set of all evaluations. 
	\fi

	In $\Prog$, there is one distinguished program $\ter$, called \emph{termination}. 
	It indicates a completion of executions of a program. 
	
	Program configurations have impacts on formulas. 
	We assume that $\Conf$ is associated with a function
	$\app : (\Conf\times \Fmla) \to \Fmla$, called \emph{configuration interpretation}, that returns a formula by applying a configuration on a formula. 
	\ifx
	$\app$ is \emph{closed} under closed terms, in the sense that 
	for any $\sigma\in \Conf$ and $\phi\in \ClFmla$, $\app(\sigma, \phi) \in\ClFmla$. 
	\fi

	An \emph{evaluation} $\rho : \TA\to \TA$ is a structural function that maps each formula $\phi\in \Fmla$ to a \emph{proposition} --- a special formula that has a boolean semantics of either truth (represented as $1$) or falsehood (represented as $0$). 
	We denote the set of all propositions as $\Prop$. $\Prop\subseteq \Fmla$. 
	The boolean semantics of formulas is expressed by function $\boolsem : \Prop\to \{0, 1\}$. 
	$\Prop$ forms the semantical domain as the basis for defining $\LDL$. 
	An evaluation $\rho\in \Eval$ \emph{satisfies} a formula $\phi\in \Fmla$, denoted by 
	$\rho\models_\boolsem \phi$, if $\boolsem(\rho(\phi)) = 1$. 
	A formula $\phi\in \Fmla$ is \emph{valid}, denoted by 
	$\models_\boolsem \phi$, if $\rho\models_\boolsem \phi$ for all $\rho\in \Eval$.

	\ifx
	The subset $\Clo(\Fmla)$ of $\Fmla$ is called \emph{propositions}, which has a boolean semantics $\boolsem: \Clo(\Fmla)\to \{0,1\}$ that interprets each term to either true ($1$) or false ($0$).  
	In $\ClFmla$, we distinguish two propositions $\true$ and $\false$, corresponding to the boolean true and false respectively. 
	An evaluation $\rho\in \Eval$ \emph{satisfies} a formula $\phi\in \Fmla$, denoted by 
	$\rho\models_\boolsem \phi$, is defined if $\boolsem(\rho(\phi)) = 1$. 
	A formula $\phi\in \Fmla$ is \emph{valid}, denoted by 
	$\models_\boolsem \phi$, if $\rho\models_\boolsem \phi$ for any $\rho\in \Eval$. 
	\fi
	
	In the rest of this paper, we call \emph{program domain} a special instantiation of structure $\Domain{}$ for terms $\TA$, evaluations $\Eval$ and functions $\app$. 
	
	\ifx
	In the rest of this paper, let $\TA = \{\Prog, \Conf, \Fmla, \rho, \app\}$ called a \emph{program domain}. 
	Let $\TA = \Prog\cup \Conf\cup \Fmla$ be the set of all terms. 
	We use $A_N$ to denote an instantiated structure or function of $A$ with name $N$, for example, $\TA_N, \rho_N, \Conf_N$, etc. 
	\fi
	
	
	\subsection{Examples of Program Domains}
	\label{section:Examples of Term Structures}
	We consider two instantiated program domains, namely $\Domain{\WP}$ and $\Domain{\E}$. 
	Table~\ref{table:An Example of Program Structures} depicts two program models in these domains respectively: a \emph{While} program $\textit{WP}\in \Prog_{\textit{WP}}$ and an Esterel program $\textit{E}\in \Prog_E$. 
	Below we only give an informal explanation of relevant structures, but at the same time providing comprehensive formal definitions in Appendix~\ref{section:Formal Definitions of While Programs} for \emph{While} programs, in case readers want to know more details.  
	
	In these two program domains, the sets of variables are denoted by $\Var_{\textit{WP}}$ and $\Var_E$ respectively. 
	In program $\textit{WP}$, variables are $s, n, x$, whose domain is the set of integer numbers $\mbb{Z}$. 
	In program $\textit{E}$, variables are $x$ and $S$. 
	$x$ is also called a ``local variable'', with $\mbb{Z}$ as its domain. 
	Signal $S$ is a special variable distinguished from local variables, whose domain is $\mbb{Z}\cup \{\bot\}$, with $\bot$ to express `not-happened' of a signal (also called `absence').   
	Intuitively, assignment $x := e$ means assigning the value of an expression $e$ to a variable $x$. 
	Signal emission $\Eemit\ S(e)$ means assigning the value of an expression $e$ to a signal variable $S$ and broadcasts the value. 
	A mapping $x\mapsto e$ of a configuration means that variable $x$ has the value of expression $e$. 
	Formulas are first-order logical formulas in the domain $\mbb{Z}$ or $\mbb{Z}\cup\{\bot\}$ with arithmetical expressions. 
	Examples are $n > 0$, $\forall x. x + y = 0$, $x > 0\wedge y > 0$, etc.

	A configuration of a \emph{While} program is a variable storage in which each variable has a unique value of an expression. 
	For example, a configuration $\{x\mapsto 5, y\mapsto 1\}$ has variable $x$ storing value $5$ and variable $y$ storing value $1$.
	On the other hand, 
	a configuration of an Esterel program is a variable stack, allowing several (local) variables with the same name. For example, $\{x\mapsto 5 \Split y\mapsto 1 \Split y \mapsto 2\}$ 
	represents a configuration in which there are 3 variables: $x$, and two $y$s storing different values.
	The separator `$\Split$' instead of `$,$' is used to distinguish from a configuration of a \emph{While} program, with the right-most side as the top of the stack. 
	\ifx
	For simplicity, we can understand that configurations are expressed by $n-$ary signatures in $\Sigma$. 
	In reality, however, a configuration is usually implemented by binary signatures in an inductive way. 
	For example, a configuration $\{x_1\mapsto e_1,...,x_n\mapsto e_n\}$ 
	can be expressed as a list structure defined as: $\textit{Cons}\ (\textit{Cons}\ (...(\textit{Cons}\ \textit{nil}\ x_1\mapsto e_1)...)\ x_{n-1}\mapsto e_{n-1})\ x_n\mapsto e_n$, with $\textit{nil}\in \Sigma$ representing an empty list and $(\textit{Cons}\ \_\ \_)\in \Sigma$ combining a list with a list element. 
	\fi
	
	``Free variables'' are those occurrences of variables 
	that are not in the effect scope of any binders, such as the assignment $x:= (\cdot)$, or quantifiers $\forall x. (\cdot), \exists x. (\cdot)$. 
	Let $\FV_N: \TA_N\to \Var_{N}$ ($N\in \{\textit{WP}, E\}$) return the set of all free variables of a term. 
	For example, in program $\textit{WP}$, variable $n$ of expression $n - 1$ in $n := n - 1$ and variable $s$ of expression $s + n$ in $s := s + n$ are bound by the assignments: $n := (\cdot)$ and $s := (\cdot)$ respectively, while variable $n$ is free in formula $n > 0$ and in expression $s + n$ of assignment $s := s + n$.  
	So $\FV_{\textit{WP}}(\textit{WP}) = \{n\}$. 
	In a configuration, while mapping $x\mapsto \cdot$ is a binder for labeled formulas (as defined in Appendix~\ref{section:Formal Definitions of While Programs}), 
	a variable in an expression $e$ of a mapping $x\mapsto e$ is always free, since interpretation $\app_\WP$ introduced below is simply a type of substitutions.  
	\ifx
	In a configuration, mapping $x\mapsto \cdot$ is a binder. 
	A variable in an expression $e$ of a mapping $x\mapsto e$ is free w.r.t. a configuration $\sigma$ if it does not appear on the left side of any mapping in $\sigma$. 
	For example, variable $y$ in $y + 1$ of mapping $x\mapsto y + 1$ of a configuration $\{x\mapsto y + 1, y\mapsto v\}$ is bound by mapping $y\mapsto v$, while $v$ is free w.r.t. $\{x\mapsto y + 1, y\mapsto v\}$. 
	In \emph{While} and Esterel programs, for a configuration $\sigma$, we require that all variables in an expression $e$ of a mapping $x\mapsto e$ must be free w.r.t. $\sigma$. 
	This avoids confusing mapping such as $x\mapsto x + 1$. 
	For example, the configuration $\{x\mapsto y + 1, y\mapsto v\}$ above is illegal, instead one can write it as $\{x\mapsto v + 1, y\mapsto v\}$. 
	\fi
	
	\ifx
	In the context of this example, 
	closed terms are terms without ``free variables''.
	Free variables are those occurrences of variables 
	that are not in the effect scope of any binders, such as the assignment $x:= (\cdot)$, or quantifiers $\forall x. (\cdot), \exists x. (\cdot)$. 
	Let $\FV_N: \TA_N\to \Var_{N}$ ($N\in \{\textit{WP}, E\}$) return the set of all free variables of a term. 
	For example, in program $\textit{WP}$, variable $n$ of expression $n - 1$ in $n := n - 1$ and variable $s$ of expression $s + n$ in $s := s + n$ are bound by the assignments: $n := (\cdot)$ and $s := (\cdot)$ respectively, while variable $n$ is free in formula $n > 0$ and in expression $s + n$ of assignment $s := s + n$.  
	So $\textit{WP}$ is not a closed term, with $\FV_{\textit{WP}}(\textit{WP}) = \{n\}$. 
	Similarly, in a configuration $\{x\mapsto v + x + 1, y\mapsto x + 1\}$, 
	variable $v$ is free, while variable $x$ of expressions $v + x + 1$ and $x + 1$ are bound by the mapping $x \mapsto (\cdot)$. 
	\fi
	
	An evaluation $\rho(t)$ ($\rho\in \Eval_\WP$ or $\rho\in \Eval_\E$) returns a ``closed term'' (terms without free variables) by replacing each free variable of term $t$ with a value in $\mbb{Z}$ or $\mbb{Z}\cup \{\bot\}$. 
	For example, if $\rho(n) = 5$, then $\rho(n > 0) = (5 > 0)$. 
	$5 > 0$ is a proposition with a boolean semantics: $\boolsem_N(5 > 0) = 1$ in the theory of integer numbers. 
	A substitution $t[e/x]$ replaces each free variable $x$ of $t$ with term $e$. 
	
	
	Given a configuration $\sigma$ and a formula $\phi$, 
	in a \emph{While} program, 
	the interpretation $\app_{\textit{WP}}(\sigma, \phi)$ is defined as a type of substitutions (see Appendix~\ref{section:Formal Definitions of While Programs}).
	It replaces each free variable $x$ of $\phi$ with its value stored in $\sigma$. 
	For example, $\app_{\textit{WP}}(\{n\mapsto 1, s\mapsto 0\}, n > 0) = (1 > 0)$.  
	In an Esterel program, on the other hand, 
	the interpretation $\app_{E}(\sigma, \phi)$ replaces each free variable $x$ of $\phi$ with the value of the top-most variable $x$ in $\sigma$. 
	For instance, $\app_{E}(\{n \mapsto 1\ |\ n\mapsto 2\ |\ s \mapsto 0\}, n > 0) = (2 > 0)$ (by taking the value $2$ of the right $n$). 
	

	\begin{table}[tb]
		\begin{center}
			\noindent\makebox[\textwidth]{%
				\scalebox{0.9}{
					\begin{tabular}{c}
						\toprule
						\multicolumn{1}{l}{A \textit{While} program:}
						\\
						\begin{tabular}{r l}
							$\textit{WP}$
							&
							$
							\begin{aligned}
								\dddef&
								\{
								\textit{while}\
								(n > 0)\
								\textit{do}\
								s := s + n\ ;\
								n := n - 1\
								\textit{end}\
								\},
							\end{aligned}
							$
							\\
							\mbox{Configuration:}
							&
							$ 
							\{x_1\mapsto e_1, x_2\mapsto e_2,...,x_n\mapsto e_n\}, \ \ (n\ge 0)$
						\end{tabular}
						\\
						\midrule
						\midrule
						\multicolumn{1}{l}{An Esterel program:}
						\\
						\begin{tabular}{r l}
							$E$
							&
							$\dddef \{\Etrap\ A \parallel B\ \Eend\}, $
							where\\
							\multicolumn{2}{c}{
								$
								\begin{aligned}
									A \dddef&\ 
									\{\Eloop\ (\Eemit\ S(0)\ ;\ x := x - S\ ;\ \Eif\ x = 0\ \Ethen\ \Eexit\ \Eend\ ;\ \Epause)\ \Eend\},\\
									B \dddef&\ 
									\{\Eloop\ (\Eemit\ S(1)\ ;\ \Epause)\ \Eend\}
								\end{aligned}
								$
							}
							\\
							Configuration:
							&
							$
							\{x_1\mapsto e_1\ |\ x_2\mapsto e_2\ |\ ...\ |\ x_n\mapsto e_n\},\ \ (n\ge 0)
							$
						\end{tabular}
						\\
						\bottomrule
					\end{tabular}
				}
			}
		\end{center}
		\caption{An Example of Program Domains}
		\label{table:An Example of Program Structures}
	\end{table}
	
	\ifx
	\subsection{Examples of Term Structures}
	\label{section:Examples of Term Structures}
	
	We use an example of program domains throughout this paper to illustrate our ideas. 
	Note that below we only explain how these terms are expressed using signatures in our theory. 
	We do not show how to construct these terms formally in general, 
	because this actually depends on the system models or programming languages we choose. 
	But most importantly, it does not really matter because the theory of $\LDL$ is independent from explicit structures. 
	For some notions in the example, we only give informal explanations, as to rigorously define them requires explicit formal definitions of relevant structures. 
	
	
	Table~\ref{table:An Example of Program Structures} depicts the structures of two program models: a \emph{While} program $\textit{WP}$ and an Esterel program $\textit{E}$. 
	All terms listed are variables, signatures or compositions of signatures from pre-defined sets $\Sigma$ and $\Var$. 
	Set $\textit{Nat}$ of natural numbers, as usual, can be defined based on a constant $0\in \Sigma$ using a signature $\suc$ called `successor', in an inductive way: 
	(1) $0\in \textit{Nat}$; (2) $\suc t\in \textit{Nat}$ if $t\in \textit{Nat}$. 
	Signal $S$ is a type special variables distinguished from variable $x$.  
	$(N := \_)$, $(N\mapsto \_)$ and $\Eemit\ N(\_)$ are $1-$ary signatures, with $N$ a special constant not a variable. 
	Intuitively, assignment $x := e$ means assigning an expression $e$ to a variable $x$. 
	Signal emission $\Eemit\ S(e)$ means assigning an expression $e$ to a signal variable $S$ and broadcasts the value of $e$. 
	A mapping $x\mapsto e$ of a configuration means that variable $x$ has the value of expression $e$. 
	$n > 0$ is a formula. 
	
	A configuration of a \emph{While} program is each variable storage in which a variable has a unique value of an expression, for example a configuration $\{x\mapsto 5, y\mapsto 1\}$, which maps variable $x$ to $5$ and $y$ to $1$.
	On the other hand, 
	a configuration of an Esterel program is a variable stack, allowing several (local) variables with the same name. For example, $\{x\mapsto 5 \Split y\mapsto 1 \Split y \mapsto 2\}$ represents a configuration in which there are 3 variables: $x$, and two $y$s with different values.
	``$\Split$'' is used instead of ``$,$'' to remind that it is a stack, with the right-most side as the top of the stack. 
	For simplicity, we can understand that configurations are expressed by $n-$ary signatures in $\Sigma$. 
	In reality, however, a configuration is usually implemented by binary signatures in an inductive way. 
	For example, a configuration $\{x_1\mapsto e_1,...,x_n\mapsto e_n\}$ 
	can be expressed as a list structure defined as: $\textit{Cons}\ (\textit{Cons}\ (...(\textit{Cons}\ \textit{nil}\ x_1\mapsto e_1)...)\ x_{n-1}\mapsto e_{n-1})\ x_n\mapsto e_n$, with $\textit{nil}\in \Sigma$ representing an empty list and $(\textit{Cons}\ \_\ \_)\in \Sigma$ combining a list with a list element. 

	In the context of this example, informally, ``free variables'' are those 
	whose values do not rely on any program statements or configuration mappings and are not bound by any logical quantifiers like $\forall, \exists$. 
	For example, in program $\textit{WP}$, variable $n_{(1)}$ is free in formula $n_{(\textit{WP}, 1)} > 0$, while variable $n_{(1)}$ in assignment $n := n_{(\textit{WP}, 3)} - 1$ bound by the assignment $(n := \_)$ itself. 
	So $\textit{WP}$ is not closed, with $\FV(\textit{WP}) = \{n_{(\textit{WP}, 1)}, n_{(\textit{WP}, 2)}\}$. 
	Similarly, in a configuration $\{x\mapsto v_{(1)} + x_{(1)} + 1, y\mapsto x_{(2)} + 1\}$, 
	variable $v_{(1)}$ is free, while variables $x_{(1)}$ and $x_{(2)}$ are bound by the mapping $(x \mapsto \_)$.  
	
	Given a configuration $\sigma$ and a formula $\phi$, 
	in a \emph{While} program, 
	informally, 
	the interpretation $\app(\sigma, \phi)$ replaces each free variable $x$ of $\phi$ with its value stored in $\sigma$. 
	For example, $\app(\{n\mapsto 1, s\mapsto 0\}, n > 0) = (1 > 0)$, 
	$\app(\{n\mapsto s, s\mapsto 0\}, n > 0) = (0 > 0)$ (with bound variable $s$ firstly evaluated as $0$ in mapping $n\mapsto s$). 
	In an Esterel program, on the other hand, 
	the interpretation $\app(\sigma, \phi)$ replaces each free variable $x$ of $\phi$ with the value of the top-most variable $x$ in $\sigma$. 
	For instance, $\app(\{n \mapsto 1\ |\ n\mapsto 2\ |\ s \mapsto 1\}, n > 0) = (2 > 0)$ (by taking the value $2$ of the right $n$). 
	The above $(0 > 0)$ and $(2 > 0)$ are propositions (as closed formulas) with a boolean semantics: $\boolsem(0 > 0) = 0$ and $\boolsem(2 > 0) = 1$ in the theory of natural numbers.  
	
	From the example of Table~\ref{table:An Example of Program Structures}, 
	we can see that the notions of free variables and configuration interpretations can vary in different structures. 

	\ifx
	Table~\ref{table:An Example of Program Structures} depicts a \emph{while} program as an example of particular structures of $\Prog, \Conf$ and $\Oper$. 
	Given an initial value of variables $n$, program $\textit{WP}$ computes the sum from $n$ to $1$ stored in variable $s$. 
	The configurations of $\textit{WP}$ are defined as variable storage that maps a variable to a value. For example, $\{n \mapsto 1, s\mapsto 0\}$ denotes a configuration that maps $n$ to $1$ and $s$ to $0$. 
	The interpretation $\app(\sigma)$ of a configuration $\sigma$ is in the usual sense. 
	For example, $\app(\{n\mapsto 1, s\mapsto 0\})$ is the mapping that maps $n, s$ to $1, 0$ respectively, and maps other variables to themselves. 
	We have that, for instance, $\app(\{n\mapsto 1, s\mapsto 0\})(n\ge 0) = 1\ge 0$ by  assigning $n$ to $1$ in $n\ge 0$. 
	

	The lower part of the table shows several rule schemata for \emph{while} programs, which are related to the derivations of $\textit{WP}$ in Section~\ref{section:Example One: A While Program}. 
	Note that these rule schemata are independent from the theory of $\LDL$ and we only assume the existence of the set $\Oper$ of program transitions in our later discusses. 
	
	In Section~\ref{section:Example Two: A Synchronous Loop Program} we will see another example in Esterel language~\cite{Berry92} where 
	a program configuration is not a storage, but a stack. 
	\fi

	\ifx
	\textit{While} programs and Esterel programs have different syntax and configurations. 
	The operational semantics of \textit{while} programs is given by a finite set of rule schemata. 
	While we do not give the operational semantics of Esterel programs here due to its complexity. 
	One can refer to~\cite{Butucaru10} for more details. 
	\fi
	
	\ifx
	By its operational semantics, \textit{while} programs are deterministic programs. 
	And it is not hard to see that it satisfies the program properties in Definition~\ref{def:Program Properties}. 
	Esterel programs are also deterministic (cf.~\cite{Berry92}). 
	\fi


	\fi
	
	\subsection{Program Behaviours}
	\label{section:Program Behaviours}
	
	\textbf{Program Behaviours}.
	A \emph{program state} is a pair $(\alpha, \sigma)\in \Prog\times \Conf$ of programs and configurations. 
	In $\Fmla$, 
	we assume
	a binary relation $(\alpha, \sigma)\trans (\alpha', \sigma')$ called a \emph{program transition} between two program states $(\alpha, \sigma)$ and $(\alpha', \sigma')$.
	For any evaluation $\rho\in \Eval$, 
	proposition $\rho((\alpha, \sigma)\trans (\alpha', \sigma'))$ is assumed to be defined as: $\rho((\alpha, \sigma)\trans (\alpha', \sigma')) \dddef \rho(\alpha, \sigma)\trans \rho(\alpha', \sigma')$. 
	\emph{Program behaviours}, denoted by $\Oper$, is the set of all true program transitions: 
	$$
	\Oper\dddef\{(\alpha, \sigma)\trans (\alpha', \sigma')\ |\ \boolsem((\alpha, \sigma)\trans (\alpha', \sigma')) = 1\},  
	$$
	which is assumed to be \emph{well-defined} in the sense that 
	there is no true transitions of the form: $(\ter, \sigma)\trans ...$ from a terminal program.

	A \emph{path} $tr$ over $\Oper$ is a finite or infinite sequence:
	$s_1s_2...s_n...$ ($n\ge 1$), where for each pair $(s_i, s_{i+1})$ ($i\ge 1$), $(s_i\trans s_{i+1})\in \Oper$ is a true program transition.  
	A path is \emph{terminal}, if it ends with program $\ter$ in the form of $(\alpha_1,\sigma_1)...(\ter, \sigma_n)$ ($n\ge 1$). 
	We call a path \emph{minimum}, in the sense that in it there is no two identical program states. 
	A state $s$ is called \emph{terminal}, if from $s$ there is a terminal path over $\Oper$. 
	
	In order to reason about termination of a program, in $\Fmla$ we assume a unary operator $(\alpha, \sigma)\termi$ called the \emph{termination} of a program state $(\alpha, \sigma)$. 
	For an evaluation $\rho\in \Eval$, proposition $\rho((\alpha, \sigma)\termi)$ is defined as: $\rho((\alpha, \sigma)\termi)\dddef (\rho(\alpha, \sigma)) \termi$. 
	$\boolsem((\alpha, \sigma)\termi) \dddef 1$ if 
	there exists a terminal path over $\Oper$ starting from $(\alpha, \sigma)$. 
	We use $\Terminate$ to denote the set of all true program terminations: 
	$\Terminate\dddef \{(\alpha, \sigma)\termi\ |\ \boolsem((\alpha, \sigma)\termi) = 1\}$. 
	
	\ifx
	Let $\Termi\dddef\{(\alpha, \sigma)\termi\ |\ \alpha\in \Prog, \sigma\in\Conf\}$ be the set of all program terminations, 
	with $\Clo(\Termi)\dddef \{(\alpha, \sigma)\termi\ |\ \alpha\in \ClProg, \sigma\in\ClConf\}$ defined as the set of closed program terminations of $\Termi$. 
	For any program state $(\alpha, \sigma)\in \ClProg\times \ClConf$, 
	$\boolsem((\alpha, \sigma)\termi) = 1$ iff 
	there exists a terminal path over $\Oper$ starting from $(\alpha, \sigma)$. 
	We use $\Terminate$ to denote the set of all true program terminations: 
	$\Terminate\dddef \{(\alpha, \sigma)\termi\ |\ \boolsem((\alpha, \sigma)\termi) = 1\}\cap \Clo(\Termi)$. 
	\fi
	
	For programming languages, usually, program behaviours $\Oper$ (also program terminations $\Terminate$) are defined as structural operational semantics~\cite{Plotkin81} of Plotkin's style expressed as a set of rules, in a manner that a transition $\boolsem((\alpha, \sigma)\trans (\alpha', \sigma')) = 1$ only when it can be inferred according to these rules in a finite number of steps. 
	Appendix~\ref{section:Formal Definitions of While Programs} gives an example of the operational semantics of \emph{While} programs (Table~\ref{table:Operational Semantics of While Programs}). 
	
	\ifx
	As will be seen in Section~\ref{section:A Proof System for LDL}, program behaviours $\Oper$ and terminations $\Terminate$ can be expressed by a set of inference rules of program transitions, in a manner that a transition $(\alpha, \sigma)\trans (\alpha', \sigma')\in \Oper$, or a program termination $(\alpha, \sigma)\termi$, if 
	it can be inferred according to these rules in a finite number of steps. 
	\fi
	
	\begin{example}[Examples of Program Behaviours]
		In the example of Table~\ref{table:An Example of Program Structures}, 
		we have a program transition 
		$(\textit{WP}, \{n\mapsto 5, s\mapsto 0\})\trans (\textit{WP}', \{n\mapsto 5, s\mapsto 5\})$, where $\textit{WP}'\dddef n := n - 1\ ;\ \textit{WP}$, by executing assignment $s := s + n$. 
		This transition is also a program behaviour, since it matches the 
		operational semantics of $s := s + n$ (see Table~\ref{table:Operational Semantics of While Programs} of Appendix~\ref{section:Formal Definitions of While Programs} for more details).  
		However, 
		transition $(\textit{WP}, \{n\mapsto v, s\mapsto 0\})\trans (\textit{WP}', \{n\mapsto v, s\mapsto 5\})$ is not a proposition, because its truth value depends on whether $v > 0$ is true.  
		
		\ifx
		Let $\rho_{\WP}(n) = 5$, 
		then $(\rho_\WP(\WP), \sigma)\trans (\rho_{\WP}(\textit{WP}'), \{s\mapsto 0, n\mapsto 5\})$ is a program behaviour, where 
		$\rho_\WP(\textit{WP})\dddef s := 0\ ;\ \rho_\WP(\textit{WP}')$ and 
		$\rho_\WP(\textit{WP}')\dddef \textit{while}\ (5 > 0)\ \textit{do}\ s := s + 5\ ;\ n := n - 1\ \textit{end}$ are closed terms. 
		\fi
		
	\end{example}
	
	\textbf{Restrictions on Program Behaviours}.
	In this paper, our current research on the proof system of $\LDL$ (Section~\ref{section:A Cyclic Proof System for LDL}) confines us to only consider a certain type of programs whose behaviours satisfy the following properties. We simply call them \emph{program properties}. 
	
	
	\begin{definition}[Properties of Program Behaviours]
		\label{def:Program Properties}
		In program behaviours $\Oper$, it satisfies that 
		for each program state $(\alpha, \sigma)$, the following properties hold:
		\begin{enumerate}
			
			\ifx
			\item\label{item:Well-definedness-2}Well-Definedness. 
			For any $\sigma_1$ with $\sigma \cfeq \sigma_1$, 
			there is a bijection $f : A\to B$ such that for any $\sigma'\in A$, $\sigma' \cfeq f(\sigma')$, 
			where 
			$A\dddef \{\sigma'\ |\ (\alpha, \sigma)\trans (\alpha', \sigma')\mbox{ for some $\alpha'$}\}$, 
			$B\dddef \{\sigma'\ |\ (\alpha, \sigma_1)\trans (\alpha', \sigma')\mbox{ for some $\alpha'$}\}$. 
			\fi

			\item\label{item:Branching Finiteness} Branching Finiteness. From $(\alpha, \sigma)$, there exists only a finite number of transitions.

			\item\label{item:Termination Finiteness} Termination Finiteness. 
			From $(\alpha, \sigma)$, there is only a finite number of minimum terminal paths.  
		\end{enumerate}
	\end{definition}
	
	The currently discussed programs restricted by Definition~\ref{def:Program Properties} are actually a rich set, 
	including, for example, all deterministic programs (i.e., there is only one transition from any closed program state) and programs with finite state spaces (i.e., from each closed program state, only a finite number of closed program states can be reached). 
	\textit{While} programs and Esterel programs introduced in Table~\ref{table:An Example of Program Structures} above are both deterministic programs. They satisfy the program properties in Definition~\ref{def:Program Properties}. 
	
	However, there exist types of programs that do not satisfy (some of) these program properties. 
	For example, hybrid programs (cf.~\cite{Platzer07b}) with continuous behaviours do not satisfy branching finiteness. Some non-deterministic programs, like probabilistic programs~\cite{Barthe_Katoen_Silva_2020}, do not satisfy termination finiteness. 
	More future work will focus on relaxing these conditions to include a wider range of programs.

	\ifx
	We introduce the notion of equivalence that two configurations carry the same information about programs, which plays a critical role for building the proof system of $\LDL$ in Section~\ref{section:A Proof System for LDL}. 
	
	\begin{definition}[Configuration Equivalence]
		\label{def:Configuration Equivalence}
		Given two configurations $\sigma_1, \sigma_2\in \Conf$, 
		the equivalent relation $\sigma_1\cfeq \sigma_2$ is defined such that
		\begin{enumerate}
			\item \label{item:same effect} for any formula $\phi\in \Fmla$, $\app(\sigma_1, \phi)\equiv\app(\sigma_2, \phi)$; 
			\item \label{item:same program behaviour}for any state $(\alpha, \sigma_1)$ and $(\alpha, \sigma_2)$, 
			there is a bijection $\theta : \Oper(\alpha, \sigma_1)\to \Oper(\alpha, \sigma_2)$ such that for any $(\alpha', \sigma'_1)\in \Oper(\alpha, \sigma_1)$, $\theta((\alpha', \sigma'_1))\dddef (\alpha', \sigma'_2)$ for some $\alpha', \sigma'_1, \sigma'_2$ with $\sigma'_1\cfeq \sigma'_2$. 
		\end{enumerate}
	\end{definition}
	
	Intuitively, 
	Definition~\ref{def:Configuration Equivalence}-\ref{item:same effect} says that two configurations have the same effect on terms. 
	This does not necessarily mean two configurations are identical. 
	Definition~\ref{def:Configuration Equivalence}-\ref{item:same program behaviour} means that 
	if two configurations carry the same program information, then the program behaviours induced by them should be the same. 
	\fi
	
	\subsection{Syntax and Semantics of $\LDL$}
	\label{section:Syntax and Semantics of LDL}
	
	\ifx
	Notice that here we do not care about under which model a predicate $p$ is true. This depends on how we choose the domain $\Fmla$. We only stipulate that each closed term in $\Fmla$ must be either true or false. 
	\fi
	
	Based on the assumed sets $\Prog, \Conf$ and $\Fmla$, we give the syntax of $\LDL$ as follows. 
	
	\ifx
	\begin{definition}[$\LDL$ Formulas]
		\label{def:LDL Formulas}
		
		In $\Fmla$, we distinguish a modality operator: $[\cdot]$ and two logical operators: negation $\neg$ and conjunction $\wedge$. 
		A ``dynamic formula'' is inductively defined as follows:
		\begin{enumerate}
			\item $[\alpha]\phi$ is a dynamic formula if $\alpha\in \Prog$ and $\phi\in \Fmla$;
			\item $\neg\phi$ is a dynamic formula if $\phi$ is a dynamic formula in $\Fmla$;
			\item $\phi_1\wedge \phi_2$ is a dynamic formula, if $\phi_1, \phi_2\in \Fmla$ and one of $\phi_1,\phi_2$ is a dynamic formula. 
		\end{enumerate}
		
		A parameterized dynamic logical ($\LDL$) formula is a formula of the form:
		$\sigma : \phi$, where $\phi$ is a dynamic formula in $\Fmla$. 
		
	\end{definition}
	
	We call $\sigma : \phi$ a \emph{labeled formula} if $\phi\in \Fmla$ and call $\sigma$ the \emph{label} of $\phi$. 
	We also call `$[\alpha]$' a \emph{dynamic part} of a dynamic formula that contains it. 
	Denote the set of dynamic formulas in $\Fmla$ as $\dlFmla$, 
	and denote the set of $\LDL$ formulas as $\DLF$. 
	\fi
	
	\begin{definition}[$\LDL$ Formulas]
		\label{def:LDL Formulas}
		A parameterized dynamic logical ($\LDL$) formula
		$\phi$ is defined as follows in BNF form:
		$$
		\begin{aligned}
			\psi \dddef&\  F\ |\ \neg \psi\ |\ \psi\wedge \psi\ |\ [\alpha]\psi,\\
			\phi\dddef &\ F\ |\ \sigma : \psi\ |\ \neg\phi\ |\ \phi\wedge \phi,
		\end{aligned}
		$$
		where $F\in \Fmla$, $\alpha\in \Prog$ and $\sigma\in \Conf$. 
		
		We denote the set of $\LDL$ formulas as $\DLF$. 
	\end{definition}
	
	$\sigma$ and $\phi$ are called the \emph{label} and the \emph{unlabeled part} of formula $\sigma : \phi$ respectively. 
	We call $[\alpha]$ the \emph{dynamic part} of a formula, and 
	call a $\LDL$ formula having dynamic parts a \emph{dynamic formula}. 
	Intuitively, formula $[\alpha]\phi$ means that after all executions of program $\alpha$, formula $\phi$ holds. 
	$\la\cdot\ra$ is the dual operator of $[\cdot]$. 
	Formula $\la\alpha \ra\phi$ can be written as $\neg [\alpha]\neg\phi$. 
	Other formulas with logical connectives such as $\vee$ and $\to$ can be expressed by formulas with $\neg$ and $\wedge$ accordingly.  
	Formula $\sigma : \phi$ means that $\phi$ holds under configuration $\sigma$, as $\sigma$ has an impact on the semantics of formulas as indicated by interpretation $\app$.  
	
	\ifx
	A formula of $\DLF$ is \emph{closed} if in it all terms from $A$ ($A\in \TA$) are closed. We use $\Clo(\DLF)$ to denote the set of all closed terms in $\DLF$. 
	The identical equivalence relation $\equiv$ between $\LDL$ formulas can be trivially extended from $\equiv_A$ ($A\in \TA$). 
	We assume an evaluation $\rho : \DLF\to \Clo(\DLF)$ extended from 
	$\rho_A$ ($A\in \TA$), and require that 
	$\rho$ is \emph{structural} for $\LDL$ formulas w.r.t. $\Eval$ in the following sense: for any $\sigma$, $F\in \Fmla$ and unlabeled formulas $\phi, \phi_1, \phi_2$,  
	\begin{enumerate}[(1)]
		\item $\rho(\sigma : F)\equiv \rho_1(\sigma) : \rho_2(F)$ for some $\phi_1, \phi_2\in \Eval$;
		\item $\rho(\sigma : \neg\phi)\equiv \rho_1(\sigma) : \neg \rho_2(\phi)$ for some $\phi_1, \phi_2\in \Eval$;
		\item $\rho(\sigma : (\phi_1\wedge \phi_2))\equiv \rho_1(\sigma) : (\rho_2(\phi_1)\wedge \rho_3(\phi_2))$ for some $\rho_1, \rho_2, \rho_3\in \Eval$;
		\item $\rho(\sigma : [\alpha]\phi)\equiv \rho_1(\sigma) : [\rho_2(\alpha)]\rho_3(\phi)$ for some $\rho_1, \rho_2, \rho_3\in \Eval$. 
	\end{enumerate}
	With this property, it can be proved that $\rho$ is actually consistent with closed terms. 
	\fi

	The semantics of a traditional dynamic logic is based on a Kripke structure~\cite{Harel00}, in which programs are interpreted as a set of world pairs and logical formulas are interpreted as a set of worlds. 
	In $\LDL$, we define the semantics by extending $\rho$ and $\boolsem$ from Section~\ref{section:Programs and Configurations} to formulas $\DLF$ and directly base on program behaviours $\Oper$. 
	

	\ifx
	Traditionally, the semantics of a dynamic logic is denotational and is given as a Kripke structure~\cite{Harel00}. 
	Following a similar manner, 
	to give the semantics of $\LDL$ formulas, 
	we firstly build a special Kripke structure consisting of closed configurations as worlds, and program transitions as relations between worlds. 
	This structure allows us to define a satisfaction relation of $\GDL$ formulas by evaluations and configurations, 
	based on which, the semantics of $\LDL$ is given. 
	\fi
	
	\ifx
	As shown in Section~\ref{???}, traditionally, the semantics of a dynamic logic is denotational and is given as a Kripke structure~\cite{???}. 
	Since $\LDL$ is an abstract logic where the explicit forms of its formulas rely on the discussed domains $\Prog, \Conf$ and $\Fmla$, the traditional notion of Kripke structures does not work as it requires explicit structures of programs. 
	
	Instead, we build a special type of Kripke structure directly based on closed configurations $\Clo(\Conf)$. 
	The operational semantics of programs is reflected as relations between closed configurations in the Kripke structure. 
	\fi

	\ifx
	\begin{definition}[Kripke Structure of $\LDL$]
		Given $\Prog, \Conf, \Fmla(\Pred)$ and $\Oper$, 
		the Kripke structure of $\LDL$ is a triple $M \dddef (W, \to, \mcl{I})$, where 
		$W = \Eval\times\Prog\times\Conf$ is the set of `worlds', $\to\subseteq W\times W$ is a set of relations between worlds, $\mcl{I}: \Fmla\to \mcl{P}(W)$
		is an interpretation of formulas in $\Fmla$ on the set of worlds, 
		satisfying the following conditions:
		\begin{enumerate}
			\item For each world $(\rho, \alpha, \sigma)$, $(\rho, \alpha, \sigma) \rightarrow (\rho, \alpha', \sigma')$ iff $((\alpha, \sigma)\trans (\alpha', \sigma'))\in \Oper$.
			\item For each formula $\phi\in \Fmla$, $(\rho, \alpha, \sigma)\in \mcl{I}(\phi)$ iff $\rho\models_\boolsem \app(\sigma, \phi)$. 
		\end{enumerate}
		
	\end{definition}
	
	Different from the typical Kripke structure (cf.~\cite{Harel00}), 
	in the Kripke structure of $\LDL$, the transitions between worlds are defined according to the program behaviours $\Oper$, rather than the syntactic structures of programs.

	Below we do not distinguish relation $\sigma \xrightarrow{\alpha/\alpha'} \sigma'$ between worlds $\sigma, \sigma'$ from 
	a program transition $(\alpha, \sigma)\trans (\alpha', \sigma')$. 
	\fi
	
	\ifx
	A \emph{path} $tr$ over $\Oper$ is a finite or infinite sequence:
	$s_1s_2...s_n...$ ($n\ge 1$), where for each pair $(s_i, s_{i+1})$ ($i\ge 1$), $(s_i\trans s_{i+1})\in \Oper$ is a program transition.  
	A path is \emph{terminal}, if it ends with program $\ter$ in the form of $(\alpha_1,\sigma_1)...(\ter, \sigma_n)$ ($n\ge 1$). 
	We call a path \emph{minimum}, in the sense that in it there is no two identical program states.
	\fi
	
	\ifx
	\begin{definition}[Evaluation on $\LDL$ Formulas]
		In $\LDL$ we assume a structural evaluation extended from $\rho : \TA\to \TA$, still denoted by $\rho$, such that for any $\phi\in \DLF$, $\rho(\phi)$ is a proposition that has a boolean semantics being either true of false.  
		
		We denote the set of all propositions of $\DLF$ by $\rho$ as $\Prop_\DLF$. 
	\end{definition}
	\fi
	
	\begin{definition}[Semantics of $\LDL$ Formulas]
		\label{def:Semantics of Labelled LDL Formulas}
		An  evaluation $\rho\in \Eval$ structurally maps a $\LDL$ formula $\phi$ into a proposition, whose truth value is defined inductively as follows by extending $\boolsem$:
		\begin{enumerate}
			\item $\boolsem(\rho(F))$ is already defined, if $F\in \Fmla$;
			\item $\boolsem(\rho(\sigma : F))\dddef 1$, if $F\in \Fmla$ and $\rho\models_\boolsem\app(\sigma, F)$;
			\item $\boolsem(\rho(\sigma : \neg \phi))\dddef \boolsem(\rho(\neg (\sigma : \phi)))$;
			\item $\boolsem(\rho(\sigma : \phi_1\wedge \phi_2))\dddef \boolsem(\rho((\sigma : \phi_1)\wedge (\sigma : \phi_2)))$;
			\item $\boolsem(\rho(\sigma : [\alpha]\phi))\dddef 1$, if for all terminal paths
			$(\alpha_1, \sigma_1)...(\ter, \sigma_n)$ ($n\ge 1$) over $\Oper$ with $(\alpha_1, \sigma_1)\equiv \rho(\alpha, \sigma)$, 
			$\boolsem(\rho(\sigma_n : \phi))\dddef 1$; 
			\item $\boolsem(\rho(\neg \phi))\dddef 1$, if $\boolsem(\rho(\phi)) = 0$;
			\item $\boolsem(\rho(\phi_1\wedge \phi_2))\dddef 1$, if both $\boolsem(\rho(\phi_1)) = 1$ and $\boolsem(\rho(\phi_2)) = 1$;
			\item $\boolsem(\rho(\phi))\dddef 0$ for any $\phi\in \DLF$, otherwise.    
		\end{enumerate}
		
		For any $\phi\in \DLF$, write $\rho\models_\boolsem \phi$ if $\boolsem(\rho(\phi)) = 1$, or simply $\rho\models \phi$. 
		
		\ifx
		A $\LDL$ formula $\sigma : \phi$ is satisfied by an evaluation $\rho\in \Eval$, denoted by $\rho\models \sigma : \phi$, 
		is defined if $\rho, \sigma\models \phi$, which 
		is inductively defined as follows based on the structure of unlabelled formula $\phi$:
		\begin{enumerate}
			\item $\rho, \sigma\models F$, if $F\in \Fmla$ and
			$\rho\models_\boolsem \app(\sigma, F)$.
			\item 
			$\rho, \sigma\models \neg\phi$, if $\rho, \sigma\not\models \phi$. 
			\item 
			$\rho, \sigma\models \phi_1\wedge \phi_2$, if both $\rho, \sigma\models \phi_1$ and $\rho, \sigma\models \phi_2$. 
			\item 
			$\rho, \sigma\models [\alpha]\phi$, if for all terminal paths
			$(\alpha_1, \sigma_1)...(\ter, \sigma_n)$ ($n\ge 1$) over $\Oper$ with $(\alpha_1, \sigma_1)\equiv \rho(\alpha, \sigma)$, 
			$\rho, \sigma_n\models \phi$. 
		\end{enumerate}
		\fi
		
	\end{definition}
	
	\ifx
	\begin{definition}[Satisfaction Relation of $\GDL$ Formulas]
		\label{def:Semantics of dynamic logical formulas}
		Given the Kripke structure $M = (\Conf(\Sigma), \to, \mcl{I})$ of $\GDL$, 
		the 
		satisfaction of a $\GDL$ formula $\phi$ w.r.t. an evaluation $\rho$ and a configuration $\sigma\in \Conf$, denoted by $\rho, \sigma\models_M \phi$ (or simply $\rho, \sigma\models \phi$), 
		is inductively defined as follows:
		\begin{enumerate}
			\item $\rho, \sigma\models_M \phi$ where $\phi\in \Fmla$, if 
			$\rho(\sigma)\in \mcl{I}(\phi)$.
			\item 
			$\rho, \sigma\models \neg\phi$, if $\rho, \sigma\not\models \phi$. 
			\item 
			$\rho, \sigma\models \phi_1\wedge \phi_2$, if both $\rho, \sigma\models \phi_1$ and $\rho, \sigma\models \phi_2$. 
			\item 
			$\rho, \sigma\models_M [\alpha]\phi$, if (1) $\rho(\alpha)$ is $\ter$ and $\rho, \sigma\models \phi$, or (2) for all paths
			$\sigma\xrightarrow{\rho(\alpha)/\alpha_1}...\xrightarrow{\alpha_n/\ter}\sigma'$, 
			$\rho, \sigma'\models_M \phi$. 
		\end{enumerate} 
	\end{definition}
	\fi
	
	Notice that for a $\LDL$ formula $\phi$, how $\rho(\phi)$ is defined explicitly can vary from case to case. 
	In Definition~\ref{def:Semantics of Labelled LDL Formulas}, we only assume that 
	$\rho$ is structural and $\rho(\phi)$ must be a proposition. 
	An example of evaluation $\rho$ is given in Appendix~\ref{section:Formal Definitions of While Programs}. 
	
	According to the semantics of operator $\la\cdot \ra$, 
	we can have that
	$\boolsem(\rho(\la\alpha\ra \phi)) = 1$ iff there exists a terminal path
	$(\alpha_1, \sigma_1)...(\ter, \sigma_n)$ ($n\ge 1$) over $\Oper$ with $(\alpha_1,\sigma_1)\equiv \rho(\alpha, \sigma)$ such that 
	$\boolsem(\rho(\sigma_n : \phi)) = 1$.

	\ifx
	We define the semantics of a $\LDL$ formula 
	naturally based on the Kripke structure of $\GDL$ and the satisfaction of $\GDL$ formulas on the Kripke structure. 
	\fi
	
	\ifx
	\begin{definition}[Semantics of Labelled $\LDL$ Formulas]
		\label{def:Semantics of Labelled LDL Formulas}
		Given the Kripke structure $M = (\Clo(\Conf), \to, \mcl{I})$ of $\LDL$,
		the semantics of a labelled $\LDL$ formula $\sigma : \phi$ is given as the satisfaction relation w.r.t. $M$ and a closed configuration $\sigma'$ as follows:
		$$
		M, \sigma'\models \sigma : \phi\mbox{, if $\sigma'\in \Clo(\sigma)$ implies $M, \sigma'\models \phi$}. 
		$$
		
		Note that in Definition~\ref{def:Semantics of Labelled LDL Formulas}, 
		when $\sigma'\notin \Clo(\sigma)$, $\sigma'$ has no effect to the validity of $\sigma : \phi$. 
		So we always have $M, \sigma'\models \sigma: \phi$. 
		
	\end{definition}
	\fi
	
	\ifx
	\begin{definition}[Semantics of $\LDL$ Formulas]
		\label{def:Semantics of Labelled LDL Formulas}
		The semantics of a $\LDL$ formula $\sigma : \phi$ is given as the satisfaction relation by an evaluation $\rho$ as follows:
		$
		\rho\models \sigma : \phi\mbox{, if $\rho, \sigma\models \phi$}. 
		$
		
		\ifx
		Note that in Definition~\ref{def:Semantics of Labelled LDL Formulas}, 
		when $\sigma'\notin \Clo(\sigma)$, $\sigma'$ has no effect to the validity of $\sigma : \phi$. 
		So we always have $M, \sigma'\models \sigma: \phi$. 
		\fi
		
	\end{definition}
	\fi
	
	\ifx
	The satisfiability and validity of $\LDL$ formulas are introduced in a standard way. 
	A $\LDL$ formula $\sigma: \phi$ is \emph{satisfiable}, if 
	there exists an evaluation $\rho$ such that 
	$\rho\models_M \sigma: \phi$. 
	$\sigma: \phi$ is \emph{valid}, if $\rho\models_M\sigma : \phi$ for all evaluations $\rho$, also denoted by $\models_M \sigma: \phi$.  
	\fi
	
	\ifx
	\begin{definition}[Satisfiability and Validity]
		\label{def:Satisfiability and Validity}
		Given the Kripke structure $M = (\Clo(\Conf), \to, \mcl{I})$ of $\LDL$, 
		
		A $\LDL$ formula/labelled $\LDL$ formula $\phi$ is `satisfiable', if 
		there exists some configuration $\sigma\in \Clo(\Conf)$ such that 
		$M, \sigma\models \phi$ (simply abbreviated as $\sigma\models \phi$). 
		
		A $\LDL$ formula/labelled $\LDL$ formula $\phi$ is `valid', if $M, \sigma\models\phi$ for all configurations $\sigma\in \Clo(\Conf)$, denoted by $M\models \phi$ (or simply $\models \phi$). 
	\end{definition}
	\fi

	
	A $\LDL$ formula $\phi$ is called \emph{valid}, if $\rho\models \phi$ for all evaluation $\rho\in \Eval$. 
	
	\begin{example}[$\LDL$ Specifications]
		\label{example:DLp specifications}
		
		A property of program $\textit{WP}\in \Prog_\WP$ (Table~\ref{table:An Example of Program Structures})
		is described as the following formula
		$$
		v \ge 0 \to \sigma_1 : [\textit{WP}] s = ((v+1)v)/2, 
		$$
		where $\sigma_1 \dddef \{n\mapsto v, s\mapsto 0\}$ with $v$ a free variable. 
		This formula means that 
		given an initial value $v\ge 0$, after executing $\textit{WP}$, $s$ equals to $((v+1)v)/2$, which is the sum of $1 + 2 + ... + v$.
		We will prove this formula in Section~\ref{section:Example One: A While Program}. 
		\ifx
		Recall that $\textit{WP}$ is defined as:
		$$
		\begin{aligned}
			\textit{WP}
			\dddef&
			\{
			\textit{while}\
			(n > 0)\
			\textit{do}\
			s := s + n\ ;\
			n := n - 1\
			\textit{end}\
			\}.
		\end{aligned}
		$$
		\fi
	\end{example}
	
	\section{A Cyclic Proof System for $\LDL$}
	\label{section:A Cyclic Proof System for LDL}
	
	In this section, we propose a cyclic proof system for $\LDL$. 
	In Section~\ref{section:A Proof System for LDL}, we firstly propose a proof system $\pfDLp$ to support reasoning based on program behaviours. 
	Then in Section~\ref{section:Construction of A Cyclic Preproof Structure} we construct a cyclic preproof structure for $\pfDLp$, which support deriving infinite proof trees under certain soundness conditions. 
	In Section~\ref{section:Lifting Process From Program Domains}, we propose a lifting process to allow $\LDL$ support reasoning with existing rules from particular dynamic-logic theories. 
	Section~\ref{section:Proof System and Sequent Calculus} introduces the notion of sequent calculus. 
	
	\subsection{Sequent Calculus}
	\label{section:Proof System and Sequent Calculus}
	
	\textbf{Sequents}. 
	In this paper, we adopt sequents~\cite{Gentzen34} as the derivation form of $\LDL$.
	Sequent is convenient for goal-directed reversed derivation procedure which many general theorem provers, such as Coq and Isabelle, are based on. 
	
	A \emph{sequent} $\nu$ is a logical argumentation of the form:
	$$
	\Gamma\Rightarrow \Delta,
	$$
	where $\Gamma$ and $\Delta$ are finite multi-sets of formulas split by an arrow $\Rightarrow$, called the \emph{left side} and the \emph{right side} of the sequent respectively.  
	We use dot $\cdot$ to express $\Gamma$ or $\Delta$ when they are empty sets. 
	A sequent $\Gamma\Rightarrow \Delta$ expresses the 
	formula $\bigwedge_{\phi\in \Gamma}\phi \to \bigvee_{\psi\in \Delta}\psi$, denoted by $\mfr{P}(\Gamma\Rightarrow \Delta)$, meaning that if all formulas in $\Gamma$ hold, then one of formulas in $\Delta$ holds. 
	
	\textbf{Inference Rules}.
	An \emph{inference rule} is of the form
	$$
	\begin{aligned}
		\infer[]
		{
			\nu
		}
		{
			\nu_1
			&
			...
			&
			\nu_n
		}
		,\end{aligned}$$
	where each of $\nu,\nu_i$ ($1\le i\le n$) is also called a \emph{node}. 
	Each of $\nu_1,...,\nu_n$ is called a \emph{premise}, and $\nu$ is called the \emph{conclusion}, of the rule. 
	The semantics of the rule is that 
	if the validity of formulas $\mfr{P}(\nu_1), ..., \mfr{P}(\nu_n)$ implies the validity of formula $\mfr{P}(\nu)$. 
	A formula pair $(\tau, \tau_i)$ ($1\le i\le n$) with formula $\tau$ in node $\nu$ and formula $\tau_i$ in node $\nu_i$ is called a \emph{conclusion-premise} (CP) pair.

	We use a double-lined inference form:
	$$
	\infer=[]
	{\phi}
	{\phi_1 & ... & \phi_n}
	$$
	to represent both rules
	$$
	\begin{aligned}
		\infer[]
		{
			\Gamma\Rightarrow \phi, \Delta
		}
		{
			\Gamma\Rightarrow \phi_1, \Delta
			&
			...
			&
			\Gamma\Rightarrow \phi_n, \Delta
		}
	\end{aligned}
	\ \ 
	\mbox{and}
	\ \ 
	\begin{aligned}
		\infer[]
		{
			\Gamma, \phi\Rightarrow \Delta
		}
		{
			\Gamma, \phi_1\Rightarrow  \Delta
			&
			...
			&
			\Gamma, \phi_n\Rightarrow \Delta
		}
	\end{aligned}
	,
	$$
	provided any sets $\Gamma$ and $\Delta$. 
	We often call the CP pair $(\phi, \phi_i)$ ($1\le i\le n$) a \emph{target pair}, 
	call $\phi, \phi_i$ \emph{target formulas}. 
	$\Gamma$ and $\Delta$ are called the \emph{context} of a sequent.

	\textbf{Proof Systems}. 
	A \emph{proof tree} is a tree-like structure formed by deducing a node as a conclusion backwardly by consecutively applying inference rules. 
	In a proof tree, each node is the conclusion of a rule. 
	The root node of the tree is called the \emph{conclusion} of the proof. 
	Each leaf node of the tree is called \emph{terminal}, if it is the conclusion of an axiom. 
	A proof tree is called \emph{finite} if all of its leaf nodes terminate. 
	
	
	A \emph{proof system} $P$ consists of a set of inference rules. 
	We say that a node $\nu$ can be derived from $P$, denoted by $\vdash_P \nu$, if 
	a finite proof tree with $\nu$ the root node can be formed by applying the rules in $P$. 
	\ifx
	which satisfies that 
	\begin{enumerate}[(1)]
		\item for each  formula $\phi$ of a node in the proof tree, $\phi\in D$;
		\item each node is the conclusion of a rule in $A$.
	\end{enumerate}
	\fi
	
	Notice that the rules with names shown in the tables below in this paper (e.g. rule $([\alpha])$ in Table~\ref{table:General Rules for LDL}) are actually \emph{rule schemata}, that is, inferences rules with meta variables as parameters (e.g. ``$\Gamma, \Delta, \sigma, \alpha, \phi, \sigma', \alpha'$'' in rule $([\alpha])$). But in the discussions below we usually still call them ``inference rules'', and sometimes do not distinguish them from their instances if no ambiguities are caused. 
	
	\ifx
	\textbf{Sequent Calculi}. 
	In this paper, we adopt sequents~\cite{Gentzen34} as the derivation form of $\LDL$.
	Sequent is convenient for goal-directed reversed derivation procedure which many general theorem provers, such as Coq and Isabelle, are based on. 
	
	A \emph{sequent} is a node of the form:
	$$
	\Gamma\Rightarrow \Delta,
	$$
	where $\Gamma$ and $\Delta$ are finite multisets of formulas split by an arrow $\Rightarrow$, called the \emph{left side} and the \emph{right side} of the sequent respectively.  
	We use dot $\cdot$ to express $\Gamma$ or $\Delta$ when they are empty sets. 
	A sequent $\Gamma\Rightarrow \Delta$ expresses the 
	proposition $\mfr{P}(\Gamma\Rightarrow \Delta)$: 
	``for all evaluation $\rho\in \Eval$, if $\rho\models \phi$ for all $\phi\in \Gamma$, then $\rho\models \psi$ for some $\psi\in \Delta$''. 
	We say a sequent $\Gamma\Rightarrow \Delta$ is \emph{sound} if $\mfr{P}(\Gamma\Rightarrow \Delta)$ is true. 

	we often call a CP pair $(\phi, \phi_i)$ ($1\le i\le n$) a \emph{target pair}, 
	call $\phi, \phi_i$ \emph{target formulas}. 
	$\Gamma$ and $\Delta$ are called the \emph{context} of a sequent. 
	
	We sometimes write $\sigma : \Gamma$ to mean the set of labeled formulas $\{\sigma : \phi\ |\ \phi\in \Gamma\}$. 
	Write $\rho\models \Gamma$ to mean that $\rho\models \phi$ for all $\phi\in \Gamma$. 
	
	\fi

	\subsection{A Proof System for $\LDL$}
	\label{section:A Proof System for LDL}
	\ifx
	As shown in Table~\ref{table:General Rules for LDL}, 
	the proof system of $\LDL$ consists of a set of rules for deriving $\LDL$ labeled dynamic formulas based on programs' operational semantics and a \emph{lifting rule} to support reasoning with special rules in specific domains. 
	\fi
	
	\ifx
	As shown in Table~\ref{table:General Rules for LDL}, 
	the proof system of $\LDL$ consists of a set of rules for deriving $\LDL$ formulas based on program executions according to their operational semantics. 
	\fi
	
	The proof system $\pfDLp$ of $\LDL$ 
	relies on a pre-defined proof system $P_{\Oper,\Terminate}$ 
	for deriving program behaviours $\Oper$ and terminations $\Terminate$. 
	$P_{\Oper,\Terminate}$ is \emph{sound} and \emph{complete} w.r.t. $\Oper$ and $\Terminate$ in the sense that 
	for any transition $(\alpha_1, \sigma_1)\trans (\alpha_2, \sigma_2)\in \Prop$ and termination $(\alpha, \sigma)\termi\ \in \Prop$, $(\alpha_1, \sigma_1)\trans (\alpha_2, \sigma_2)\in \Oper$ iff 
	$\vdash_{P_{\Oper, \Terminate}} \cdot \Rightarrow (\alpha_1, \sigma_1)\trans (\alpha_2, \sigma_2)$, and $(\alpha, \sigma)\termi\ \in \Terminate$ iff $\vdash_{P_{\Oper, \Terminate}}\cdot \Rightarrow (\alpha, \sigma)\termi$. 
	
	Note that in $\pfDLp$, we usually assume that a formula does not contain a program transition or termination. 
	
	\ifx
	$\mcl{E}$, called an \emph{extra theory}, is for deriving domain-specific theories according to the definitions of $\Prog, \Conf$ and $\Fmla$. 
	It helps for 
	building heuristic structures for the purpose of constructing cyclic proof structures (as will be seen later as an example in Section~\ref{section:Example One: A While Program}). 
	However, to obtain a well-suited derivation trace, the forms of extra rules have to be restricted. 
	In Section~\ref{section:Construction of A Cyclic Preproof Structure}, we will propose a so-called ``safe condition'' for an extra rule to follow in order to obtain a cyclic preproof structure. 
	We generally denote an extra rule by $(\Extra : \cdot)$. 
	\fi
	
	\ifx
	relies  a pre-defined proof system $P_{\Oper,\Terminate} = (\mcl{S}, \DLF\cup \AFmla)$ for program behaviours $\Oper$ and terminations $\Terminate$, and a pre-define set $\mcl{E}$ of rules for domain-specific theories according to the definitions of $\Prog, \Conf$ and $\AFmla$, called \emph{extra theory}.  
	$P_{\Oper,\Terminate}$ is \emph{sound} and \emph{complete} w.r.t. $\Oper$ and $\Terminate$, in the sense that 
	for any state $(\alpha, \sigma)$ and closed transition $(\alpha_1, \sigma_1)\trans (\alpha_2, \sigma_2)\in \Beha$, $(\alpha, \sigma)\in \Terminate$ iff $\vdash_{P_{\Oper, \Terminate}}\cdot \Rightarrow (\alpha, \sigma)\termi$, and  $(\alpha_1, \sigma_1)\trans (\alpha_2, \sigma_2)\in \Oper$ iff 
	$\vdash_{P_{\Oper, \Terminate}} \cdot \Rightarrow (\alpha_1, \sigma_1)\trans (\alpha_2, \sigma_2)$. 
	\fi

	\ifx
	for program behaviours $\Oper$. 
	$\mcl{S}$ is complete w.r.t. $\Oper$, in the sense that 
	for any transition $(\alpha, \sigma)\trans (\alpha', \sigma')\in \Oper$, 
	$\vdash_{\mcl{S}} \cdot \Rightarrow (\alpha, \sigma)\trans (\alpha', \sigma')$. 
	\fi
	
	\begin{table}[tb]
		\begin{center}
			\noindent\makebox[\textwidth]{%
				\scalebox{0.9}{
					\begin{tabular}{c}
						\toprule
						$
						\infer[^{(x:=e)}]
						{\Gamma\Rightarrow (x:=e, \sigma)\trans(\ter, \sigma^x_{\sigma^*(e)}), \Delta}
						{
						}
						$
						\ \
						$
						\infer[^{(;)}]
						{\Gamma\Rightarrow (\alpha_1 ; \alpha_2, \sigma)\trans(\alpha'_1 ; \alpha_2, \sigma'), \Delta}
						{
							\Gamma\Rightarrow (\alpha_1, \sigma)\trans (\alpha'_1, \sigma'), \Delta
						}
						$
						\\
						$
						\infer[^{(;\ter)}]
						{\Gamma\Rightarrow(\alpha_1; \alpha_2, \sigma)\trans(\alpha_2, \sigma'), \Delta}
						{
							\Gamma\Rightarrow (\alpha_1, \sigma)\trans (\ter, \sigma'), \Delta
						}
						$
						\ \ 
						$
						\infer[^{(\textit{wh1})}]
						{\Gamma\Rightarrow (\textit{while}\ \phi\ \textit{do}\ \alpha\ \textit{end}, \sigma)\trans(\alpha';\ \textit{while}\ \phi\ \textit{do}\ \alpha\ \textit{end}, \sigma'), \Delta}
						{
							\Gamma, \app_\WP(\sigma, \phi)\Rightarrow (\alpha, \sigma)\trans(\alpha', \sigma'), \Delta
							&
							\Gamma\Rightarrow \app_\WP(\sigma, \phi), \Delta
						}
						$
						\\
						$
						\infer[^{(\textit{wh1}\ter)}]
						{\Gamma\Rightarrow (\textit{while}\ \phi\ \textit{do}\ \alpha\ \textit{end}, \sigma)\trans(\textit{while}\ \phi\ \textit{do}\ \alpha\ \textit{end}, \sigma'), \Delta}
						{
							\Gamma, \app_\WP(\sigma, \phi)\Rightarrow (\alpha, \sigma)\trans(\ter, \sigma'), \Delta
							&
							\Gamma\Rightarrow \app_\WP(\sigma, \phi), \Delta
						}
						$
						\ \ 
						$
						\infer[^{(\textit{wh2})}]
						{\Gamma\Rightarrow (\textit{while}\ \phi\ \textit{do}\ \alpha\ \textit{end}, \sigma)\trans(\ter, \sigma), \Delta}
						{
							\Gamma\Rightarrow \neg \app_\WP(\sigma, \phi), \Delta
						}
						$
						\\
						\bottomrule
					\end{tabular}
				}
			}
		\end{center}
		\caption{Partial Inference Rules for Program Behaviours of \emph{While} programs}
		\label{table:An Example of Inference Rules for Program Behaviours}
	\end{table}
	
	\begin{example}
		\label{example:Proof system for program behaviours}
		As an example, Table~\ref{table:An Example of Inference Rules for Program Behaviours} displays a part of inference rules of the proof system $P_{\Oper_\WP, \Terminate_\WP}$ for the program behaviours $\Oper_\WP$, 
		which are obtained directly from the operational semantics of \emph{While} programs (Table~\ref{table:Operational Semantics of While Programs} of Appendix~\ref{section:Formal Definitions of While Programs}). 
		The whole proof system is shown in Table~\ref{table:Inference Rules for Program Behaviours of While programs}, \ref{table:Inference Rules for Program Terminations of While programs} of Appendix~\ref{section:Formal Definitions of While Programs}. 
		The rules in Table~\ref{table:An Example of Inference Rules for Program Behaviours} will be used in the proof procedure of the example in Section~\ref{section:Example One: A While Program}. 
		In Table~\ref{table:An Example of Inference Rules for Program Behaviours}, 
		$\sigma^x_e$ represents a configuration that store 
		variable $x$ as value $e$, while storing other variables as the same value as $\sigma$. 
		Similarly to $\app_\WP$, $\sigma^*(e)$ is defined to return a term obtained by replacing 
		each free variable $x$ of $e$ by the value of $x$ in $\sigma$ (see Appendix~\ref{section:Formal Definitions of While Programs}). 
	\end{example}

	Table~\ref{table:General Rules for LDL} lists the primitive rules of $\pfDLp$.
	Through $\pfDLp$, 
	a $\LDL$ formula can be transformed into proof obligations as non-dynamic formulas, which can then be encoded and verified accordingly through, for example, an SAT/SMT checking procedure. 
	For easy understanding, we give rule $(\la\alpha\ra)$ instead of the version of rule $([\alpha])$ for the left-side derivations, which can be derived using $(\la\alpha\ra)$ and rules $(\neg L)$ and $(\neg R)$.  
	The rules for other operators like $\vee$, $\to$ can be derived accordingly using the rules in Table~\ref{table:General Rules for LDL}.   
	\ifx
	For easy understanding of the proof rules for operator $\la\cdot \ra$, 
	we only give the rules for the right-side derivations of sequents instead of the rules for negations (rules $(\sigma \neg L)$ and $(\sigma \neg R)$). 
	Other rules of $\mcl{DL}_p$ for the left-side derivations of sequents can be derived based on these rules using the rules for negations. 
	\fi
	
	\begin{table}[tb]
		\begin{center}
			\noindent\makebox[\textwidth]{%
				\scalebox{1.0}{
					\begingroup
					\begin{tabular}{c}
						\toprule
						$
						\begin{aligned}
							\infer[^{1\ ([\alpha])}]
							{\Gamma\Rightarrow \sigma : [\alpha]\phi, \Delta}
							{
								\{
								\Gamma\Rightarrow \sigma' : [\alpha']\phi, \Delta
								\}_{(\alpha', \sigma')\in \Phi}
							}
						\end{aligned}
						$,
						\ \ 
						where
						$
						\Phi\dddef \{(\alpha', \sigma')\ |\ \mbox{$\vdash_{P_{\Oper,\Terminate}} (\Gamma\Rightarrow (\alpha, \sigma)\trans(\alpha', \sigma'), \Delta)$}\}
							$
							\\
							\midrule
							$
							\begin{aligned}
								\infer[^{1\ (\la\alpha\ra)}]
								{\Gamma \Rightarrow \sigma: \la\alpha\ra\phi, \Delta}
								{
									\Gamma\Rightarrow \sigma': \la\alpha'\ra\phi, \Delta
								}
							\end{aligned}
							$,
							\ \ 
							if $\vdash_{P_{\Oper, \Terminate}} (\Gamma\Rightarrow (\alpha, \sigma)\trans(\alpha', \sigma'), \Delta)$
							\\
							\midrule
							$
							\begin{aligned}
								\infer[^{2\ (\textit{Ter})}]
								{\Gamma \Rightarrow \Delta}
								{}
							\end{aligned}
							$
							\ 
							\vline
							\ 
							$
							\begin{aligned}
								\infer=[^{3\ (\textit{Int})}]
								{\sigma : \phi}
								{\app(\sigma, \phi)}
							\end{aligned}
							$
							\ \vline\ 
							$
							\begin{aligned}
								\infer=[^{([\ter])}]
								{\sigma : [\ter]\phi}
								{\sigma : \phi}
							\end{aligned}
							$
							\ \vline\ 
							$
							\begin{aligned}
								\infer[^{4\ (\textit{Sub})}]
								{\Sub(\Gamma)\Rightarrow \Sub(\Delta)}
								{\Gamma\Rightarrow \Delta}
							\end{aligned}
							$
							\ \vline\ 
							$
							\begin{aligned}
								\infer=[^{(\sigma\neg)}]
								{\sigma : (\neg \phi)}
								{\neg(\sigma : \phi)}
							\end{aligned}
							$
							\ifx
							\ \vline\ 
							$
							\begin{aligned}
								\infer=[^{(\sigma\neg 2)}]
								{\neg(\sigma : \phi)}
								{\sigma : (\neg \phi)}
							\end{aligned}
							$
							\fi
							\ \vline\ 
							$
							\begin{aligned}
								\infer=[^{(\sigma\wedge)}]
								{ \sigma : (\phi\wedge\psi)}
								{(\sigma : \phi)\wedge (\sigma : \psi)}
							\end{aligned}
							$
							\ifx
							\ \vline\ 
							$
							\begin{aligned}
								\infer=[^{(\sigma\wedge 2)}]
								{(\sigma : \phi)\wedge (\sigma : \psi)}
								{\sigma : (\phi\wedge\psi)}
							\end{aligned}
							$
							\fi
							\\
							\midrule
							$
							\begin{aligned}
								\infer[^{(\textit{ax})}]
								{\Gamma, \phi\Rightarrow \phi, \Delta}
								{}
							\end{aligned}
							$
							\ \vline\ 
							$
							\begin{aligned}
								\infer[^{(\textit{Cut})}]
								{\Gamma\Rightarrow \Delta}
								{\Gamma\Rightarrow \phi, \Delta & 
									\Gamma, \phi\Rightarrow \Delta}
							\end{aligned}
							$
							\ \vline\ 
							$
							\begin{aligned}
								\infer[^{(\textit{Wk L})}]
								{\Gamma, \phi\Rightarrow  \Delta}
								{\Gamma \Rightarrow \Delta}
							\end{aligned}
							$
							\ \vline\ 
							$
							\begin{aligned}
								\infer[^{(\textit{Wk R})}]
								{\Gamma\Rightarrow \phi, \Delta}
								{\Gamma\Rightarrow \Delta}
							\end{aligned}
							$
							\ \vline\ 
							$
							\begin{aligned}
								\infer=[^{(\textit{Con})}]
								{\phi}
								{\phi, \phi}
							\end{aligned}
							$
							\\
							\midrule
							$
							\begin{aligned}
								\infer[^{(\neg L)}]
								{\Gamma, \neg \phi\Rightarrow \Delta}
								{\Gamma\Rightarrow \phi, \Delta}
							\end{aligned}
							$
							\ \vline\ 
							$
							\begin{aligned}
								\infer[^{(\neg R)}]
								{\Gamma\Rightarrow \neg \phi, \Delta}
								{\Gamma, \phi \Rightarrow \Delta}
							\end{aligned}
							$
							\ \vline\ 
							$
							\begin{aligned}
								\infer[^{(\wedge L)}]
								{\Gamma, \phi\wedge\psi\Rightarrow  \Delta}
								{\Gamma, \phi, \psi\Rightarrow \Delta}
							\end{aligned}
							$
							\ \vline\ 
							$
							\begin{aligned}
								\infer[^{(\wedge R)}]
								{\Gamma\Rightarrow \phi\wedge \psi, \Delta}
								{\Gamma\Rightarrow \phi, \Delta
									&
									\Gamma\Rightarrow \psi, \Delta}
							\end{aligned}
							$
							\\
							\midrule
							\multicolumn{1}{l}{
								\begin{tabular}{l}
									$^{1}$ $\alpha\not\equiv \ter$.\\
									$^{2}$ for each $\phi$ in $\Gamma$ or $\Delta$, $\phi\in \Fmla$; $\mfr{P}(\Gamma\Rightarrow \Delta)$ is valid.\\ 
									$^{3}$ $\phi\in \Fmla$ is a non-dynamic formula.\\
									$^{4}$ $\Sub$ is a substitution by Definition~\ref{def:Abstract Substitution}. 
								\end{tabular}
							}
							\\
							\bottomrule
						\end{tabular}
						\endgroup
					}
				}
			\end{center}
			\caption{Primitive Rules of Proof System $\pfDLp$}
			\label{table:General Rules for LDL}
		\end{table}

		\ifx
		We propose a set of general rules for $\LDL$ as shown in Table~\ref{table:General Rules for LDL}.
		These rules only deals with $\LDL$ formulas, they perform either transformations between $\LDL$ formulas or rewrites of dynamic $\LDL$ formulas based on the operational semantics of programs.
		\fi

		\ifx
		$\sigma : A$ means the multi-set $\{\sigma : \phi\ |\ \phi \in A\}$, where 
		$\phi$ is a $\GDL$ formula in $A$. 
		Rule $(\sigma)$ demonstrates the compatibility of $\LDL$ to other existed dynamic-logic proof systems. The soundness of rule $(\sigma)$ is guaranteed by the following proposition, whose proof is directly by the irrelevance of $\sigma$ (Section~\ref{section:Dynamic Logic LDL}). 
		\fi
		
		Illustration of each rule is as follows. 
		
		Rules $([\alpha])$ and $(\la\alpha\ra)$ deal with dynamic parts of $\LDL$ formulas based on program transitions. 
		Both rules rely on the sub-proof-procedures of system $P_{\Oper, \Terminate}$. 
		In rule $([\alpha])$, 
		$\{...\}_{(\alpha', \sigma')\in \Phi}$ represents the collection of premises for all program states $(\alpha', \sigma')\in \Phi$. 
		By the finiteness of branches of program hehaviours (Definition~\ref{def:Program Properties}-\ref{item:Branching Finiteness}), set $\Phi$ must be finite. 
		So rule $([\alpha])$ only has a finite number of premises. 
		Note that when $\Phi$ is empty, the conclusion terminates.  
		Intuitively, rule $([\alpha])$ says that to prove that $[\alpha]\phi$ holds under configuration $\sigma$, we prove that 
		for each transition from $(\alpha, \sigma)$ to $(\alpha', \sigma')$, 
		$[\alpha']\phi$ holds under configuration $\sigma'$. 
		Compared to rule $([\alpha])$, rule $(\la\alpha\ra)$ has only one premise for $(\alpha', \sigma')$. 
		Intuitively, rule $(\la\alpha\ra)$ says that to prove $\la\alpha\ra\phi$ holds under configuration $\sigma$, we prove that
		after some transition from $(\alpha, \sigma)$ to $(\alpha', \sigma')$, $\la\alpha'\ra\phi$ holds under configuration $\sigma'$. 
		
		\ifx
		we prove that there exists a program state $(\alpha', \sigma')$ such that 
		$(\alpha, \sigma)\trans (\alpha', \sigma')$, and 
		$\la\alpha'\ra\phi$ holds under configuration $\sigma'$.  
		At the same time, we need to guarantee that the termination factor $t'$ cannot grow larger than $t$ w.r.t. relation $\prec$. 
		Note that in both rules, we assume that there always exists a transition $(\alpha, \sigma)\trans (\alpha', \sigma')$ from program state $(\alpha, \sigma)$ since $\alpha$ is neither $\ter$ nor $\abort$. This is guaranteed by the well-definedness of $\alpha$'s operational semantics, as stated in~\ref{item:Well-definedness} of Definition~\ref{def:Program Properties}. 
		\fi
		
		In rule $(\textit{Ter})$, each formula in $\Gamma$ and $\Delta$ is a formula in $\Fmla$.  
		Rule $(\textit{Ter})$ terminates if 
		formula $\mfr{P}(\Gamma\Rightarrow \Delta)$ is valid. 
		The introduction rule $(\textit{Int})$ for labels is applied when $\phi$ is a formula without any dynamic parts. 
		Through this rule we eliminate a configuration $\sigma$ and obtain a formula 
		$\app(\sigma, \phi)$ in $\Fmla$. 
		Rule $([\ter])$ deals with the situation when the program is a termination $\ter$. 
		Its soundness is straightforward by the well-definedness of program behaviours $\Oper$. 
		\ifx
		Rules $([\ter])$, $(\la\ter\ra)$ and $(\sigma[\abort])$ deal with the situations when program $\alpha$ is either a termination $\ter$ or an abortion $\abort$.
		Note that there is no rules for formula $\sigma: \la\abort\ra \phi$. 
		Intuitively, 
		it is not hard to see that $\la \abort\ra\phi$ is false under any configuration because $\abort$ never terminates. 
		\fi
		
		Rule $(\Sub)$ describes a specialization process for $\LDL$ formulas. 
		For a set $A$ of formulas, $\Sub(A)\dddef \{\Sub(\phi)\ |\ \phi\in A\}$, with $\Sub$ a function defined as follows in Definition~\ref{def:Abstract Substitution}. 
		Intuitively, if formula $\mfr{P}(\Gamma\Rightarrow \Delta)$ is valid, then its one of special cases $\mfr{P}(\Sub(\Gamma)\Rightarrow \Sub(\Delta))$ through an abstract version of substitutions $\Sub$ on formulas is valid. 
		Rule $(\Sub)$ plays an important role in constructing a bud in a cyclic preproof structure (Section~\ref{section:Construction of A Cyclic Preproof Structure}). 
		See Section~\ref{section:Example One: A While Program} as an example.  
		
		\begin{definition}[Substitution]
			\label{def:Abstract Substitution}
			A structural function $\eta : \TA\to \TA$ is called a `substitution',  if 
			for any evaluation $\rho\in \Eval$, there exists a $\rho'\in \Eval$ such that 
			$\rho(\eta(\phi))\equiv \rho'(\phi)$ for each formula $\phi\in \DLF$. 
		\end{definition}
		
		Rules $(\sigma \neg)$ and $(\sigma\wedge)$ are for transforming labeled formulas. Their soundness are direct according to the semantics of labeled formulas (Definition~\ref{def:Semantics of Labelled LDL Formulas}). 
		
		From rule $(\textit{ax})$ to $(\sigma \wedge R)$ are the rules inherited from traditional first-order logic. 
		Note that rule $(\textit{Cut})$ and rules $(\textit{Wk} L)$ and $(\textit{Wk} R)$ have no target pairs. 
		The meanings of these rules are classical and we omit their discussions here. 
		
		\ifx
		Rules from $(\textit{ax})$ to $(\vee)$ are direct from the traditional first-order logic. 
		Note that both rules $(\neg L)$ and $(\neg R)$ are required to derive the rules for the left-side target formulas of sequents. 
		Rules $(\textit{Sub} L)$ and $(\textit{Sub} R)$ are the versions of substitutions without quantifiers.  
		The meanings of these rules are classical and we omit them here. 
		\fi
		
		\ifx
		Rules $(\textit{Ter})$ and $(\textit{ax})$ declare a termination of a proof branch. 
		In rule $(\textit{Ter})$, 
		when each formula in $\Gamma$ and $\Delta$ is a formula in $\Fmla$.  
		We can conclude the proof branch if 
		the proof obligation $\mfr{P}(\Gamma\Rightarrow \Delta)$ is true. 
		Rule $(\textit{ax})$ is the `labeled' version of the corresponding rule in traditional propositional logic. 
		Rule $(\textit{Cut})$, the labeled version of the corresponding traditional ``cut rule'', provides a mechanism to derive by providing additional lemmas. 
		\fi

		\ifx
		The introduction rule $(\textit{Int})$ for labels is applied when $\phi$ is a formula in $\Fmla$ without any dynamic parts. 
		Through this rule we eliminate a configuration $\sigma$ and obtain a formula 
		$\app(\sigma, \phi)$ in $\Fmla$. 
		\ifx
		Rule $(\sigma\textit{Lif})$ lifts a rule in specific domains to a rule labelled with a free configuration $\sigma$ w.r.t. set $\Conf_\free(\Sigma)$ (Definition~\ref{def:free configurations}), where $\Sigma \dddef \Gamma\cup \Delta\cup \{\phi, \psi\}$.  
		We define $\sigma : A\dddef \{\sigma : \phi\ |\ \phi\in A\}$ for a multi-set $A$ of formulas. 
		Rule $(\sigma\textit{Lif})$ is useful, when a structure-based rule exists, we can simply lift it and make derivations in its labelled forms. 
		\fi
		Rule $(\sigma \cfeq)$ replaces a configuration $\sigma$ with one that is equivalent as $\sigma$ under evaluations. 
		It is critical for the proof system because by looking for a suitable $\sigma'$, 
		one can successfully find a back-link, thus closes one branch in a cyclic proof structure (these concepts will be introduced in Section~\ref{section:Construction of A Cyclic Preproof Structure}). 
		In Section~\ref{section:Case Studies}, our examples will illustrate how to construct a cyclic proof structure making use of this rule. 
		\fi

		\ifx
		Rules $(\neg\sigma L)$, $(\neg\sigma R)$, $(\wedge\sigma)$ and $(\vee\sigma)$ deal with logical connectives $\neg$, $\wedge$ and $\vee$ in a labelled $\LDL$ formula.
		They correspond to the rules in traditional propositional logic (without labels) for $\neg$, $\wedge$ and $\vee$ respectively.  
		\fi
		
		\ifx
		The lifting rule $(\sigma)$ lifts a rule $(\textit{sp})$ in specific domains to a rule labelled with an irrelevant configuration $\sigma$. 
		$\sigma : A$ means the multi-set $\{\sigma : \phi\ |\ \phi \in A\}$, where 
		$\phi$ is a $\GDL$ formula in $A$. 
		Rule $(\sigma)$ demonstrates the compatibility of $\LDL$ to other existed dynamic-logic proof systems. The soundness of rule $(\sigma)$ is guaranteed by the following proposition, whose proof is directly by the irrelevance of $\sigma$ (Section~\ref{section:Dynamic Logic LDL}). 
		\fi
		
		\ifx
		\begin{proposition}
			If a configuration $\sigma$ is irrelevant to a $\GDL$ formula $\phi$, 
			then $\models \sigma : \phi$ iff $\models \phi$.  
		\end{proposition}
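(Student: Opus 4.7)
The plan is to proceed by structural induction on the $\GDL$ formula $\phi$, after first pinning down an adequate notion of ``irrelevance''. Since the paper has not yet formally defined irrelevance at this point, I would take it to mean: for every evaluation $\rho\in \Eval$ and every non-dynamic subformula $F$ of $\phi$ that sits under the configuration label, $\boolsem(\rho(\app(\sigma, F))) = \boolsem(\rho(F))$, and moreover $\sigma$ has no effect on the program transitions that arise from any dynamic part $[\alpha]\psi$ of $\phi$, i.e., the set of terminal paths starting from $\rho(\alpha, \sigma)$ coincides (up to the relevant equivalence) with those from $\rho(\alpha, \sigma_0)$ for some canonical $\sigma_0$, with the reached configurations again irrelevant to the corresponding continuations $\psi$. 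With this reading, ``irrelevance of $\sigma$ to $\phi$'' is exactly the compositional condition that makes the label transparent.

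First I would handle the base case $\phi\equiv F\in \Fmla$. By Definition~\ref{def:Semantics of Labelled LDL Formulas} clause~2, $\rho\models \sigma : F$ iff $\rho\models_\boolsem \app(\sigma, F)$, which by irrelevance is iff $\rho\models_\boolsem F$, i.e., $\rho\models \phi$. Quantifying over $\rho$ yields $\models \sigma : \phi$ iff $\models \phi$. Next, for the Boolean cases $\phi \equiv \neg \psi$ and $\phi\equiv \psi_1\wedge \psi_2$, clauses~3 and~4 of Definition~\ref{def:Semantics of Labelled LDL Formulas} state that the label distributes across negation and conjunction: $\boolsem(\rho(\sigma : \neg\psi)) = \boolsem(\rho(\neg(\sigma : \psi)))$ and likewise for $\wedge$. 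Applying the induction hypothesis to $\sigma : \psi$ (resp.\ $\sigma : \psi_1$ and $\sigma : \psi_2$), which is justified because irrelevance of $\sigma$ with respect to $\phi$ entails irrelevance with respect to each immediate subformula, collapses the labeled and unlabeled versions to the same validity claim.

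The main obstacle is the modal case $\phi\equiv [\alpha]\psi$. Here the label is not merely applied to the formula but becomes the starting configuration for the program behaviour $\Oper$, through clause~5 of Definition~\ref{def:Semantics of Labelled LDL Formulas}: $\boolsem(\rho(\sigma : [\alpha]\psi)) = 1$ iff $\boolsem(\rho(\sigma_n : \psi)) = 1$ for every terminal path $(\alpha_1,\sigma_1)\dots(\ter,\sigma_n)$ with $(\alpha_1,\sigma_1)\equiv \rho(\alpha,\sigma)$. To reduce this to $\models [\alpha]\psi$, one must exploit irrelevance twice: once to argue that the set of terminal paths from $(\alpha,\sigma)$ is in bijective correspondence with those from $(\alpha,\sigma_0)$, and again to conclude, for each corresponding terminal configuration $\sigma_n$, that $\sigma_n$ remains irrelevant to $\psi$ so that the induction hypothesis on $\sigma_n : \psi$ applies. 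The dual case for $\la\alpha\ra\psi$ follows symmetrically (or from $\neg [\alpha]\neg \psi$), and the converse direction of the biconditional is obtained by the same argument, since the equivalences used are all iff's.

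The delicate point on which the whole argument hinges, and the likely reason this proposition is presently commented out in the source, is making ``irrelevance'' precise enough that the program-behaviour clause behaves well: a purely syntactic condition such as ``$\sigma$ binds no free variable of $\phi$'' is not obviously strong enough, because the interpretation $\app$ acts on intermediate configurations reached along execution, not only on $\sigma$ itself. The cleanest route is to bake the needed invariance directly into the definition of irrelevance and then verify, for each program domain of interest (e.g.\ \emph{While} and Esterel in Section~\ref{section:Examples of Term Structures}), that natural sufficient conditions (disjointness of $\sigma$'s domain from $\FV(\phi)$ and from the variables touched by $\alpha$) imply it.
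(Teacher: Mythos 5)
You are trying to reconstruct a proof that the paper does not actually contain: the proposition sits inside an \verb|\ifx|\dots\verb|\fi| block, ``irrelevant'' is never defined in the live text, and the only justification the source ever offers (itself commented out) is that the claim holds ``directly by the irrelevance of $\sigma$.'' What survives into the final paper is a different formalization of the same idea: Definition~\ref{def:Same Effects} introduces $(\rho,\sigma)\se_A\rho'$ as ``$\rho\models\sigma:\phi$ iff $\rho'\models\phi$ for all $\phi\in A$,'' Definition~\ref{def:free configurations} requires such a $\rho'$ to exist in both directions, and Proposition~\ref{prop:lifting process 2} then derives the validity transfer by a two-line quantifier swap (given $\rho$, pick the matching $\rho'$, and conversely). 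That route is deliberately \emph{not} a structural induction on $\phi$: the ``same effect'' hypothesis is stated on whole formulas, dynamic ones included, so all the compositional work you are doing in your induction is absorbed into the hypothesis. Your approach buys a weaker, more checkable premise (invariance only on atomic pieces and on transitions) at the cost of having to propagate it through every connective; the paper's approach buys a one-line proof at the cost of a hypothesis that must be re-verified per domain.

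The concrete gap is in your modal case. You reduce $\models\sigma:[\alpha]\psi$ to $\models[\alpha]\psi$, but the unlabeled formula $[\alpha]\psi$ has no semantics under Definition~\ref{def:Semantics of Labelled LDL Formulas}: the top-level grammar of Definition~\ref{def:LDL Formulas} only admits dynamic formulas in the labeled form $\sigma:\psi$, clause~5 interprets $[\alpha]$ only when a label supplies the initial configuration for $\Oper$, and clause~8 sends everything else to $0$. So the right-hand side of your biconditional is either undefined or trivially false, and the induction cannot close. The paper dodges this in Section~\ref{section:Lifting Process From Program Domains} by the standing assumption that ``$\rho\models\phi$ for an unlabeled formula $\phi$ is well defined in some particular dynamic-logic theory''; you would need to import that assumption explicitly (or restrict the proposition to label-free $\phi$, where your base and Boolean cases already suffice). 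Your closing observation --- that a syntactic irrelevance condition is too weak because $\app$ acts on the intermediate configurations reached along execution, and that the invariance must be baked into the definition --- is exactly right, and is in effect what Definitions~\ref{def:Same Effects} and~\ref{def:free configurations} do; but as written your modal case still rests on a semantic object the paper never supplies.
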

		\fi

		The soundness of the rules in $\pfDLp$ is stated as follows. 
		
		\begin{theorem}
			\label{theo:soundness of rules for LDL}
			Provided that proof system $P_{\Oper, \Terminate}$ is sound, all rules in $\pfDLp$ (Table~\ref{table:General Rules for LDL}) are sound. 
		\end{theorem}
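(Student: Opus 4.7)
The plan is to show soundness by a straightforward case analysis on the rules of Table~\ref{table:General Rules for LDL}, verifying that whenever every premise of a rule instance is valid the conclusion is valid as well. The rules fall into three groups with quite different flavours: (i) the classical first-order rules from $(\textit{ax})$ through $(\wedge R)$, including $(\textit{Cut})$, $(\textit{Wk}\ L/R)$, $(\textit{Con})$, $(\neg L/R)$ and $(\wedge L/R)$, whose arguments carry over verbatim from standard sequent calculus; (ii) the bookkeeping rules $(\textit{Int})$, $([\ter])$, $(\sigma\neg)$, $(\sigma\wedge)$, which mirror clauses of Definition~\ref{def:Semantics of Labelled LDL Formulas} almost one-to-one; and (iii) the genuinely new rules $([\alpha])$, $(\la\alpha\ra)$ and $(\textit{Sub})$, which lean on program behaviours.

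For group (ii), rule $([\ter])$ is immediate, since the well-definedness of $\Oper$ forbids any transition out of $(\ter, \sigma)$, so the only terminal path through $\rho(\ter, \sigma)$ is the length-one path itself and $\sigma : [\ter]\phi$ collapses to $\sigma : \phi$. Rule $(\textit{Int})$ matches semantic clause~2, and $(\sigma\neg)$, $(\sigma\wedge)$ match clauses~3 and~4. For rule $(\textit{Sub})$ I would take any $\rho$ with $\rho \models \Sub(\Gamma)$, invoke Definition~\ref{def:Abstract Substitution} to produce a companion $\rho'$ satisfying $\rho(\Sub(\phi)) \equiv \rho'(\phi)$ for every formula $\phi$ appearing in the sequent, apply the validity of the premise to $\rho'$, and translate the witness back through the same identity.

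The main obstacle is rule $([\alpha])$. Fix an evaluation $\rho$ that satisfies $\Gamma$ and falsifies every element of $\Delta$; I must show $\rho \models \sigma : [\alpha]\phi$, that is, for every terminal path from $\rho(\alpha, \sigma)$ the ending configuration satisfies $\phi$. Given such a path whose first concrete transition is $\rho(\alpha, \sigma) \trans (\beta, \tau)$, the delicate point is to exhibit a symbolic pair $(\alpha', \sigma') \in \Phi$ with $\rho(\alpha', \sigma') \equiv (\beta, \tau)$. This is where completeness of $P_{\Oper, \Terminate}$ (already a standing assumption in Section~\ref{section:A Proof System for LDL}) together with the structural form of evaluations from Section~\ref{section:Programs and Configurations} is used to lift the closed transition to a derivable sequent matching the side condition defining $\Phi$. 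In the other direction, soundness of $P_{\Oper, \Terminate}$ ensures that every $(\alpha', \sigma') \in \Phi$ does yield a genuine concrete successor under $\rho$, so the induction along the tail of the terminal path closes using the validity of the corresponding premise. Branching finiteness from Definition~\ref{def:Program Properties} guarantees that $\Phi$ is finite and the rule instance is well-formed. Rule $(\la\alpha\ra)$ follows by the dual and strictly easier argument: only one witness path is required, and the side condition combined with soundness of $P_{\Oper, \Terminate}$ directly supplies the transition to prepend to the existential path.
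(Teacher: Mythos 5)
Your proposal is correct and follows essentially the same route as the paper's own proof: reduce each rule to the validity of an implication between the premises and the conclusion, dispatch the classical and bookkeeping rules directly from Definition~\ref{def:Semantics of Labelled LDL Formulas}, handle $(\Sub)$ via the companion evaluation $\rho'$ guaranteed by Definition~\ref{def:Abstract Substitution}, and for $([\alpha])$ use completeness of $P_{\Oper,\Terminate}$ to ensure every concrete first transition of $\rho(\alpha,\sigma)$ is covered by some $(\alpha',\sigma')\in\Phi$ while soundness justifies the premises (with $(\la\alpha\ra)$ as the easier existential dual). Your write-up is if anything slightly more explicit than the paper about why each terminal path from $\rho(\alpha,\sigma)$ must factor through an element of $\Phi$, but the argument is the same.
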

		
		Following the above explanations, Theorem~\ref{theo:soundness of rules for LDL} can be proved according to the semantics of $\LDL$ in Section~\ref{section:Syntax and Semantics of LDL}. 
		Appendix~\ref{section:Other Propositions and Proofs} gives the proofs for rules $([\alpha])$ and $(\la\alpha\ra)$ as examples. 
		
		\ifx
		Despite reminds for some special rules, the soundness of all the rules can be easily obtained according to 
		the semantics of $\LDL$ in Section~\ref{section:Semantics of LDL}. We will omit any details about it in this paper. 
		\fi

		\ifx
		Rules $(\textit{Int})$ and $(\sigma\textit{Eli})$ are for introducing and eliminating labels for $\LDL$ formulas. 
		They bridge the general labelled derivations proposed in Table~\ref{table:General Rules for LDL} and unlabelled derivations which are special for particular programs. 
		Rule $(\textit{Int})$ is an introduction rule for labels. 
		It is used when $\phi$ is a formula in $\Fmla$ without any dynamic parts, 
		to prove that $\phi$ holds under configuration $\sigma$, we only need to prove 
		that the explanation of $\phi$ under $\sigma$ is a valid formula. 
		Rule $(\sigma\textit{Eli})$ is an elimination rule for labels, where $X$ is a fresh configuration variable.  
		This rule is required when we wish to apply labelled rules for a $\LDL$ formula. 
		\fi

		\ifx
		Rules $(\sigma R)$ and $(\sigma L)$ are for introducing and eliminating a free-variable label $X$ for a formula $\phi$ during a deductive procedure.
		Their soundness is a direct result from Proposition~\ref{prop:basic idea}.
		Intuitively, if formula $\phi$ holds under any configuration $X$, then it also holds for a specific configuration.
		Rules $(\sigma R)$ introduces a free-variable label for a formula $\phi$ on the right side of a sequent.
		This is useful when deriving a non-labelled formula through the operational semantics of programs in $\LDL$ (will be seen in the example ??? of Section~\ref{???}).
		Rule $(\sigma L)$ eliminates a free-variable label for a formula $\phi$ on the left side of a sequent.
		\fi
		
		\ifx
		Rule $(\textit{Con})$ declares the congruence of labels w.r.t. $\LDL$ formulas. 
		It means that if we have a derivation between the $\LDL$ formulas $\phi$ and $\psi$, 
		we also have the same derivation between their labelled versions: $\sigma: \phi$ and $\sigma : \psi$. 
		This rule is useful when we have other derivations about unlabelled $\LDL$ formulas, we can also apply them to labelled $\LDL$ formulas. 
		As indicated in Section~\ref{section:Summery}, 
		rule $(\textit{Con})$ allows $\LDL$ to support the traditional syntactic-based derivations between labelled $\LDL$ formulas. 
		\fi
		
		\ifx
		Rule $(\sigma \sigma)$ is a direct result from Proposition~\ref{???}.
		Intuitively, if we can derive $\psi$ from $\phi$, then we can also derive $\psi:\sigma$ from $\phi : \sigma$ for any label $\sigma$.
		This conditional inference rule is quite useful: One can conduct the same derivation for formula $\phi$ labelled by $\sigma$ according to an already-existed rule for the unlabelled $\phi$.
		As already indicated in Section~\ref{section:Summery}, it also shows that labelled $\LDL$ formulas are compatible with traditional syntactic-based inferences.
		\fi
		
		\ifx
		Four rules $(-\sigma R)$, $(- \sigma L)$, $(+\sigma R)$ and $(+\sigma L)$ are for introducing and eliminating a label $\sigma$ for a formula $\phi$ during a deductive proceduce.
		Their soundness is a direct result from Proposition~\ref{???}.
		Rules $(- \sigma R)$ and $(-\sigma L)$ introduce a label for a formula $\phi$ on either side of a sequent.
		They are useful when one wants to conduct derivations using the rules for labelled formulas.
		This is usually the first step when deriving a non-labelled formula through the operational semantics of programs in $\LDL$ (will be seen in the example ??? of Section~\ref{???}).
		In rule $(-\sigma L)$, when $\phi$ is on the right side of a sequent, the label can be of any term $\sigma$, not just a free variable $X$.
		This is because
		\fi

		
		\ifx
		This is usually the first step when we want to derive a non-labelled formula through the operational semantics of programs in $\LDL$.
		Note that in $\Conf$ we require that there must exist a set of configuration variables, and any transitions from $(\alpha, X)$ are possible
		because the operational semantics of any programs in $\Prop$ is assumed to be well defined.
		\fi
		
		\ifx
		\subsection{Infiniteness of Proof Structures in $\LDL$ and Well-founded Relations}
		\label{section: Well-founded Relations}
		
		Based on the semantics of $\LDL$ given in Section~\ref{section:Semantics of LDL}, 
		it is not hard to prove that each proof rule of Table~\ref{table:General Rules for LDL} is sound. 
		When each branch of a proof tree terminates with a valid sequent in $\Fmla$, 
		the whole proof is sound.

		\ifx
		The proof rules of Table~\ref{table:General Rules for LDL} provides a general framework for reasoning about programs through operational semantics. 
		Each branch of a proof tree terminates when it ends with a valid sequent $\Gamma\Rightarrow \Delta$ in a discussed domain $\Assn$. 
		A sound proof is obtained when all branches of its proof tree terminate. 
		\fi
		\ifx
		However, during the proof of a set of specific programs and formulas, additional rules that are special for certain program and formula structures are needed.
		For example, as mentioned above, if a first-order quantifier $\forall$ appears in formulas in $\Fmla$, then we need another rules for labelled formulas with quantifiers, such as:
		$$
		???.
		$$
		The same goes for programs,
		as will be seen later in case studies in Section~\ref{???}, when we need to perform term rewrites or deductions by syntactic structures.
		\fi
		
		However, a branch of a proof tree in $\LDL$ system does not always terminate, since 
		the process of symbolically executing a program via rule $([\alpha])$ or/and rule $(\la\alpha\ra)$ might not stop. 
		This is well known when a program has a loop structure that may runs infinitely, 
		for example, a while program $\textit{while}\ \textit{true}\ \textit{do}\ x := x + 1\ \textit{end}$. 
		
		To avoid potentially infinite derivations in $\LDL$, in this paper, we adopt the so-called \emph{cyclic proof} approach (cf.~\cite{Brotherston07}), a technique to insure a valid conclusion even when its proof tree contains infinite derivations. 
		A \emph{preproof} is 
		a proof tree with a finite structure (which means having a finite number of nodes) but containing infinite derivation paths. 
		A preproof structure can lead to a valid conclusion if a certain soundness condition is met. 
		This condition guarantees that any counterexamples from an invalid conclusion would cause an infinite descent sequence w.r.t. a well-founded relation (Definition~\ref{def:Well-foundedness}) that is related to the semantics of $\LDL$, which, however, is a contradiction to the definition of the well-foundedness itself. 
		We will expand it in detail in Section~\ref{section:Construction of A Cyclic Preproof Structure}. 
		
		\ifx
		The main idea behind this approach is that, 
		if the preproof structure does not lead to a sound conclusion, then it can be shown that each derivation path will cause an infinite descent sequence w.r.t. a well-founded relation (Definition~\ref{def:Well-foundedness}) that is related to the semantics of $\LDL$, which, however, is a contradiction to the definition of the well-foundedness itself. 
		\fi
		
		\ifx
		However, the derivations of a $\LDL$ formula is potentially infinite, as the process of symbolically executing a program $\alpha$ via applying rule $([\alpha])$ or/and rule $(\la\alpha\ra)$ might not terminate.
		This is well known when a program has a loop structure that may runs infinitely, e.g., a while program $\textit{while}\ \textit{true}\ \textit{do}\ x := x + 1\ \textit{end}$. 
		To avoid potentially infinite derivations in $\LDL$, we adopt the so-called \emph{cyclic proof} approach~\cite{???}, a technique to insure a valid proof even when its proof tree contains infinite derivations. 
		A proof tree with a finite structure (which means having a finite number of nodes) but containing infinite derivation paths is called a \emph{preproof}. 
		The main idea behind this approach is that, 
		if the preproof structure is not a sound proof, then it can be shown that each derivation path will lead to an infinite descent sequence w.r.t. a well-founded relation that is related to the semantics of $\LDL$, which, however, is a controdiction to the definition of well-foundedness relations itself. 
		\fi
		
		\ifx
		when certain conditions are met for a preproof, we can claim that the possibly-infinite derivations induced by the preproof structure lead to a valid conclusion.
		Otherwise, it would cause an infinite descent sequence in a well-founded set related to the semantics of $\LDL$, which is a contradicton (as we will prove).
		\fi
		
		To apply the cyclic proof approach, 
		we need 3 critical well-founded relations that are related to the semantics of $\LDL$. 
		One of them, relation $\prec$, has already been introduced in Section~\ref{section:Definitions of LDL}.  
		With these relations, 
		in Section~\ref{section:Construction of A Cyclic Preproof Structure}, we will define a preproof structure, and give the soundness condition in which a preproof is a sound proof. 
		
		\ifx
		In the following, we first introduce the so-called ``well-founded'' sets/relations, a key concept related to a circular proof.
		Then we define a preproof structure, and give the conditions in which a preproof is a proof. That is, it can lead to a valid conclusion.
		\fi
		

		\ifx
		Following~\cite{Aczel77,Brotherston07} (Proposition 1.2.1), the relation $\prec$ between inductive terms introduced in Definition~\ref{def:Definition of Relation Between Inductive Terms} is a well-founded relation in $\Pred$. 
		\fi
		
		In the following, we introduce another 2 well-founded relations. 
		The first relation $\psuf$ is the suffix relation between two paths;
		while the second relation $\pmult$ is between two finite sets of paths.
		
		\ifx
		\paragraph{Relation $\psuf$}
		As mentioned in Section~\ref{section:Construction of LDL}, in $\Conf$ we assume a set of inductively-defined predicates $\Pred$. 
		In $\Pred$, we introduce an ordering $\psuf$ as follows.
		\fi
		
		\ifx
		\begin{definition}
			In $\Pred$, a relation between any two terms $P_1$ and $P_2$, denoted by 
			$P_1\psuf P_2$, is defined such that either 
			(1) there exists a rule $(\mfr{P}, P_2)\in \Phi$ with $P_1\in \mfr{P}$, 
			or (2) there is another predicate $P$ such that $P_1\psuf P\psuf P_2$. 
		\end{definition}
		
		Following~\cite{An introduction to Inductive Definitions} (Proposition 1.2.1), it is not hard to see that the relation $\psuf$ is well-founded in $\Pred$.
		\fi
		
		\ifx
		\begin{proposition}
			For any predicate $P\in \Pred$, there exists only a finite number of predicates $P_1,...,P_k$ such that $P\psufr P_1\psufr ... \psufr P_k$, with $P_k$ a minimum element (w.r.t. $\psuf$) of $\Pred$.
		\end{proposition}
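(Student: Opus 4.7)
The plan is to regard $\psuf$ as the relation induced by the rule set $\Phi$ underlying the inductive definition of $\Pred$: namely $P_1\psuf P_2$ whenever $P_1$ appears among the premises of some rule $(\mfr{P},P_2)\in\Phi$, with $\psufr$ as its reflexive-transitive closure. To each $P\in\Pred$ I would associate a ``predecessor tree'' $T_P$ whose root is $P$ and whose children at any node $Q$ enumerate, with repetition over rules, the premises of every rule in $\Phi$ having $Q$ as conclusion. Under the standard hypotheses on an inductive definition, namely (i) each rule has a finite premise set $\mfr{P}$ and (ii) only finitely many rules have any given predicate as conclusion, the tree $T_P$ is finitely branching. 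The $\psuf$-minimal predicates are then exactly the leaves of $T_P$: those predicates concluded only by axioms $(\emptyset,P_k)$, having no outgoing edge.

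The first step is well-foundedness: no infinite descent $P\psufr P_1\psufr P_2\psufr\cdots$ can occur. This is the standard induction principle attached to the inductive definition of $\Pred$. Concretely, set $W\dddef\{Q\in\Pred:\text{every }\psufr\text{-chain starting from }Q\text{ is finite}\}$; $W$ is closed under each rule of $\Phi$ (the children of any internal node are strictly smaller and in $W$ by induction hypothesis), so the least-fixed-point characterisation of $\Pred$ forces $W=\Pred$. Equivalently, each $P$ admits a well-founded derivation tree in $\Phi$, and $\psufr$-chains from $P$ embed into root-to-node paths in $T_P$.

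The second step is to upgrade well-foundedness to the literal finiteness asserted by the proposition. Having shown $T_P$ is well-founded (no infinite branch) and finitely branching, I would invoke K\"onig's lemma: a finitely branching tree with no infinite branch is finite. Thus the node set of $T_P$ is finite, which simultaneously bounds the length $k$ of any chain $P\psufr P_1\psufr\cdots\psufr P_k$ and the collection of predicates that can appear as such $P_i$. The minimality of the terminal $P_k$ follows because a chain can always be extended while the current endpoint still has an outgoing $\psuf$-edge, so maximal chains must end at leaves of $T_P$, and leaves are exactly the $\psuf$-minimal predicates.

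The main obstacle I expect lies in justifying hypothesis (ii), that only finitely many rules of $\Phi$ conclude any given $P$. This is not automatic from ``inductive definition'' in full generality; if $\Phi$ is merely a set of rules, a single conclusion could be targeted by infinitely many rules and $T_P$ would be infinitely branching, breaking the K\"onig step. In the intended instantiation (the inductively-defined predicates over $\Conf$ referenced just before the proposition) this finiteness holds by design, but I would state it explicitly as a standing assumption on $\Phi$. A secondary nuisance is that the proposition conflates ``bounded chain length'' and ``finitely many predicates $P_1,\dots,P_k$'' into one phrase; my plan handles both uniformly by deriving them from the finiteness of the node set of $T_P$ rather than arguing about chains directly.
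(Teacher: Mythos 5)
You should first know that the paper never actually proves this proposition: it lives in a disabled conditional block, and the only justification the surrounding (also disabled) text offers is a one-line appeal to Aczel's Proposition 1.2.1 plus ``it is not hard to see that the relation $\psuf$ is well-founded in $\Pred$.'' So you are supplying an argument where the paper supplies none, and your observation that the literal finiteness claim needs K\"onig's lemma on top of well-foundedness --- hence finite premise sets and finitely many rules per conclusion, neither of which the paper states --- is a genuine gain in precision.

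The gap is in your well-foundedness step, and it is more serious than the branching issue you flag. You set $W \dddef \{Q \in \Pred : \text{every } \psufr\text{-chain from } Q \text{ is finite}\}$ and claim $W$ is $\Phi$-closed. Closure means: if $(\mfr{P},Q)\in\Phi$ and $\mfr{P}\subseteq W$ then $Q\in W$. But $P_1\psuf Q$ holds whenever $P_1$ is a premise of \emph{some} rule concluding $Q$, not necessarily the particular rule $(\mfr{P},Q)$ you are closing under; a chain from $Q$ may descend through a different rule whose premises are not known to lie in $W$, so $Q\in W$ does not follow. Concretely, take $\Phi=\{(\emptyset,P),\ (\{P\},Q),\ (\{Q\},P)\}$: the least fixed point contains both $P$ and $Q$, yet $P\psuf Q$ and $Q\psuf P$, so $\psuf$ is not well-founded, no minimal element exists, and the proposition fails outright --- your $W$ is not closed under the axiom $(\emptyset,P)$ in this example. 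The object that Aczel's Proposition 1.2.1 (and Brotherston's soundness arguments) actually shows well-founded is the derivation tree, equivalently the ordinal rank of the least approximant containing an element, not the premise relation on $\Pred$ itself. To repair the proof you need either the additional hypothesis that each predicate is the conclusion of a unique rule (so your predecessor tree coincides with the derivation tree), or you should run the descent along ordinal ranks rather than along $\psuf$; with that fix, together with your explicitly stated finite-branching assumptions, the K\"onig step and the minimality-of-$P_k$ argument go through as you describe.
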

		\fi
		
		\ifx
		Formally, for any predicate $P\in \Pred$, there exists only a finite number of predicates $P_1,...,P_k$ such that $P\psufr P_1\psufr ... \psufr P_k$, with $P_k$ a minimum element of $\Pred$.
		\fi
		\ifx
		$P_1\psuf P_2$ if either (1) there exists a rule $(\mfr{P}, P_2)\in \Phi$ with $P_1\in \mfr{P}$, or (2) there is another predicate $P$ such that $P_1\psuf P\psuf P_2$.
		Following~\cite{An introduction to Inductive Definitions} (Proposition 1.2.1), it is not hard to see that the relation $\psuf$ is well-founded in $\Pred$.
		Formally, for any predicate $P\in \Pred$, there exists only a finite number of predicates $P_1,...,P_k$ such that $P\psufr P_1\psufr ... \psufr P_k$, with $P_k$ a minimum element of $\Pred$.
		\fi
		
		\begin{definition}[Relation $\psuf$]
			Given two paths $tr_1$ and $tr_2$, 
			relation $tr_1 \psuf tr_2$ is defined if $tr_2$ is a proper suffix of $tr_1$. 
			Write $tr_1\suf tr_2$ if $tr_1$ is a suffix of $tr_2$. 
		\end{definition}
		
		It is trivial that in a set of finite paths, relation $\psuf$ is well-founded. Because 
		every finite path has only a finite number of suffixes. 
		
		\ifx
		\begin{proposition}
			For any finite path $tr$, there exists only a finite number of paths $tr_1,...,tr_k$ such that $tr\psufr tr_1\psufr ... \psufr tr_k$, with $tr_k$ a minimum element (w.r.t. $\psuf$) of $\Pred$.
		\end{proposition}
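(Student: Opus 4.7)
The plan is to argue by a direct length-decrease, exploiting that $\psuf$ is defined as \emph{proper} suffix. First I would introduce a length measure $|tr|$ on finite paths, namely the number of program states appearing in $tr$. By the definition of $\psuf$, whenever $tr_i \psufr tr_{i+1}$, the path $tr_{i+1}$ is a proper suffix of $tr_i$, which forces $|tr_{i+1}| < |tr_i|$. Since $tr$ is finite by hypothesis, any chain $tr \psufr tr_1 \psufr \dots \psufr tr_k$ yields a strictly decreasing sequence of natural numbers $|tr| > |tr_1| > \dots > |tr_k| \ge 0$, so $k \le |tr|$ and the chain must be finite.

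Next I would identify the minimum element. Every $tr_i$ occurring in such a chain is itself a suffix of $tr$, and a finite path of length $n$ has at most $n+1$ suffixes. So the set of paths reachable from $tr$ via $\psufr$ is finite, and the minimum element $tr_k$ is simply the shortest suffix admitted by the convention on paths --- either the empty path or the one-state path, depending on the definition adopted in Section~\ref{section:Program Behaviours}. Since $\psuf$ restricted to this finite set of suffixes is a strict order, the minimum is uniquely determined and reached in at most $|tr|$ steps.

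I do not anticipate any serious conceptual obstacle: the argument is essentially the well-foundedness of $(\mathbb{N},<)$ pulled back through the length function, and the preceding text itself flags the result as trivial. The only point that needs genuine care is fixing the convention about whether paths may be empty or must contain at least one state, so that the notion of ``minimum element with respect to $\psuf$'' is well defined; once that is settled, chain-finiteness is immediate, and the same length-bound argument will be reused later when combining $\psuf$ with the multiset relation $\pmult$ for the cyclic soundness theorem.
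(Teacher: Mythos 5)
Your proof is correct and matches the paper's own justification, which is simply the remark that every finite path has only finitely many suffixes so $\psuf$ is well-founded on finite paths; your length-measure argument is the same observation made slightly more explicit. The extra care you give to the empty-path convention and the identification of the minimum element is reasonable but not needed beyond what the paper already asserts.
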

		\fi
		
		Relation $\pmult$ is based on the definition of relation $\psuf$. 
		
		\begin{definition}[Relation $\pmult$]
			\label{def:Relation pmult}
			Given two finite sets $\cnt_1$ and $\cnt_2$ of finite paths,  
			$\cnt_1\pmult \cnt_2$ is defined if set $\cnt_1$ can be obtained from $\cnt_2$ by replacing (or removing) one or more elements of $\cnt_2$ with a finite number of elements, such that
			for each replaced element $tr\in \cnt_2$, its replacements $tr_1,...,tr_n$ ($n\ge 1$) in $\cnt_1$ satisfies that $tr_i\psuf tr$ for any $i$, $1\le i\le n$. 
			
			
			Write $\cnt_1\mult \cnt_2$ if $\cnt_1 = \cnt_2$ or $\cnt_1\pmult \cnt_2$. 
		\end{definition}
		
		Note that in Definition~\ref{def:Relation pmult} it is not hard to see that $\cnt_1\pmult \cnt_2$ implies $\cnt_1\neq \cnt_2$.

		\ifx
		\paragraph{Relation $\pmult$}
		Given two execution traces $tr_1$ and $tr_2$, let 
		$tr_1 < tr_2$ represent that trace $tr_2$ is a suffix of trace $tr_1$. 
		Define $\cnt(\alpha, \sigma, \phi)$ to be a finite set of execution traces $t_1t_2...t_n$ 
		satisfying that 
		(1) it starts from $t_1 = (\alpha, \sigma)$ and ends with $t_n = (\ter, \sigma')$, and $\Sem{\phi}_{\sigma'}$ is not valid;
		(2) there is no trace $tr'$ such that $tr' < t_1t_2...t_n$ and $tr'\in \cnt(\alpha, \sigma, \phi)$. 
		\fi
		
		\ifx
		Let $\cnt(\alpha, \sigma, \phi)$ be a finite set of execution traces $t_1t_2...t_n$ starting from $t_1 = (\alpha, \sigma)$ and ending with $t_n = (\ter, \sigma')$, in which $\sigma'$ makes $\Sem{\phi}_{\sigma'}$ not valid.
		In addition, it satisfies that such a trace $t_1t_2...t_n$ is a shortest such trace in the sense that there is no any of its subtraces belonging to $\cnt(\alpha, \sigma, \phi)$.
		In other words, let $tr_1 < tr_2$ represent that trace $tr_2$ is a suffix of trace $tr_1$, then there is no trace $tr'$ such that $tr' < t_1t_2...t_n$ and $tr'\in \cnt(\alpha, \sigma, \phi)$.
		\fi
		
		\ifx
		From the suffix relation $<$ and the definition of $\cnt(\alpha, \sigma, \phi)$ we induce the ordering $\pmult$. 
		
		\begin{definition}[Relation $\psuf$]
			Given two finite sets of traces $\cnt_1(\alpha_1, \sigma_1, \phi_1)$ and $\cnt_2(\alpha_2, \sigma_2, \phi_2)$, $\cnt_1\pmult \cnt_2$ is defined if 
			set $\cnt_1$ can be obtained from $\cnt_2$ by replacing a finite set of traces $A$ of $\cnt_2$ with a set of traces $A'$. 
			The function $': A\to A'$ satisfies that 
			for any trace $tr\in A$, $tr' < tr$. 
			
		\end{definition}
		\fi
		\ifx
		$C_1\pmult C_2$ if $C_1$ can be obtained by replacing one or more traces of $C_2$ with a finite number of traces (including an empty set), each of which is smaller than the replaced traces in the sense of the relation $<$.
		\fi
		
		For example, let $C_1=\{tr_1, tr_2, tr_3\}$, where $tr_1 \dddef (\alpha, \sigma)\trans(\alpha_1, \sigma_1)\trans(\alpha_2, \sigma_2)\trans(\alpha_3, \sigma_3)\trans(\ter, \sigma_4), tr_2 \dddef (\alpha, \sigma)\trans(\alpha_1, \sigma_1)\trans(\beta_1, \delta_1)\trans(\beta_2, \delta_2)\trans(\ter, \delta_3)$ and $tr_3 \dddef (\alpha, \sigma)\trans(\ter, \tau)$; 
		$C_2 = \{tr'_1, tr'_2\}$, where 
		$tr'_1 \dddef (\alpha_1, \sigma_1)\trans(\alpha_2, \sigma_2)\trans(\alpha_3, \sigma_3)\trans(\ter, \sigma_4), tr'_2 \dddef (\alpha_1, \sigma_1)\trans(\beta_1, \delta_1)\trans(\beta_2, \delta_2)\trans(\ter, \delta_3)$. 
		We see that $tr'_1 \psuf tr_1$ and $tr'_2 \psuf tr_2$. 
		$C_2$ can be obtained from $C_1$ by replacing $tr_1$ and $tr_2$ with $tr'_1$ and $tr'_2$  respectively, and removing $tr_3$. 
		Hence $C_2\pmult C_1$. 
		
		Relation $\pmult$ is in fact a special case of the ``multiset ordering'' introduced in~\cite{Dershowitz79}, where it has shown that multiset ordering is well-founded. 
		Therefore, $\pmult$ is well-founded. 
		
		
		\ifx
		Relation $\prec_3$ is actually a special case of the well-founded relation on multisets proposed in~\cite{Proving termination on multiset orderings}. 
		We omit the proof of the well-foundedness of $\prec_3$. 
		\fi
		
		\ifx
		To see $\prec_3$ is indeed well-founded, observe that since $\pmult$ on a set of finite paths is well-founded, for any finite path $tr$ with length $n$, there are only $n$ paths that are its suffixes. So if $\cnt_1\prec_3 \cnt_2$, then 
		\fi
		
		
		\ifx
		\begin{proposition}
			Starting from a set $C(\alpha, \sigma, \phi)$, there exists only a finite number of sets $C_1, ..., C_k$ such that $C\pmultr C_1\pmultr ... \pmultr C_k$, with $C_k$ a minimum element.
		\end{proposition}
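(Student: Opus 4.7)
The plan is to obtain this statement as a direct consequence of the well-foundedness of $\pmult$, which the discussion immediately preceding the proposition identifies as a special case of the Dershowitz multiset ordering. The proposition asserts that any strictly descending $\pmult$-chain starting at $C = C(\alpha,\sigma,\phi)$ is finite and terminates at a $\pmult$-minimum set, i.e.\ at a $C_k$ with no $C'$ satisfying $C' \pmult C_k$.

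First I would unfold the definitions: a set $C_k$ is minimum precisely when no element of $C_k$ admits a proper-suffix replacement inside a multiset, so minimality is a \emph{local} property that can be checked at $C_k$ alone. Second, I would argue by contradiction: suppose some chain $C \pmultr C_1 \pmultr C_2 \pmultr \cdots$ never reaches a minimum. Then for every $i$ there is a further strict descent $C_i \pmultr C_{i+1}$, producing an infinite descending sequence in $\pmult$, contradicting well-foundedness. To discharge the hypothesis of Dershowitz's theorem, I would verify that the base relation $\psuf$ on finite paths is well-founded --- trivially so, since every finite path has only finitely many proper suffixes, ruling out any infinite $\psuf$-descent --- and then cite the theorem lifting well-foundedness from a relation to its induced multiset ordering.

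For the stronger reading in which ``only a finite number of sets'' refers to the totality of sets reachable downwards from $C$ over \emph{all} chains, an additional finite-branching step is needed: $C$ is a finite set of finite paths, each path has only finitely many proper suffixes, and only finitely many finite multisets can be formed from these suffixes, so the $\pmult$-descent tree rooted at $C$ is finitely branching. Well-foundedness together with finite branching, via König's lemma, would then bound the total number of reachable sets.

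The main obstacle, I expect, is the ambiguity in the statement itself: the weakest reading (every descending chain from $C$ is finite) follows immediately from well-foundedness, but the existential phrasing ``there exists only a finite number of sets $C_1,\ldots,C_k$'' hints at a stronger claim about a canonical chain or about total reachability. Pinning down the intended reading, and, if the strong reading is required, supplying the finite-branching plus König's-lemma argument, will be the subtlest part of the actual write-up; the well-foundedness core itself is a one-line appeal to Dershowitz 1979.
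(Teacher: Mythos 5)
Your proposal is correct and matches the paper's approach: the paper itself gives no written proof of this statement, simply observing that $\pmult$ is a special case of the multiset ordering of Dershowitz (1979) and is therefore well-founded, which is exactly the core of your argument (your check that the base relation $\psuf$ is well-founded on finite paths is the same trivial observation the paper makes). Your additional remarks on the ambiguity of the phrasing and the König's-lemma argument for the stronger reading go beyond what the paper addresses, but the reading the paper intends is the weak one --- every strictly descending $\pmult$-chain from $C$ is finite --- so the one-line appeal to well-foundedness suffices.
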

		\fi
		\ifx
		Formally, starting from a set $C(\alpha, \sigma, \phi)$, there exists only a finite number of sets $C_1, ..., C_k$ such that $C\pmultr C_1\pmultr ... \pmultr C_k$, with $C_k$ a minimum element.
		\fi
		
		\fi
		
		\ifx
		\begin{definition}[Cyclic Condition for Extra Rules]
			A rule in $\mcl{E}$ satisfies ``cyclic condition'', if it is of the form:
			$$
			\begin{aligned}
				\infer[^{(\sigma [\textit{Ext}])}]
				{\Gamma\Rightarrow \sigma : [\alpha]\phi, \Delta}
				{\Gamma'\Rightarrow \sigma' : [\alpha]\phi, \Delta'}
			\end{aligned}
			\mbox{\ \ or\ \ }
			\begin{aligned}
				\infer[^{(\sigma \la\textit{Ext}\ra)}]
				{\Gamma\Rightarrow \sigma : \la\alpha\ra\phi, \Delta}
				{\Gamma'\Rightarrow \sigma' : \la\alpha\ra\phi, \Delta'}
			\end{aligned}, 
			$$
			which satisfies that 
			for any $\rho\in \Eval$ with $\rho\models \Gamma$, there is a $\rho'\in \Eval$ such that $\rho'\models \Gamma'$ and $\rho'(\sigma)\equiv \rho(\sigma)$. 
			
			We call target pair $(\sigma : [\alpha]\phi, \sigma':[\alpha]\phi)$ (resp. $(\sigma : \la\alpha\ra\phi, \sigma':\la\alpha\ra\phi)$) a ``critical pair'' of this rule. 
		\end{definition}
		\fi
		
		\subsection{Construction of A Cyclic Preproof Structure in $\pfDLp$}
		\label{section:Construction of A Cyclic Preproof Structure}
		\ifx
		A \emph{proof tree} is a tree structure constructed according to the proof rules of Table~\ref{table:General Rules for LDL} and/or other additional proof rules for a special type of programs.
		The root of a proof tree is the conclusion of a proof.
		Each leave node of a proof tree is either \emph{terminal}, if it is an instance of an axiom rule, or \emph{non-terminal} otherwise.
		A non-terminal node is also called a \emph{bud}~\cite{???}.
		\fi

		In the proof system $\pfDLp$, given a sequent $\Gamma\Rightarrow \Delta$, we expect 
		a finite proof tree to prove $\Gamma\Rightarrow \Delta$. 
		However, a branch of a proof tree does not always terminate.
		Because 
		the process of symbolically executing a program via rule $([\alpha])$ or/and rule $(\la\alpha\ra)$ might not stop. 
		This is well known when a program has an explicit/implicit loop structure that may run infinitely. 
		For example, a while program $W \dddef \textit{while}\ \textit{true}\ \textit{do}\ x := x + 1\ \textit{end}$ will proceed infinitely as the following transitions:
		$(W, \{x \mapsto 0\})\trans (W, \{x\mapsto 1\})\trans ...$. 
		When deriving it, 
		one may generate an infinite proof branch like this: 
		$$
		\begin{aligned}
			\infer[^{([\alpha])}]
			{\cdot\Rightarrow \{x\mapsto 0\} : [W]\true}
			{
				\infer[]
				{\cdot\Rightarrow \{x\mapsto 1\} : [W]\true}
				{
					\infer*[]
					{}
					{
						\infer[]
						{}
						{
							\infer[^{([\alpha])}]
							{\cdot\Rightarrow \{x\mapsto n\} : [W]\true}
							{...}
						}
					}
				}
			}
		\end{aligned}
		$$
		
		In this paper we apply the cyclic proof approach (cf.~\cite{Brotherston07}) to deal with a potential infinite proof tree in $\LDL$. 
		Cyclic proof is a technique to ensure a valid conclusion of a special type of 
		infinite proof trees, called `preproof', under some certain conditions called ``soundness conditions''. 
		Below we firstly introduce a preproof structure, 
		then based on it we propose a `cyclic' preproof --- a structure that satisfies certain soundness conditions.  
		The key idea is that, if the conclusion is invalid, then the structure of a cyclic preproof, as will be shown in Section~\ref{section:Proof of Theorem - theo:Soundness of A Cyclic Preproof} in detail, would induce an infinite descent sequence of elements of a well-founded set, which would contradict the definition of `well-foundedness' (Section~\ref{section:Proof of Theorem - theo:Soundness of A Cyclic Preproof}) itself. 

		\ifx
		A \emph{preproof} is 
		a proof tree with a finite structure (which means having a finite number of nodes) but containing infinite derivation paths. 
		A preproof structure can lead to a sound conclusion if a certain soundness condition is met. 
		This condition guarantees that any counterexamples from an invalid conclusion would cause an infinite descent sequence w.r.t. a well-founded relation (Definition~\ref{def:Well-foundedness}) that is related to the semantics of $\LDL$, which, however, is a contradiction to the definition of the well-foundedness itself. 
		We will expand it in detail in Section~\ref{section:Construction of A Cyclic Preproof Structure}. 
		\fi
		
		
		\textbf{Preproof \& Derivation Traces}. 
		A \emph{preproof} (cf.~\cite{Brotherston07}) is an infinite proof tree expressed as a tree structure with finite nodes, in which there exist non-terminal leaf nodes, called \emph{buds}. 
		Each bud is identical to one of its ancestors in the tree. 
		The ancestor identical to a bud $N$ is called a \emph{companion of $N$}. 
		
		
		A \emph{derivation path} in a preproof is an infinite sequence of nodes $\nu_1\nu_2...\nu_m...$ ($m\ge 1$) starting from the root node $\nu_1$ of the preproof, 
		where for each node pair $(\nu_i, \nu_{i+1})$ ($i\ge 1$), 
		$\nu_i$ is the conclusion and $\nu_{i+1}$ is a premise, of a rule. 

		
		
		A \emph{derivation trace} over a derivation path $\mu_1\mu_2...\mu_k\nu_1\nu_2...\nu_m...$ ($k \ge 0, m\ge 1$) is an infinite sequence $\tau_1\tau_2...\tau_m...$ of formulas starting from formula $\tau_1$ in node $\nu_1$ such that 
		each CP pair $(\tau_i, \tau_{i+1})$ ($i\ge 1$) appearing on the right-sides of nodes $(\nu_i, \nu_{i+1})$ respectively is either 
		a target pair of a rule in $\pfDLp$, or a CP pair satisfying $\tau_i\equiv \tau_{i+1}$ of a rule in $\pfDLp$. 

		\ifx
		\begin{definition}[Derivation Traces]
			A ``derivation trace'' over a derivation path $\mu_1\mu_2...\mu_k\nu_1\nu_2...\nu_m...$ ($k \ge 0, m\ge 1$) is an infinite sequence $\tau_1\tau_2...\tau_m...$ of formulas starting from formula $\tau_1$ in node $\nu_1$ such that 
			each pair $(\tau_i, \tau_{i+1})$ ($i\ge 1$) with $\tau_i, \tau_{i+1}$ appearing on the right-sides of nodes $\nu_i, \nu_{i+1}$ respectively is either a target pair of an instance of a rule in $\mcl{DL}_p$, or a critical pair of an instance of a rule in $\mcl{E}$, or a non-target pair satisfying $\tau_i\equiv \tau_{i+1}$. 
			
		\end{definition}
		
		Note that rules $(\textit{Cut})$ and $(\textit{Wk})$ have no target pairs. So a pair of a derivation trace in an instance of these two rules must be a non-target pair. 
		\fi

		\textbf{Progressive Steps}. 
		A crucial prerequisite of defining cyclic preproofs is to introduce the notion of \emph{(progressive) derivation traces} in a preproof of $\pfDLp$. 
		\ifx
		The key idea is that if each derivation trace of a preproof structure is progressive, then it is impossible for the conclusion of the preproof to be invalid.
		Otherwise,
		an infinite descent sequence of counterexamples of finite sets of execution traces of a program can be constructed, contradicting the fact that the relation $\psuf$ or $\pmult$ mentioned in Section~\ref{section: Well-founded Relations} is well-founded.
		\fi
		
		\begin{definition}[Progressive Step/Progressive Derivation Trace in $\LDL$]
			\label{def:Progressive Step/Progressive Derivation Trace}
			In a preproof of $\pfDLp$, given a derivation trace $\tau_1\tau_2...\tau_m...$ over a derivation path $...\nu_1\nu_2...\nu_m...$ ($m\ge 1$) starting from $\tau_1$ in node $\nu_1$, 
			a formula pair $(\tau_i, \tau_{i+1})$ ($1\le i\le m$) over $(\nu_i, \nu_{i+1})$ is called a ``progressive step'', if $(\tau_i, \tau_{i+1})$ is a target pair 
			of an instance of rule $([\alpha])$: 
			$$
			\begin{gathered}
				\infer[^{([\alpha])}]
				{\nu_i: \Gamma\Rightarrow \tau_i : (\sigma : [\alpha]\phi), \Delta,}
				{
					...
					&
					\nu_{i+1}: \Gamma\Rightarrow \tau_{i+1} : (\sigma' : [\alpha']\phi), \Delta
					&
					...
				}
			\end{gathered}
			;
			$$
			or the target pair of an instance of rule $(\la\alpha\ra)$:
			$$
			\begin{gathered}
				\infer[^{(\la\alpha\ra)}]
				{\nu_i: \Gamma\Rightarrow \tau_i : (\sigma : \la\alpha\ra\phi), \Delta}
				{
					\nu_{i+1}: \Gamma\Rightarrow \tau_{i+1} : (\sigma' : \la\alpha'\ra\phi), \Delta
				}
			\end{gathered}
			, 
			$$
			provided with an additional condition that 
			$\vdash_{P_{\Oper,\Terminate}}(\Gamma\Rightarrow (\alpha', \sigma')\termi, \Delta)$. 
			
			\ifx
			are the formulas appearing in the sequent pair $(\nu_i, \nu_{i+1})$ respectively, which is either
			$$
			\begin{gathered}
				\nu_i: \Gamma\Rightarrow \sigma : [\alpha]\phi, \Delta, \\
				\nu_{i+1}: \Gamma\Rightarrow \sigma' : [\alpha']\phi, \Delta,
			\end{gathered}
			$$
			of an instance of rule $([\alpha])$
			where $\tau_i$ is formula $\sigma : [\alpha]\phi$ and $\tau_{i+1}$ is formula $\sigma' : [\alpha']\phi$; 
			or
			$$
			\begin{gathered}
				\nu_i: \Gamma \Rightarrow \sigma: \la\alpha\ra\phi, \Delta,\\
				\nu_{i+1}: \Gamma\Rightarrow \sigma': \la\alpha'\ra\phi, \Delta,
			\end{gathered}
			$$
			of an instance of rule $(\la\alpha\ra)$ where
			$\tau_i$ is formula $\sigma: \la\alpha\ra\phi$, $\tau_{i+1}$ is formula $\sigma': \la\alpha'\ra\phi$ and $(\alpha', \sigma')\prec (\alpha, \sigma)$. 
			\fi

			\ifx
			A formula pair $(\tau_i, \tau_{i+1})$ ($i\ge 1$) of the derivation trace is called a ``progressive step'' if $(\tau_i, \tau_{i+1})$ is as in the following sequents of instances of the rules $([\alpha])$ and $(\la\alpha\ra)$ as follows:
			$$
			\begin{gathered}
				\infer[^{([\alpha])}]
				{\Gamma\Rightarrow \tau_i: (\sigma : [\alpha]\phi), \Delta}
				{
					\Gamma\Rightarrow \tau_{i+1}: (\sigma' : [\alpha']\phi), \Delta
				},\\ 
				\infer[^{(\la\alpha\ra)}]
				{\Gamma \Rightarrow \tau_i: (\sigma\{P\}: \la\alpha\ra\phi), \Delta}
				{
					\Gamma\Rightarrow \tau_{i+1}: (\sigma'\{P'\}: \la\alpha'\ra\phi), \Delta
				},
			\end{gathered}
			$$
			\mbox{ where $P, P'\in \Pred$ and $P'\psuf P$}, $\sigma\{Q\}$ means that $Q$ appears in $\sigma$. 
			\fi
			
			If a derivation trace has an infinite number of progressive steps, we say that the trace is `progressive'.
		\end{definition}

		As we will see in Section~\ref{section:Proof of Theorem - theo:Soundness of A Cyclic Preproof} and Appendix~\ref{section:Other Propositions and Proofs}, 
		the additional condition of the instance of rule $(\la\alpha\ra)$ is the key to prove the case for modality $\la\cdot\ra$ in Theorem~\ref{theo:Soundness of A Cyclic Preproof}.

		
		As an example, we propose a proof system for termination checking of \emph{While} programs shown in Table~\ref{table:Inference Rules for Program Terminations of While programs} of Appendix~\ref{section:Formal Definitions of While Programs}.

		\textbf{Cyclic Preproofs}. 
		We define cyclic preproof structures as the following definition. 
		
		\begin{definition}[A Cyclic Preproof for $\LDL$]
			\label{def:Cyclic Preproof}
			In $\pfDLp$, a preproof is a `cyclic' one, if there exists a progressive trace over every derivation path.
		\end{definition}
		
		The cyclic proof system $\pfDLp$ is sound, as stated in the following theorem. 
		
		\begin{theorem}[Soundness of the Cyclic Preproof for $\LDL$]
			\label{theo:Soundness of A Cyclic Preproof}
			Provided that $P_{\Oper, \Terminate}$ is sound, 
			a cyclic preproof of $\pfDLp$ has a sound conclusion.  
		\end{theorem}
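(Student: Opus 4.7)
The plan is to argue by contradiction using infinite descent in a well-founded order. Suppose some cyclic preproof has an invalid root sequent, witnessed by a counterexample evaluation $\rho_0 \in \Eval$. I would first propagate this invalidity downward through the tree: by the soundness of each individual rule in $\pfDLp$ (Theorem~\ref{theo:soundness of rules for LDL}), whenever a conclusion is invalid under some $\rho$, at least one premise is invalid under $\rho$ (or, for rule $(\textit{Sub})$, under a related evaluation $\rho'$ supplied by Definition~\ref{def:Abstract Substitution}). Recursively choosing such an invalid premise starting from the root yields an infinite derivation path $\nu_1\nu_2\ldots$, each $\nu_i$ invalidated by an evaluation $\rho_i$.

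By the cyclicity hypothesis (Definition~\ref{def:Cyclic Preproof}), this derivation path carries an infinite progressive trace $\tau_1\tau_2\ldots$ with infinitely many progressive steps. Inspection of which rules of $\pfDLp$ produce target pairs shows that along such a trace each $\tau_i$ preserves a fixed modal form — either consistently $\sigma_i : [\alpha_i]\phi$ or consistently $\sigma_i : \la\alpha_i\ra\phi$ — since only rules $([\alpha])$ and $(\la\alpha\ra)$ act on the program through a target pair. To each trace formula I would assign a measure $m_i$ witnessing its invalidity under $\rho_i$: in the $[\alpha]$-case, $m_i$ is the set of minimum terminal paths from $\rho_i(\alpha_i, \sigma_i)$ whose final state falsifies $\phi$ under $\rho_i$, which is finite (by termination finiteness in Definition~\ref{def:Program Properties}) and nonempty (by invalidity); in the $\la\alpha\ra$-case, $m_i$ is a single such terminal path, whose existence is ensured by the extra side-condition $(\alpha_i, \sigma_i)\termi$ that Definition~\ref{def:Progressive Step/Progressive Derivation Trace} attaches to every $\la\alpha\ra$-progressive step.

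The descent step is the heart of the argument: at every progressive step the program advances by a genuine transition $(\alpha_i, \sigma_i) \trans (\alpha_{i+1}, \sigma_{i+1})$, which strictly shortens each witnessing terminal path. In the $[\alpha]$-case I would show $m_{i+1} \pmult m_i$, since every terminal path from $(\alpha_{i+1}, \sigma_{i+1})$ is a proper suffix of a path in $m_i$ (after prepending the chosen transition), and branching finiteness from Definition~\ref{def:Program Properties} controls the multiset replacement; in the $\la\alpha\ra$-case I would show $m_{i+1} \psuf m_i$ directly, as the new witness path is a proper suffix of the old one. Non-progressive trace steps either satisfy $\tau_i \equiv \tau_{i+1}$ or only reshape the label and configuration (as with $(\sigma\neg)$, $(\sigma\wedge)$, $(\textit{Int})$, $([\ter])$, weakening, contraction, or $(\textit{Cut})$) and can be arranged to leave the measure unchanged. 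Since both $\pmult$ and $\psuf$ are well-founded, no infinite strictly descending sequence can exist, giving the desired contradiction.

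The main obstacle I anticipate is the bookkeeping across rule $(\textit{Sub})$, where the substitution alters $\tau_i$ itself and forces a change of witness evaluation $\rho_i \to \rho_{i+1}$; I would rely on Definition~\ref{def:Abstract Substitution} to guarantee that counterexample terminal paths can be pulled back in a way that preserves the measure in the correct direction. A secondary subtlety is the $\la\alpha\ra$-case: without the termination side-condition on progressive steps, an invalid $\sigma : \la\alpha\ra\phi$ could be vacuously invalid (no terminal path from the current state exists at all), leaving no concrete path along which to descend; the extra requirement $(\alpha', \sigma')\termi$ in Definition~\ref{def:Progressive Step/Progressive Derivation Trace} is precisely what rules this out, and the proof will turn on it.
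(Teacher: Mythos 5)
Your overall strategy coincides with the paper's: argue by contradiction, propagate a counterexample evaluation down an invalid derivation path, attach to each trace formula the (finite, by termination finiteness) collection of minimum terminal paths that falsify the postcondition, and extract from the infinitely many progressive steps an infinite strict descent in the multiset ordering $\pmult$, contradicting Proposition~\ref{prop:well-foundedness of relation pmult}. Your handling of the box case, of rule $(\textit{Sub})$ via Definition~\ref{def:Abstract Substitution}, and your diagnosis of why the termination side-condition on progressive $\la\alpha\ra$-steps is indispensable all match Lemma~\ref{lemma:infinite descent sequence} and the way the paper assembles it into the theorem.

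The one point that would not survive as written is your $\la\alpha\ra$-measure. You take a single witnessing terminal path $m_i$ and assert $m_{i+1}\psuf m_i$ ``directly, as the new witness path is a proper suffix of the old one.'' That suffix relation holds only if $m_i$ happens to begin with the very transition $(\alpha_i,\sigma_i)\trans(\alpha_{i+1},\sigma_{i+1})$ selected by that rule instance; for non-deterministic programs the terminal path you fixed at step $i$ may branch elsewhere, and you cannot choose the witnesses forward-coherently because at the moment you pick $m_i$ you do not yet control which transition later rule applications will take. Moreover, at non-progressive $\la\alpha\ra$-steps the premise state carries no termination guarantee, so a single-path measure can simply cease to exist there. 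The paper sidesteps both problems by using the same set-valued counter-example $\CT(\rho,\tau,\nu)$ for both modalities: every path in $\CT(\rho,\tau',\nu')$ prepended with the chosen transition lies in $\CT(\rho,\tau,\nu)$, which gives $\mult$ at every step (including when the sets are empty) and $\pmult$ precisely when the termination condition forces non-emptiness at a progressive step. Replacing your single-path diamond measure with that uniform set-based one closes the gap and makes your argument identical to the paper's.
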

		
		
		We still denote by $\vdash_{\pfDLp} \nu$ that $\nu$ can be proved through a cyclic preproof. 
		In Section~\ref{section:Proof of Theorem - theo:Soundness of A Cyclic Preproof}, we will prove Theorem~\ref{theo:Soundness of A Cyclic Preproof}.  
		
		Proof system $P_{\Oper, \Omega}$ itself that $\pfDLp$ relies on can also be cyclic. 
		In Appendix~\ref{section:Formal Definitions of While Programs}, we propose a cyclic proof system for termination checking of \emph{While} programs in Table~\ref{table:Another Type of Inference Rules for Program Terminations of While programs}, which does not rely on programs' syntactic structures. 
		

		\textbf{About Other Properties of $\pfDLp$}. 
		Since $\LDL$ is not a specific logic,
		it is impossible to discuss about its decidability, completeness or whether it is cut-free without any restrictions on sets $\Prog, \Conf$ and $\Fmla$. 
		For example, for the completeness, in $\Domain{\WP}$, whether a $\LDL$ formula can be proved depends on the definitions of $\Conf_\WP$, more precisely, on whether we can successfully find proper forms of configurations using rule $(\Sub)$ from which a preproof structure can be constructed. 
		One of our future work will focus on analyzing under what restrictions, 
		these properties can be obtained in a general sense. 
		
		\ifx
		About the completeness of $\pfDLp$, 
		since $\LDL$ is not a specific logic in normal sense,
		it is impossible to discuss it without explicit definitions of or any restrictions on structures $\Prog, \Conf$ and $\Fmla$. 
		The completeness depends on whether a cyclic preproof can be built for every valid $\LDL$ formula. 
		One of our future work is to analyze under which restrictions, a completeness result can be obtained for $\LDL$.  
		\fi
		
		\ifx
		since 
		$\LDL$ is not a specific logic in normal sense, 
		the completeness of $\pfDLp$ relies on the definitions of $\Prog, \Conf$ and $\AFmla$ particularly. 
		It depends on whether a cyclic preproof can be built for every valid $\LDL$ formula. 
		One of our future work will try to analyze under which conditions of $\Prog, \Conf$ and $\AFmla$, a completeness or relative completeness result for $\LDL$ can be achieved.  
		\fi
		
		\ifx
		An interesting problem is to ask whether or in what conditions our proof system is complete \emph{related to} the structures $\Prog, \Conf$ and $\Fmla$. 
		Currently, we think that this problem is rather challenging, 
		\fi

		\subsection{Lifting Rules From Dynamic Logic Theories}
		\label{section:Lifting Process From Program Domains}
		We introduce a technique of lifting rules from particular dynamic-logic theories to labeled ones in $\LDL$. 
		It makes possible for embedding (at least partial) existing  dynamic-logic theories into $\LDL$, 
		without losing their abilities of deriving based on programs' syntactic structures. 
		This in turn facilitates deriving $\LDL$ formulas in particular program domains by making use of special structural rules. 
		\ifx
		This is useful in two aspects: 
		firstly, it allows making use of existing rules to facilitate deriving $\LDL$ formulas in particular program domains;
		secondly, it makes possible for embedding (at least partial) existing  dynamic-logic theories into $\LDL$, 
		without losing their abilities of deriving based on programs' syntactic structures. 
		\fi
		Below, we propose a lifting process for general inference rules under a certain condition of configurations (Proposition~\ref{prop:lifting process 2}). 
		One example of applications of this technique will be given later in Section~\ref{section:Lifting Process in While Programs}. 
		
		\ifx
		In a particular program domains like $\TA_\WP$ (Section~\ref{section:Examples of Term Structures}), we are interested in adapting a special rule 
		to a labeled one as a part of extra theory $\mcl{E}$ of $\LDL$.
		As pointed out in Section~\ref{section:Summery}, it allows $\LDL$ to make use of existing rules of a particular theory to facilitate derivations of $\LDL$ formulas, thus providing a flexible verification framework in which both symbolic-execution-based reasoning and structural reasoning are possible. 
		Below we introduce a process of lifting logical consequences $\phi\to \psi$ and general inference rules under certain conditions. 
		Logical consequences are commonly found as axioms in dynamic logics and their variations (cf.~\cite{???}). 
		One example of applications of the lifting process will be given later in Section~\ref{section:Lifting Process in While Programs}. 
		\fi
		
		We first introduce the basic concepts. 
		In the following definitions of this section, we always assume $\rho\models \phi$ for an unlabeled formula $\phi$ is well defined in some particular dynamic-logic theory.

		\begin{definition}[Relation $\se$]
			\label{def:Same Effects}
			Given evaluations $\rho, \rho'\in \Eval$, a configuration $\sigma\in \Conf$ and a set $A$ of unlabeled formulas, 
			evaluation-configuration pair $(\rho, \sigma)$ ``has the same effect'' on $A$ as evaluation $\rho'$, denoted by $(\rho, \sigma) \se_A \rho'$, if for any $\phi\in A$, $\rho\models \sigma : \phi$ iff $\rho'\models \phi$. 
			
		\end{definition}
		
		
		\ifx
		\begin{definition}[Standard Configuration]
			\label{def:Simple Configurations}
			We say a configuration $\sigma\in\Conf$ is `standard' w.r.t. a set $A$ of formulas, if it satisfies that for any $\rho\in \Eval$, there exists a $\rho'\in \Eval$ such that 
			$(\rho, \sigma)\se_A \rho'$. 
			
			Simply call $\sigma$ `standard' if $A = \DLF$. We denote the set of all standard configurations w.r.t. $A$ as $\Conf_\std(A)$. 
		\end{definition}
		
		The effect of a standard configuration on formulas is equivalent to the effect of an evaluation. 
		\fi
		

		\begin{example}
			\label{example:normal configurations}
			In program domain $\Domain{\WP}$, let $\rho, \rho'\in \Eval_\WP$ be two evaluations satisfying $\rho(x) = 5, \rho(y) = 1$ and $\rho'(x) = 6, \rho'(y) = 1$. 
			Let $\sigma = \{x\mapsto x + 1\}\in \Conf_\WP$, formula $\phi = (x + y + 1 > 7)\in \Fmla_\WP$. 
			Then $(\rho, \sigma)\se_{\{\phi\}}\rho'$, since 
			$\rho(\app_\WP(\sigma, \phi))\equiv \rho((x + 1) + y + 1 > 7)\equiv (8 > 7)\equiv \rho'(\phi)$. 
			
			\ifx
			$\sigma$ is standard, because for any evaluation $\xi\in \Eval_\WP$, let $\xi'\dddef \xi^x_{\xi(x) + 1}$, 
			then we always have $(\xi, \sigma)\se \xi'$. 
			It is not hard to see that 
			the configurations in $\Conf_\WP$ and $\Conf_\E$ introduced in Section~\ref{section:Examples of Term Structures} are actually standard. 
			See a proof for $\Conf_\WP$ in Lemma~\ref{???} (Appendix~\ref{section:Formal Definitions of While Programs}). 
			\fi
		\end{example}


		\begin{definition}[Free Configurations]
			\label{def:free configurations}

			A configuration $\sigma$ is called `free' w.r.t. a set $A$ of formulas if it satisfies that 
			\begin{enumerate}
				\item\label{item:free configurations cond 1} for any $\rho\in \Eval$, there is a $\rho'\in \Eval$ such that $(\rho, \sigma)\se_A \rho'$; 
				\item\label{item:free configurations cond 2} for any $\rho\in \Eval$, there is a $\rho'\in \Eval$ such that 
				$(\rho', \sigma)\se_A \rho$.
			\end{enumerate}
			
			We denote the set of all free configurations w.r.t. $A$ as $\Conf_\free(A)$. 
		\end{definition}
		
		Intuitively, a free configuration in some sense neither strengthen nor weaken a formula w.r.t. its boolean semantics after affections. 
		
		For a configuration $\sigma\in \Conf_\WP$, 
		recall that $e^x_e$ is explained in Example~\ref{example:Proof system for program behaviours}.  
		
		\begin{example}
			In Example~\ref{example:normal configurations}, 
			the configuration $\sigma$ is free w.r.t. $\{\phi\}$.
			On one hand, for any evaluation $\xi\in \Eval_\WP$, let $\xi'\dddef \xi^x_{\xi(x) + 1}$, 
			then we have $(\xi, \sigma)\se_{\{\phi\}} \xi'$ (Definition~\ref{def:free configurations}-\ref{item:free configurations cond 1});
			On the other hand, 
			for any evaluation $\xi$, let $\xi'\dddef \xi^x_{\xi(x) - 1}$, 
			then $(\xi', \sigma)\se_{\{\phi\}} \xi$ holds (which is Definition~\ref{def:free configurations}-\ref{item:free configurations cond 2}). 
			Let $\sigma' = \{x \mapsto 0, y\mapsto 0\}$, then 
			$\sigma'$ is not free w.r.t. $\{\phi\}$, since it makes $\phi$ too strong so that there does not exist an evaluation $\xi$ such that 
			$(\xi, \sigma')\se_{\{\phi\}} \rho'$, violating Definition~\ref{def:free configurations}-\ref{item:free configurations cond 2}. 
			
		\end{example}
		
		\ifx
		The lifting process for logical consequences $\phi\to \psi$ and inference rules are described by the following two propositions respectively. 
		
		\begin{proposition}
			\label{prop:lifting process 1}
			Given any formulas $\phi, \psi$, if formula $\phi\to \psi$ is valid, 
			then rule 
			$$
			\begin{aligned}
				\infer[]
				{\Gamma\Rightarrow \sigma : \psi, \Delta}
				{\Gamma\Rightarrow \sigma : \phi, \Delta}
			\end{aligned}
			$$
			is sound for any configuration $\sigma\in \Conf_\std(\Gamma\cup \Delta\cup\{\phi, \psi\})$. 
		\end{proposition}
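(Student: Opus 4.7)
The plan is to unfold the definition of soundness and verify the required implication by a direct definition-chase, combining the global validity of $\phi\to \psi$ with the defining property of standard configurations---that every $\rho$ paired with $\sigma$ admits an evaluation-mirror $\rho'$ agreeing with it on the chosen set of formulas (Definition~\ref{def:Same Effects}). First I would assume the premise $\Gamma\Rightarrow \sigma : \phi, \Delta$ is valid, fix an arbitrary $\rho\in \Eval$ with $\rho\models \bigwedge\Gamma$, and aim to derive $\rho\models (\sigma : \psi)\vee \bigvee\Delta$. Applying the validity of the premise at $\rho$ already gives $\rho\models (\sigma : \phi)\vee \bigvee\Delta$, so the only genuine case to treat is the one in which every formula in $\Delta$ fails under $\rho$ and therefore $\rho\models \sigma : \phi$.

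The remaining step is to transport $\rho\models \sigma : \phi$ to $\rho\models \sigma : \psi$, which is precisely where the hypothesis on $\sigma$ enters. I would invoke the standard-configuration property of $\sigma$ relative to the subset $\{\phi,\psi\}\subseteq \Gamma\cup\Delta\cup\{\phi,\psi\}$ to produce an evaluation $\rho'\in \Eval$ with $(\rho,\sigma)\se_{\{\phi,\psi\}}\rho'$. The defining biconditional of $\se_{\{\phi,\psi\}}$ then turns $\rho\models \sigma : \phi$ into $\rho'\models \phi$; the global validity of $\phi\to \psi$ applied at $\rho'$ gives $\rho'\models \psi$; and the reverse direction of the same biconditional delivers $\rho\models \sigma : \psi$, closing the last case and establishing soundness.

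I do not anticipate a real obstacle; the argument is essentially one application of $\phi\to \psi$ sandwiched between two uses of the $\se$-biconditional. The only subtlety worth underlining is that a single witness $\rho'$ must serve for both $\phi$ and $\psi$ simultaneously, which is exactly what the set-based formulation of $\se_A$ guarantees as soon as $\{\phi,\psi\}$ is contained in the reference set; if $\se$ had been formulated only one formula at a time one would obtain separate witnesses $\rho'_\phi$ and $\rho'_\psi$ with no way to chain $\phi\to \psi$ across them. The strictly larger reference set $\Gamma\cup\Delta\cup\{\phi,\psi\}$ demanded in the hypothesis is slack for the present statement, but it foreshadows the uniform witness required when the same lifting recipe is extended to inference rules whose side formulas must also be interpreted under the label $\sigma$.
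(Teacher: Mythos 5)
Your proof is correct and follows essentially the same route as the paper's: assume the premise, reduce to the case $\rho\models\Gamma$, $\rho\not\models\Delta$, extract $\rho\models\sigma:\phi$, use the standard-configuration witness $\rho'$ with $(\rho,\sigma)\se_A\rho'$ to pass to $\rho'\models\phi$, apply the validity of $\phi\to\psi$, and return via the same biconditional to $\rho\models\sigma:\psi$. Your side remarks — that a single witness $\rho'$ must serve both $\phi$ and $\psi$, and that the reference set $\Gamma\cup\Delta\cup\{\phi,\psi\}$ is slack since $\{\phi,\psi\}$ suffices for this axiom-style rule — are accurate but do not change the argument.
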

		\fi
		
		For a set $A$ of unlabeled formulas, 
		we write $\sigma : A$ to mean the set of labeled formulas $\{\sigma : \phi\ |\ \phi\in A\}$.

		\begin{proposition}[Lifting Process]
			\label{prop:lifting process 2}
			Given a sound rule of the form
			$$
			\begin{aligned}
				\infer[]
				{\Gamma\Rightarrow \Delta}
				{\Gamma_1\Rightarrow \Delta_1
					&...&
					\Gamma_n\Rightarrow \Delta_n}
			\end{aligned}, \mbox{ $n\ge 1$}, 
			$$
			in which all formulas are unlabelled, rule 
			$$
			\begin{aligned}
				\infer[]
				{\sigma : \Gamma\Rightarrow \sigma : \Delta}
				{\sigma : \Gamma_1\Rightarrow \sigma : \Delta_1
					&...&
					\sigma : \Gamma_n\Rightarrow \sigma : \Delta_n}
			\end{aligned}
			$$
			is sound for any configuration $\sigma\in \Conf_\free(\Gamma\cup \Delta\cup \Gamma_1\cup \Delta_1\cup...\cup\Gamma_n\cup\Delta_n)$. 
		\end{proposition}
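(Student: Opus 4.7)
The plan is to unfold the definition of sequent validity and use the two clauses of Definition~\ref{def:free configurations} to transfer validity between the labeled and unlabeled worlds, invoking soundness of the original rule as an intermediate step. Let $A \dddef \Gamma \cup \Delta \cup \bigcup_{i=1}^n (\Gamma_i \cup \Delta_i)$, so that by hypothesis $\sigma \in \Conf_\free(A)$. The argument falls naturally into three stages: first promote validity of the lifted premises to validity of the unlabeled premises; then apply soundness of the original rule; finally transfer validity of the unlabeled conclusion back to validity of the lifted conclusion.

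First, I would prove that each unlabeled premise $\Gamma_i \Rightarrow \Delta_i$ is valid. Take any $\tau \in \Eval$ with $\tau \models \phi$ for every $\phi \in \Gamma_i$. By condition~\ref{item:free configurations cond 2} applied to $\tau$, there exists $\tau' \in \Eval$ with $(\tau', \sigma) \se_A \tau$. Since $\Gamma_i \subseteq A$, the definition of $\se_A$ yields $\tau' \models \sigma : \phi$ for every $\phi \in \Gamma_i$. The assumed validity of the lifted premise $\sigma : \Gamma_i \Rightarrow \sigma : \Delta_i$ then produces $\psi \in \Delta_i$ with $\tau' \models \sigma : \psi$, and transferring back through $\se_A$ yields $\tau \models \psi$, as required. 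Applying the assumed soundness of the original unlabeled rule to the now-validated premises, I conclude that $\Gamma \Rightarrow \Delta$ is valid.

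Finally, I would lift back to the labeled world. Take any $\rho \in \Eval$ with $\rho \models \sigma : \phi$ for every $\phi \in \Gamma$. By condition~\ref{item:free configurations cond 1} applied to $\rho$, there is $\rho_1 \in \Eval$ with $(\rho, \sigma) \se_A \rho_1$, so $\rho_1 \models \phi$ for every $\phi \in \Gamma$. Applying validity of $\Gamma \Rightarrow \Delta$ at $\rho_1$ yields $\psi \in \Delta$ with $\rho_1 \models \psi$, and transferring back through $\se_A$ gives $\rho \models \sigma : \psi$, establishing validity of the lifted conclusion. I expect the main subtlety to be keeping the direction of transfer straight: conditions~\ref{item:free configurations cond 1} and~\ref{item:free configurations cond 2} are genuinely asymmetric, and each is used exactly once but in opposite directions---this asymmetry is precisely why both clauses appear in the definition of free configurations, and why neither clause alone would suffice. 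The remaining bookkeeping over the multi-sets $\Gamma_i, \Delta_i, \Gamma, \Delta$ is routine, since $\se_A$ is defined formula-wise and $A$ was chosen to contain every formula that appears in the rule.
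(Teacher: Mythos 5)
Your proposal is correct and follows essentially the same route as the paper's own proof: both arguments first use Definition~\ref{def:free configurations}-\ref{item:free configurations cond 2} to descend from the validity of the labeled premises to the validity of the unlabeled premises, then invoke the soundness of the original rule, and finally use Definition~\ref{def:free configurations}-\ref{item:free configurations cond 1} to lift the validity of the unlabeled conclusion back to the labeled conclusion. Your observation about the asymmetric use of the two clauses is exactly the point the paper exploits, and your version simply spells out the per-evaluation transfers in more detail.
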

		
		Proposition~\ref{prop:lifting process 2} is proved in Appendix~\ref{section:Other Propositions and Proofs} based on the notion of free configurations defined above. 
		From the proof of Proposition~\ref{prop:lifting process 2}, we can see that when the rule is an axiom, we actually only need Definition~\ref{def:free configurations}-\ref{item:free configurations cond 1}, as stated in the following proposition, where we call a configuration $\sigma$ \emph{standard} w.r.t. $A$ if it satisfies  Definition~\ref{def:free configurations}-\ref{item:free configurations cond 1}, and write $\Conf_\std(A)$ as the set of all standard configurations w.r.t. $A$. 
		
		\begin{proposition}[Simple Version of Proposition~\ref{prop:lifting process 2}]
			\label{prop:lifting process simple}
			Given a sound axiom
			$
			\begin{aligned}
				\infer[]
				{\Gamma\Rightarrow \Delta}
				{}
			\end{aligned}, 
			$
			rule 
			$
			\begin{aligned}
				\infer[]
				{\sigma : \Gamma\Rightarrow \sigma : \Delta}
				{}
			\end{aligned}
			$
			is sound for any standard configuration $\sigma\in \Conf_\std(\Gamma\cup \Delta)$. 
		\end{proposition}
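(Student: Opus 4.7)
The plan is to prove Proposition~\ref{prop:lifting process simple} directly from the definitions, by using a standard configuration as a bridge between the labelled semantics on one side and the unlabelled semantics on the other. Since the axiom has no premises, we only need to transport a universally quantified evaluation through the $\se$ relation once, which is exactly what standardness (Definition~\ref{def:free configurations}-\ref{item:free configurations cond 1}) provides.

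Concretely, I would first fix an arbitrary evaluation $\rho \in \Eval$ and assume $\rho \models \sigma : \phi$ for every $\phi \in \Gamma$; the goal is to produce some $\psi \in \Delta$ with $\rho \models \sigma : \psi$. Since $\sigma \in \Conf_\std(\Gamma \cup \Delta)$, by Definition~\ref{def:free configurations}-\ref{item:free configurations cond 1} applied to $\rho$ there exists $\rho' \in \Eval$ such that $(\rho, \sigma) \se_{\Gamma \cup \Delta} \rho'$. Unfolding Definition~\ref{def:Same Effects}, this means $\rho \models \sigma : \tau$ iff $\rho' \models \tau$ for every $\tau \in \Gamma \cup \Delta$.

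Applying this equivalence on the $\Gamma$-side of the assumption yields $\rho' \models \phi$ for all $\phi \in \Gamma$. The soundness of the original axiom $\Gamma \Rightarrow \Delta$ then gives some $\psi \in \Delta$ with $\rho' \models \psi$. Applying the $\se$ equivalence again, this time on the $\Delta$-side, gives $\rho \models \sigma : \psi$, which is exactly what was needed. Since $\rho$ was arbitrary, the sequent $\sigma : \Gamma \Rightarrow \sigma : \Delta$ is sound.

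There is essentially no obstacle here: everything is a straightforward unfolding of definitions, and the key observation is that for an axiom we never need the ``backward'' direction of Definition~\ref{def:free configurations} (condition~\ref{item:free configurations cond 2}) — there are no premises whose satisfaction we must reflect back from the unlabelled world to the labelled one. This is precisely what makes the simple version weaker in hypothesis than Proposition~\ref{prop:lifting process 2}, and I would note this contrast at the end of the proof to explain why $\Conf_\std \supseteq \Conf_\free$ suffices in the axiom case.
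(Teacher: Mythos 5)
Your proof is correct and is essentially the paper's argument: the paper obtains Proposition~\ref{prop:lifting process simple} as the premise-free specialization of the proof of Proposition~\ref{prop:lifting process 2}, whose second half is exactly your step of using Definition~\ref{def:free configurations}-\ref{item:free configurations cond 1} to obtain $\rho'$ with $(\rho,\sigma)\se_{\Gamma\cup\Delta}\rho'$, transporting the hypothesis to the unlabelled side, applying the soundness of the axiom, and transporting the conclusion back. Your closing remark that condition~\ref{item:free configurations cond 2} is only needed to reflect premises back into the labelled world is precisely the observation the paper uses to justify weakening $\Conf_\free$ to $\Conf_\std$ here.
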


		\section{Case Studies}
		\label{section:Case Studies}
		
		In this section, we give some examples to show how $\LDL$ can be useful in reasoning about programs. 
		In the first example (Section~\ref{section:Example One: A While Program}), 
		we show how a cyclic deduction of a traditional \emph{While} program 
		is carried out. 
		Section~\ref{section:Lifting Process in While Programs} gives an example showing by applying the results obtained in Section~\ref{section:Lifting Process From Program Domains} that how lifting rules from the theory of FODL can be helpful in deriving programs in $\LDL$. 
		In Section~\ref{section:Encoding of Complex Configurations}, we show the powerfulness of $\LDL$ by encoding a type of more complex configurations in the programs of separation logic~\cite{Reynolds02} than in \emph{While} programs. This example shows the potential of $\LDL$ to be applied in real programming languages with the ability to manipulate storage. 
		
		\ifx
		In this section, we show how the proposed $\LDL$ can be used to derive programs step by step based on both symbolic-execution-based reasoning and structure-based reasoning. 
		The first example is a traditional \emph{while} program, with which we display a complete cyclic deduction procedure of a property described by modality $[\cdot]$ in $\LDL$. 
		While the second example is an Esterel program with a more complex non-compositional behaviour. 
		We will explain the reason why it is non-compositional. 
		Due to the limit of space, we will leave its detailed deduction procedure to Appendix~\ref{section:Other Details of Example Two}. 
		
		In the first example (Section~\ref{section:Example One: A While Program}), we show a complete proof procedure of a traditional program --- a \emph{while} program that has already been shown in Table~\ref{table:An Example of Program Structures}. 
		This example gives a clear picture of how cyclic deductions of $\pfDLp$ can be carried out. 
		\fi
		
		\ifx
		for modality $[\cdot]$ and the other for modality $\la \cdot\ra$. 
		
		Due to the limit of space, we only give an example for modality $[\cdot]$. 
		For modality $\la \cdot\ra$ (which involves termination factors) and a more interesting example of Esterel programs, refer to the technique report~\cite{zhang2024parameterizeddynamiclogic}.  
		\fi
		
		\subsection{Cyclic Deduction of A While Program}
		\label{section:Example One: A While Program}
		We prove the property given in Example~\ref{example:DLp specifications}, stated as the following sequent
		$$
		\nu_1\dddef \cdot \Rightarrow v \ge 0 \to \sigma_1 : [\textit{WP}] (s = ((v+1)v)/2).
		$$
		Recalling that $\sigma_1 \dddef \{n\mapsto v, s\mapsto 0\}$ with $v$ a free variable; $\textit{WP}$ is defined as
		$$\begin{aligned}
			\textit{WP}
			\dddef&
			\{
			\textit{while}\
			(n > 0)\
			\textit{do}\
			s := s + n\ ;\
			n := n - 1\
			\textit{end}\
			\}.
		\end{aligned}$$ 
		\ifx
		We prove a property of program $\textit{WP}\in \Prog_\WP$ (Table~\ref{table:An Example of Program Structures}):
		$$
		\begin{aligned}
			\textit{WP}
			\dddef&
			\{
			s := 0\ ;\
			\textit{while}\
			(n > 0)\
			\textit{do}\
			s := s + n\ ;\
			n := n - 1\
			\textit{end}\
			\},
		\end{aligned}
		$$
		described in a sequent of $\LDL$ formulas as follows:
		$$
		\nu_1\dddef \sigma_1 : (n \ge 0) \Rightarrow \sigma_1 : [\textit{WP}] (s = \frac{(n+1)n}{2}), 
		$$
		where $\sigma_1 \dddef \{n\mapsto v, s\mapsto u\}$ with $v, u$ free variables. 
		$\nu_1$ says that given an initial value $n\ge 0$, after executing $\textit{WP}$, $s$ equals to $((n+1)n)/2$, which is the sum of $1 + 2 + ... + n$.
		\fi

		\ifx
		\begin{table}[tb]
			\begin{center}
				\noindent\makebox[\textwidth]{%
					\scalebox{1.0}{
						\begin{tabular}{|c|}
							\toprule
							\\
							$
							\infer[^{(x:=e)}]
							{\Gamma\Rightarrow (x:=e, \sigma)\trans(\ter, \sigma[x\mapsto e_\sigma]), \Delta}
							{
							}
							$
							\ \
							$
							\infer[^{(;\ter)}]
							{\Gamma\Rightarrow (\alpha; \beta, \sigma)\trans(\beta, \sigma'), \Delta}
							{
								\Gamma\Rightarrow (\alpha, \sigma)\trans (\ter, \sigma'), \Delta
							}
							$
							\\
							\\
							$
							\infer[^{(\textit{wh1})}]
							{\Gamma\Rightarrow (\textit{while}\ \phi\ \textit{do}\ \alpha\ \textit{end}, \sigma)\trans(\alpha';\ \textit{while}\ \phi\ \textit{do}\ \alpha\ \textit{end}, \sigma'), \Delta}
							{
								\Gamma\Rightarrow \phi_\sigma, \Delta
								&
								\Gamma\Rightarrow (\alpha, \sigma)\trans(\alpha', \sigma'), \Delta
							}
							$
							\\
							\\
							$
							\infer[^{(\textit{wh1}\ter)}]
							{\Gamma\Rightarrow (\textit{while}\ \phi\ \textit{do}\ \alpha\ \textit{end}, \sigma)\trans(\textit{while}\ \phi\ \textit{do}\ \alpha\ \textit{end}, \sigma'), \Delta}
							{
								\Gamma\Rightarrow \phi_\sigma, \Delta
								&
								\Gamma\Rightarrow (\alpha, \sigma)\trans(\ter, \sigma'), \Delta
							}
							$
							\\
							\\
							$
							\infer[^{(\textit{wh2})}]
							{\Gamma\Rightarrow (\textit{while}\ \phi\ \textit{do}\ \alpha\ \textit{end}, \sigma)\trans(\ter, \sigma), \Delta}
							{
								\Gamma\Rightarrow (\neg\phi)_\sigma, \Delta
							}
							$
							\\
							\ifx
							\midrule
							\multicolumn{1}{|l|}{
								\begin{tabular}{l}
									$^{1}$ if (1) for each $\LDL$ formula $\psi : \sigma$ in $\Gamma$ or $\Delta$, $\psi\in \Fmla$; (2) $\Gamma\Rightarrow \Delta$ is valid.\\
									$^{2}$ where $\alpha$ is neither $\ter$ nor $\abort$. \\
									$^{3}$ where $X \in \ConfV$ does not appear in $\phi$, $\Gamma$ and $\Delta$.
								\end{tabular}
							}
							\\
							\fi
							\bottomrule
						\end{tabular}
					}
				}
			\end{center}
			\caption{Some rewrite rules for a traditional \emph{while} program}
			\label{table:Some rewrite rules for a traditional while program}
		\end{table}
		\fi
		
		\ifx
		A configuration $\sigma$ of a while program is a storage that maps a variable to a value of the integer domain $\mbb{Z}$.
		For example, $\{n \mapsto 1, s\mapsto 0\}$ denotes a configuration that maps $n$ to $1$ and $s$ to $0$.
		Part of the operation semantics of Program $\textit{WP}$ are described as inference rules in Table~\ref{table:Examples of programs, configurations and operational semantics}. 
		The interpretation $\app(\sigma)$ of a configuration $\sigma$ is in the usual sense. 
		For example, $\app(\{n\mapsto 1, s\mapsto 0\})$ is the mapping that maps $n, s$ to $1, 0$ respectively, and maps other variables to themselves.
		We have that, for instance, $\app(\{n\mapsto 1, s\mapsto 0\})(n\ge 0) = 1\ge 0$ by  assigning $n$ to $1$ in $n\ge 0$. 
		\fi

		\ifx
		It is easy to see that the operational semantics of a program following these rewrite rules satisfies the two program properties \Romann{1} and \Romann{2} in Section~\ref{section:Construction of LDL}.
		Specifically, 1. any assignment, sequence program or while program can proceed following these rewrite rules; and 2. any program that proceeds by these rewrite rules has finite branches.
		\fi

		\begin{table}[tb]
			\noindent\makebox[\textwidth]{%
				\scalebox{0.9}{
					\begin{tabular}{l|l}
						\toprule
						\begin{tabular}{c}
							\begin{tikzpicture}[->,>=stealth', node distance=3cm]
								\node[draw=none] (txt2) {
									$
										\infer[^{(\to)}]
										{\mbox{$\nu_1$: 1}}
										{
											\infer[^{(\Sub)}]
											{2}
											{
												\infer[^{(\textit{Cut})}]
												{3}
												{
													\infer[^{(\textit{Wk} R)}]
													{4}
													{
														\infer[^{(\textit{Ter})}]
														{19}
														{}
													}
													&    
													\infer[^{(\vee L)}]
													{5}
													{\infer[^{([\alpha])}]
														{6}
														{
															\infer[^{([\alpha])}]
															{11}
															{
																\infer[^{(\textit{Cut})}]
																{12}
																{
																	\infer[^{(\textit{Wk} L)}]
																	{13}
																	{
																		\infer[^{(\Sub)}]
																		{16}
																		{
																			\infer[^{(\textit{Wk} L)}]
																			{17}
																			{18}
																		}
																	}
																	&
																	\infer[^{(\textit{Wk} R)}]
																	{14}
																	{15}
																}
															}
														}
														&
														\infer[^{([\alpha])}]
														{7}
														{
															\infer[^{([\ter])}]
															{8}
															{\infer[^{(\textit{Int})}]
																{9}
																{
																	\infer[^{(\textit{Ter})}]
																	{10}
																	{
																	}
																}
															}
														}
													}
												}
											}
										}
									$
								};
								

								\path
								;
								
								
								\draw[dotted,thick,red] ([xshift=-1.3cm, yshift=1.75cm]txt2.center) -- 
								([xshift=-1.6cm, yshift=1.75cm]txt2.center) --
								([xshift=-1.6cm, yshift=-1cm]txt2.center) --
								([xshift=-1.1cm, yshift=-1cm]txt2.center);

							\end{tikzpicture}
						\end{tabular}
						&
						\begin{tabular}{l}
							Definitions of other symbols:
							\\
							$\alpha_1\dddef s := s + n;\ n := n - 1$
							\\
							$\phi_1 \dddef (s = ((v + 1)v)/2)$
							\\
							$\sigma_1(v) \dddef \{n\mapsto v, s\mapsto 0\}$
							\\
							$\sigma_2(v) \dddef \{n\mapsto v, s\mapsto 0\}$
							\\
							$\sigma_3(v)\dddef \{n\mapsto v - m,
							s\mapsto (2v - m + 1)m/2\}$
							\\
							$\sigma_4(v) \dddef \{n\mapsto v - m,
							s\mapsto (2v - (m + 1) + 1)(m + 1)/2\}$
							\\
							$\sigma_5(v)\dddef \{n\mapsto v - (m + 1),
							s\mapsto (2v - (m + 1) + 1)(m + 1)/2\}$
							\\
						\end{tabular}
						\\
						\midrule
						\multicolumn{2}{c}{
							\begin{tabular}{l l l l}
								1: & $\cdot$ & $\Rightarrow$ & $v\ge 0\to \sigma_1(v) : [\textit{while}\
								(n > 0)\
								\textit{do}\
								\alpha_1\
								\textit{end}\ ]\phi_1$
								\\
								2: & $v\ge 0$ & $\Rightarrow$ & $\sigma_1(v) : [\textit{while}\
								(n > 0)\
								\textit{do}\
								\alpha_1\
								\textit{end}\ ]\phi_1$
								\\
								3: & $v - m\ge 0$ & $\Rightarrow$ & $\mbox{\ul{$\sigma_3(v) : [\textit{while}\
										(n > 0)\
										\textit{do}\
										\alpha_1\
										\textit{end}\ ]\phi_1$}}$
								\\
								4: & $v - m\ge 0$ & $\Rightarrow$ & 
								$\left\{\begin{array}{l}
									\mbox{$\sigma_3(v) : [\textit{while}\
										(n > 0)\
										\textit{do}\
										\alpha_1\
										\textit{end}\ ]\phi_1$},\\
									v - m > 0\vee v - m\le 0
								\end{array}\right\}$
								\\
								19: & $v - m\ge 0$ & $\Rightarrow$ & $v - m> 0 \vee v - m\le 0$
								\\
								\midrule
								5: & $v - m\ge 0, v - m > 0\vee v - m\le 0$ & $\Rightarrow$ & $\mbox{\ul{$\sigma_3(v) : [\textit{while}\
										(n > 0)\
										\textit{do}\
										\alpha_1\
										\textit{end}\ ]\phi_1$}}$
								\\
								6: & $v - m\ge 0, v - m > 0$ & $\Rightarrow$ & $\mbox{\ul{$\sigma_3(v) : [\textit{while}\
										(n > 0)\
										\textit{do}\
										\alpha_1\
										\textit{end}\ ]\phi_1$}}$
								\\
								11: & $v - m\ge 0, v - m > 0$ & $\Rightarrow$ & $\mbox{\ul{$\sigma_4(v) : [n := n - 1;\ \textit{while}\
										(n > 0)\
										\textit{do}\
										\alpha_1\
										\textit{end}\ ]\phi_1$}}$
								\\
								12: & $v - m\ge 0, v - m > 0$ & $\Rightarrow$ & $\mbox{\ul{$\sigma_5(v) : [\textit{while}\
										(n > 0)\
										\textit{do}\
										\alpha_1\
										\textit{end}\ ]\phi_1$}}$
								\\
								13: & $\left\{\begin{array}{l}
									v - m\ge 0, v - m > 0,\\
									v - (m + 1)\ge -1, v - (m + 1) \ge 0
								\end{array}\right\}$ & $\Rightarrow$ & $\mbox{\ul{$\sigma_5(v) : [\textit{while}\
										(n > 0)\
										\textit{do}\
										\alpha_1\
										\textit{end}\ ]\phi_1$}}$
								\\
								16: & $v - (m + 1)\ge -1, v - (m + 1) \ge 0$ & $\Rightarrow$ & $\mbox{\ul{$\sigma_5(v) : [\textit{while}\
										(n > 0)\
										\textit{do}\
										\alpha_1\
										\textit{end}\ ]\phi_1$}}$
								\\
								17: & $v - m\ge -1, v - m \ge 0$ & $\Rightarrow$ & $\mbox{\ul{$\sigma_3(v) : [\textit{while}\
										(n > 0)\
										\textit{do}\
										\alpha_1\
										\textit{end}\ ]\phi_1$}}$
								\\
								18: & $v - m\ge 0$ & $\Rightarrow$ & $\mbox{\ul{$\sigma_3(v) : [\textit{while}\
										(n > 0)\
										\textit{do}\
										\alpha_1\
										\textit{end}\ ]\phi_1$}}$
								\\
								\midrule
								14: & $v - m\ge 0, v - m > 0$ & $\Rightarrow$ & 
								$\left\{\begin{array}{l}
									\mbox{$\sigma_5(v) : [\textit{while}\
										(n > 0)\
										\textit{do}\
										\alpha_1\
										\textit{end}\ ]\phi_1$},\\
									v - (m + 1)\ge -1, v - (m + 1) \ge 0
								\end{array}\right\}$
								\\
								15: & $v - m\ge 0, v - m > 0$ & $\Rightarrow$ & 
								$
								v - (m + 1)\ge -1, v - (m + 1) \ge 0
								$
								\\
								\midrule
								7: & $v - m\ge 0, v - m \le 0$ & $\Rightarrow$ & $\sigma_3(v) : [\textit{while}\
								(n > 0)\
								\textit{do}\
								\alpha_1\
								\textit{end}\ ]\phi_1$
								\\
								8: & $v - m\ge 0, v - m \le 0$ & $\Rightarrow$ &  $\sigma_3(v) : [\ter]\phi_1$
								\\
								9: & $v - m\ge 0, v - m \le 0$ & $\Rightarrow$ & $\sigma_3(v) : (s=((v+1)v)/2)$
								\\
								10: & $v - m\ge 0, v - m \le 0$ & $\Rightarrow$ &  $((2v -m + 1)m) /2=((v+1)v)/2)$
								\\
							\end{tabular}
						}
				\\
				\bottomrule
			\end{tabular}
		}
	}
	\caption{Derivations of Property $\nu_1$}
	\label{figure:The derivation of Example 1}
\end{table}

Table~\ref{figure:The derivation of Example 1} shows the derivation of this formula.
We omit all sub-proof-procedures in the proof system $P_{\Oper_\WP,\Terminate_\WP}$ derived by the inference rules in 
Table~\ref{table:An Example of Inference Rules for Program Behaviours} when deriving via rule $([\alpha])$. 
We write term $t(x)$ if $x$ is a variable that has free occurrences in $t$. 
Non-primitive rules $(\to R)$ and $(\vee L)$ are can be derived by the rules for $\neg$ and $\wedge$ accordingly. 
For example, rule $(\vee L)$ can be derived as follows:
$$
\begin{aligned}
	\infer[^{(\neg L)}]
	{\Gamma, \phi\vee \psi\Rightarrow \Delta}
	{
		\infer[^{(\wedge R)}]
		{\Gamma\Rightarrow (\neg\phi) \wedge (\neg\psi), \Delta}
		{
			\infer[^{(\neg R)}]
			{\Gamma\Rightarrow \neg\phi, \Delta}
			{\Gamma, \phi\Rightarrow \Delta}
			&
			\infer[^{(\neg R)}]
			{\Gamma\Rightarrow \neg\psi, \Delta}
			{\Gamma, \psi\Rightarrow \Delta}
		}
	}
\end{aligned}
.
$$

The derivation from sequent 2 to 3 is according to rule
$$
\begin{aligned}
	\infer[^{(\Sub)}]
	{\Gamma[e/x]\Rightarrow \Delta[e/x]}
	{\Gamma\Rightarrow\Delta}, 
\end{aligned}
$$
where function $(\cdot)[e/x]$ is an instantiation of abstract substitution defined  in Definition~\ref{def:Abstract Substitution}. 
Recalling that for any formula $\phi$, $\phi[e/x]$ substitutes each free variable $x$ of $\phi$ with term $e$ (Section~\ref{section:Examples of Term Structures}). 
Its formal definition and the proof of instantiation are given in Appendix~\ref{section:Formal Definitions of While Programs}. 
We observe that sequent 2 can be writen as: 
$$(v - m\ge 0)[0/m]\Rightarrow (\sigma_3(v) : [\textit{while}\
(n > 0)\
\textit{do}\
\alpha_1\
\textit{end}\ ]\phi_1)[0/m], $$
as a special case of sequent 3. 
Intuitively, sequent 3 captures the situation after the $m$th loop ($m\ge 0$) of program $\WP$. 
This step is crucial as starting from sequent 3, we can find a bud node --- 18 --- that is identical to node 3. 
\ifx
to help constructing a bud for each potential infinite proof branch. 
Here, the substitution is instantiated as $t[e/x]$, which returns a term by replacing each free variable $x$ of $t$ with term $e$. 
Its formal definition is given in Appendix~\ref{section:Formal Definitions of While Programs}. 

The derivation from sequent 3 to 4 is according to rule
$$
\begin{aligned}
	\infer[^{(\Extra:\sigma\cfeq)}]
	{\Gamma\Rightarrow \sigma : [\alpha]\phi, \Delta}
	{\Gamma\Rightarrow \sigma' : [\alpha]\phi, \Delta}
\end{aligned}, \mbox{if $\sigma\cfeq \sigma'$}, 
$$
where 
equivalence $\sigma\cfeq\sigma'$ means that $\sigma$ and $\sigma'$ have the same affect on formulas, whose definition is given in Appendix~\ref{section:Formal Definitions of While Programs}. 
In this example, intuitively, we see that $\sigma_3$ actually stores the same values for $n$ and $s$ as $\sigma_2$, because by replacing $n$ with its stored value $v$ in the expression $((n+v+1)(n-v))/2$, we have  $((v+v+1)(v-v))/2 = 0$. 
So $\sigma_3$ maps $s$ to $0$ as $\sigma_2$, which means that for any formula $\phi$, we have $\app_\WP(\sigma_2, \phi)\equiv \app_\WP(\sigma_3, \phi)$. 
This observation, however, relies on the explicit definition of configurations $\Conf_\WP$ and interpretation $\app_\WP$. 
Configuration $\sigma_3(v)$ constructed in node 4 is crucial, as starting from it, we can find a bud node --- 17 ---
that is identical to node 5. 
\fi
The derivation from sequent 3 to \{4, 5\} provides a lemma: 
$v - m > 0\vee v - m\le 0$, which is obvious valid. 
Sequent 16 indicates the end of the $(m+1)$th loop of program $\WP$. 
From node 12 to 16, we rewrite the formulas on the left side into an obvious logical equivalent form in order to apply rule $(\Sub)$. 
From sequent 16 to 17 rule $(\Sub)$ is applied, with 16 can be written as:
$$
(v - m \ge -1)[m + 1/m], (v - m\ge 0)[m + 1/m]\Rightarrow (\sigma_3 : [\textit{while}\
(n > 0)\
\textit{do}\
\alpha_1\
\textit{end}\ ]\phi_1)[m + 1/m].
$$



The whole proof tree is a cyclic preproof because the only derivation path: $1,2,3,5, 6, 11, 12, 13, 16, 17, 18, 3, ...$
has a progressive derivation trace whose elements are underlined in Table~\ref{figure:The derivation of Example 1}.


The configurations in $\LDL$ subsume the concept of `updates' in dynamic logics, which were adopted in work like~\cite{Beckert13}. 
In this example, if we allow a configuration to be a meta variable, say $X$, 
then it is just an update, which has no explicit structures but carries a series of ``delayed substitutions''. 
For instance, we can have a configuration $X[0/s][1/x]$ that carries two substitutions in sequence.
When applying to formulas, these substitutions are passed to formulas:  
$\app(X[0/s][1/x], s + x + 1 > 0) \dddef ((s + x + 1 > 0)[0/s])[1/x] \equiv (0 + x + 1 > 0)[1/x] \equiv 0 + 1 + 1 > 0$ through a well-defined $\app$.  
From this we see that one of the advantages of $\LDL$ is that it allows explicit structures of program status such as stores or stacks we have seen in Section~\ref{section:Examples of Term Structures}. 

Appendix~\ref{section:Example Two: A Synchronous Loop Program} introduces another example of derivations of a program with implicit loop structures, 
which can better highlight the advantages brought by the cyclic proof framework of $\LDL$. 

\ifx
[This paragraph needs to be changed later???]
The semantics of a \emph{while} program supports a compositional reasoning based on its syntactic structures. 
For example, we can have a sound deduction 
$
\begin{aligned}
	\infer[^{([;])}]
	{n\ge 0\Rightarrow [s:=0 ; \alpha_1] \phi_1}
	{n\ge 0\Rightarrow [s:=0][\alpha_1]\phi_1}  
\end{aligned}
$
from node $1$ in a special dynamic logic for \emph{while} programs (e.g.~\cite{Beckert2016}). 
This derivation is compositional in the sense that it dissolves the structure $s := 0; \alpha_1$ into its sub-structures $s:=0$ and $\alpha_1$ in two different modalities $[\cdot ]$. 
In $\LDL$, we can lift it as a labelled derivation based on configuration $\sigma_1$:
$
\begin{aligned}
	\infer[^{(\sigma\textit{Lif})}]
	{\sigma_1: n\ge 0\Rightarrow \sigma_1: [s:=0 ; \alpha_1] \phi_1}
	{\sigma_1: n\ge 0\Rightarrow \sigma_1: [s:=0][\alpha_1]\phi_1}
\end{aligned}
$, 
by applying rule $(\sigma\textit{Lif})$ and the fact that $\sigma_1$ is a free configuration. 
Echoing Section~\ref{section:Summery}, this example demonstrates that the framework of $\LDL$ is compatible with the existing `unlabeled' rules through a lifting process. 
\fi

\ifx
Readers might notice that different configurations can be chosen. In this example, we can also prove $P_1 : X$ with $X$
a free configuration variable.
In this case, the configuration is just treated as a state from the set of all variables to values.
A cyclic preproof can also be constructed by choosing a suitable configuration with parameters.
This is a good example showing that a configuration in $\LDL$ can be of an arbitrary structure, not just a state.
\fi

\subsection{Lifting Structural Rules From FODL}
\label{section:Lifting Process in While Programs}

We give two examples from the theory of FODL (cf.~\cite{Harel00}) to illustrate how existing rules in FODL 
can be helpful for deriving compositional programs in $\LDL$. 


\ifx
\textbf{Example 1:} For some derived result in 

In the following, we assume a special domain $\TA_X$, in which for each unlabeled formula $\phi$ discussed, $\rho\models \phi$ is well defined. 

In Section~\ref{section:Example One: A While Program}, we have seen how a \emph{While} program specification can be derived in $\pfDLp$ based on proof system $(P_{\Oper, \Terminate})_\WP$ for \emph{While} program behaviours. 
In this subsection, we show that lifting processes are helpful for  converting existed dynamic logic rules into labeled forms in $\LDL$. 
\fi

In FODL, consider an axiom
$$
\infer[^{([\seq])}]
{[\alpha][\beta]\phi\Rightarrow [\alpha\seq \beta]\phi}
{},
$$
which comes from the valid formula $[\alpha][\beta]\phi\rightarrow [\alpha\seq \beta]\phi$, as a useful rule appearing in many dynamic logic calculi that are based on FODL (e.g.~\cite{Beckert2016}). 
By Proposition~\ref{prop:lifting process simple}, we lift $([;])$ 
as 
$$
\infer[]
{\sigma : [\alpha][\beta]\phi\Rightarrow \sigma : [\alpha\seq \beta]\phi}
{}
$$
for any standard configuration $\sigma\in \Conf_\std(\{[\alpha][\beta]\phi, [\alpha\seq\beta]\phi\})$. 
And by Lemma~\ref{lemma:from logical consequence to sequent derivation} (Appendix~\ref{section:Other Propositions and Proofs}), we obtain a labeled rule
$$
\infer[^{(\sigma[\seq])}]
{\Gamma\Rightarrow \sigma : [\alpha\seq\beta]\phi, \Delta}
{\Gamma\Rightarrow \sigma : [\alpha][\beta]\phi, \Delta}, 
$$
which provides a compositional reasoning for sequential programs in $\pfDLp$ as an additional rule. 
It is useful when verifying a property like $\Gamma, \sigma' : [\beta]\phi\Rightarrow \sigma : [\alpha\seq \beta]\phi, \Delta$, in which we might finish the proof by only symbolic executing program $\alpha$ as: 
$$
\infer[^{(\sigma[\seq])}]
{\Gamma, \sigma' : [\beta]\phi\Rightarrow \sigma : [\alpha\seq\beta]\phi, \Delta}
{
	\infer[^{([\alpha])}]
	{
		\Gamma, \sigma' : [\beta]\phi\Rightarrow \sigma : [\alpha][\beta]\phi, \Delta
	}
	{
		\infer*[]
		{...}
		{
			\infer[^{([\alpha])}]
			{...}
			{
				\infer[^{([\ter])}]
				{\Gamma, \sigma' : [\beta]\phi\Rightarrow \sigma' : [\ter][\beta]\phi, \Delta}
				{
					\infer[^{(\textit{ax})}]
					{\Gamma, \sigma' : [\beta]\phi\Rightarrow \sigma' : [\beta]\phi, \Delta}
					{}
				}
			}
		}
	}
}, 
$$
especially when verifying program $\beta$ can be very costly. 

Another example is rule
$$
\infer[^{([\textit{Gen}])}]
{[\alpha]\phi\Rightarrow [\alpha]\psi}
{\phi\Rightarrow \psi}
$$
for generating modality $[\cdot]$, 
which were used for deriving the structural rule of loop programs in FODL (cf.~\cite{Harel00}). 
By Proposition~\ref{prop:lifting process 2}, we lift $([\textit{Gen}])$ as the following rule
$$
\infer[^{(\sigma [\textit{Gen}])}]
{\sigma : [\alpha]\phi\Rightarrow \sigma : [\alpha]\psi}
{\sigma : \phi\Rightarrow \sigma : \psi}, 
$$
where $\sigma\in \Conf_\free(\{[\alpha]\phi, [\alpha]\psi, \phi, \psi\})$. 
It is useful, for example, when deriving a property
$\sigma : [\alpha]\la\beta\ra \phi\Rightarrow \sigma : [\alpha]\la\beta'\ra\psi$, where 
we can skip the verification of program $\alpha$ as follows: 
$$
\infer[^{(\sigma [\textit{Gen}])}]
{\sigma : [\alpha]\la\beta\ra \phi\Rightarrow \sigma : [\alpha]\la\beta'\ra\psi}
{\sigma : \la\beta\ra\phi\Rightarrow \sigma : \la\beta'\ra\psi} 
$$
and directly focus on verifying program $\beta'$.

\ifx
\subsection{Embedding of First-Order Dynamic Logic}
\label{section:Embedding First-Order Dynamic Logic}
First-Order Dynamic Logic (FODL) is the very base theory for many dynamic logic variations such as~\cite{???}. 
We embed the theory of FODL into $\LDL$ by proposing a set of rules for program behaviours of regular expressions with tests based on its semantics. 
Moreover, as mentioned in Section~\ref{section:Summery}, we show that the generality of $\LDL$ allows the adaption of the proof rules of FODL into $\LDL$ through a lifting process. 
This means, by choosing a suitable configuration, one can embed the whole theory of a logic into $\LDL$, for a better support of both symbolic-execution-based reasoning based on behaviours and compositional reasoning based on syntax. 

\begin{table}[tb]
	\begin{center}
		\noindent\makebox[\textwidth]{%
			\scalebox{0.9}{
				\begin{tabular}{c}
					\toprule
					\multicolumn{1}{l}{Syntax of FODL:} 
					\\
					$
					\begin{aligned}
						\alpha\dddef&\ x := e\ |\ \psi?\ |\ \alpha\seq\alpha\ |\ \alpha\cho\alpha\ |\ \alpha^*,\\
						\psi\dddef&\ p\ |\ \neg\psi\ |\ \psi\wedge \psi\ |\ \forall x.\psi\\
						\phi\dddef&\ \psi\ |\ \neg\phi\ |\ \phi\wedge \phi\ |\ [\alpha]\phi.
					\end{aligned}
					$
					\\
					\midrule
					\multicolumn{1}{l}{Proof System of FODL:}
					\\
					\begin{tabular}{l | l}
						\\
						Axioms: & Inference Rules:
						\\
						\begin{tabular}{l}
							(1) Axioms for propositional logic\\
							(2) $[x := e]\phi\leftrightarrow \phi[e/x]$\\
							(3) $[\phi?]\leftrightarrow (\phi\to \psi)$\\
							(4) $[\alpha](\phi\to \psi) \to ([\alpha]\phi\to [\alpha]\psi)$\\
							(5) $[\alpha](\phi\wedge \psi) \leftrightarrow [\alpha]\phi\wedge [\alpha]\psi$\\
							(6) $[\alpha\cho \beta]\phi \leftrightarrow [\alpha]\phi \wedge [\beta]\phi$\\
							(7) $[\alpha\seq \beta]\phi\leftrightarrow [\alpha][\beta]\phi$\\
							(8) $\phi\wedge [\alpha][\alpha^*]\phi \leftrightarrow [\alpha^*]\phi$\\
							(9) $\phi\wedge [\alpha^*](\phi\to [\alpha]\phi)\to [\alpha^*]\phi$
						\end{tabular}
						&
						\begin{tabular}{l}
							($\textit{MP}$)\ \ $\begin{aligned}
								\infer[]
								{\psi}
								{\phi & \phi\to \psi}
							\end{aligned}$\\
							($\forall$)\ \ $\begin{aligned}
								\infer[]
								{\forall x.\phi}
								{\phi}
							\end{aligned}$\\
							($[\textit{Gen}]$)\ \ $\begin{aligned}
								\infer[]
								{[\alpha]\phi}
								{\phi}
							\end{aligned}$\\
							($\la\textit{Cov}\ra$)\ \ $\begin{aligned}
								\infer[]
								{\phi(n)\to \la\alpha^*\ra\phi(0)}
								{\phi(n+1)\to \la\alpha\ra\phi(n)}
							\end{aligned}$
						\end{tabular}
						\\
					\end{tabular}
					\\
					\bottomrule
				\end{tabular}
			}
		}
	\end{center}
	\caption{Theory of First-Order Dynamic Logic}
	\label{table:Theory of First-Order Dynamic Logic}
\end{table}

Table~\ref{table:Theory of First-Order Dynamic Logic} gives the syntax and the proof system of FODL. 
The program models $\alpha$ are regular expressions with tests consisting of primitive terms assignments $x := e$, tests $\psi?$, and compositional terms of primitive terms using operators sequence $\seq$, choice $\cho$ and star $*$. 
Intuitively, a test $\psi?$ proceeds if $\psi$ is true and aborts otherwise; sequential program $\alpha\seq \beta$ proceeds $\alpha$ first, after $\alpha$ terminates, $\beta$ proceeds; choice program $\alpha\cup \beta$ either proceeds $\alpha$ or $\beta$ non-deterministically; star program $\alpha^*$ performs $\alpha$ for a random finite number of times. 
The proof rules of FODL are in a Hilbert-style, where each node is a formula, not a sequent. 
Formal definitions and more details of FODL can be found in references like~\cite{???}.

\begin{table}[tbp]
	\begin{center}
		\noindent\makebox[\textwidth]{%
			\scalebox{0.9}{
				\begin{tabular}{c}
					\toprule
					$
					\infer[^{(x:=e)}]
					{\Gamma\Rightarrow (x := e, \sigma)\trans (\ter, \sigma[x\mapsto \sigma(e)]), \Delta}
					{}
					$
					\ \ 
					$
					\infer[^{(\seq 1)}]
					{\Gamma\Rightarrow (\alpha\seq\beta, \sigma)\trans (\beta, \sigma'), \Delta}
					{\Gamma\Rightarrow (\alpha, \sigma)\trans (\ter, \sigma'), \Delta}
					$
					\ \ 
					$
					\infer[^{(\seq 2)}]
					{\Gamma\Rightarrow (\alpha\seq\beta, \sigma)\trans (\alpha'\seq \beta, \sigma'), \Delta}
					{\Gamma\Rightarrow (\alpha, \sigma)\trans (\alpha', \sigma'), \Delta}
					$
					\\
					$
					\infer[^{(\cho 1)}]
					{\Gamma\Rightarrow (\alpha\cho\beta, \sigma)\trans (\alpha', \sigma'), \Delta}
					{\Gamma\Rightarrow (\alpha, \sigma)\trans (\alpha', \sigma'), \Delta}
					$
					\ \ 
					$
					\infer[^{(\cho 2)}]
					{\Gamma\Rightarrow (\alpha\cho\beta, \sigma)\trans (\beta', \sigma'), \Delta}
					{\Gamma\Rightarrow (\beta, \sigma)\trans (\beta', \sigma'), \Delta}
					$
					\ \ 
					$
					\infer[^{(* 1)}]
					{\Gamma\Rightarrow (\alpha^*, \sigma)\trans (\ter, \sigma'), \Delta}
					{\Gamma\Rightarrow (\alpha, \sigma)\trans (\ter, \sigma'), \Delta}
					$
					\ \ 
					$
					\infer[^{(* 2)}]
					{\Gamma\Rightarrow (\alpha^*, \sigma)\trans (\alpha'\seq \alpha^*, \sigma'), \Delta}
					{\Gamma\Rightarrow (\alpha, \sigma)\trans (\alpha', \sigma'), \Delta}
					$
					\\
					\bottomrule
				\end{tabular}
			}
		}
	\end{center}
	\caption{Rules for Program Behaviours of FODL}
	\label{table:Rules for Program Behaviours of FODL}
\end{table}

The program behaviours $\Oper_{\fodl}$ of regular expressions with tests of FODL are defined as inference rules shown in Table~\ref{table:Rules for Program Behaviours of FODL}. 
We adopt similar definitions as in \emph{While} programs (Appendix~\ref{section:Formal Definitions of While Programs}) for evaluations $\Eval_\fodl$, configurations $\Conf_\fodl$ and interpretations $\app_\fodl$ of FODL. 
Their formal definitions only differ from those of \emph{While} programs in the structures of programs, and so we omit them. 

\begin{proposition}
	In FODL, given any formulas $\phi, \psi$, if $\phi\to \psi$ is valid, 
	then rule 
	$$
	\begin{aligned}
		\infer[]
		{\Gamma\Rightarrow \sigma : \psi, \Delta}
		{\Gamma\Rightarrow \sigma : \phi, \Delta}
	\end{aligned}
	$$
	is sound for any configuration $\sigma\in \Conf_\fodl$ and contexts $\Gamma, \Delta$. 
\end{proposition}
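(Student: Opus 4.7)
The plan is to reduce this FODL-specific proposition to the same semantic reasoning that underlies Proposition~\ref{prop:lifting process simple}, by showing that every configuration $\sigma \in \Conf_\fodl$ is standard with respect to any finite set of formulas. Once this is established, the passage from the valid logical consequence $\phi\to\psi$ to the soundness of the labeled rule follows directly.

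The first step is an auxiliary lemma: for every $\sigma \in \Conf_\fodl$ and every $\rho \in \Eval_\fodl$, there is a $\rho' \in \Eval_\fodl$ such that $\rho(\app_\fodl(\sigma, \chi)) \equiv \rho'(\chi)$ for every formula $\chi$. The construction of $\rho'$ mirrors the one illustrated in Example~\ref{example:normal configurations}: for a store $\sigma = \{x_1 \mapsto e_1, \ldots, x_n \mapsto e_n\}$, set $\rho'(x_i) \dddef \rho(e_i)$ (evaluating the $e_i$ in the order dictated by $\app_\fodl$) and $\rho'(y) \dddef \rho(y)$ on variables $y$ outside the domain of $\sigma$. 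The required equivalence is then proved by structural induction on $\chi$, using that both $\rho$ and $\app_\fodl(\sigma, \cdot)$ are structural and that bound variables (e.g.\ under $\forall x$, $\exists x$) are left untouched by the substitution induced by $\sigma$. This establishes $\Conf_\fodl \subseteq \Conf_\std(A)$ for every $A$, i.e.\ that the function $\app_\fodl(\sigma, \cdot)$ is a substitution in the sense of Definition~\ref{def:Abstract Substitution}.

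The second step is the soundness argument proper. Fix $\sigma$, $\Gamma$, $\Delta$, and take any $\rho \in \Eval_\fodl$; we show $\rho \models \mfr{P}(\Gamma \Rightarrow \sigma : \psi, \Delta)$. If $\rho \not\models \gamma$ for some $\gamma \in \Gamma$, or $\rho \models \delta$ for some $\delta \in \Delta$, the conclusion is satisfied trivially. Otherwise, the validity of the premise $\Gamma \Rightarrow \sigma : \phi, \Delta$ yields $\rho \models \sigma : \phi$, i.e.\ $\rho \models \app_\fodl(\sigma, \phi)$. Choose the $\rho'$ supplied by the lemma; then $\rho' \models \phi$, and the validity of $\phi \to \psi$ gives $\rho' \models \psi$. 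Applying the lemma in reverse yields $\rho \models \app_\fodl(\sigma, \psi)$, i.e.\ $\rho \models \sigma : \psi$, as required. Note that only condition~(1) of Definition~\ref{def:free configurations} is used here, which is exactly why a valid logical consequence (rather than a general sound rule) suffices to lift without invoking the stronger "free'' property needed by Proposition~\ref{prop:lifting process 2}.

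The main obstacle lies in the auxiliary lemma of step~one. While the idea is straightforward, a careful induction must handle the interaction of $\app_\fodl$ with quantifier binders, with the nested dynamic parts $[\alpha]\chi$ (which, by the semantics of Definition~\ref{def:Semantics of Labelled LDL Formulas}, are evaluated along program behaviours and thus require that $\rho$ and $\rho'$ agree on the relevant closed states), and with possible dependencies among the entries of $\sigma$. The analogous result for \textit{While} programs is already in place in the appendix, so the argument should transfer modulo the syntactic differences between $\Prog_\WP$ and the regular programs of FODL.
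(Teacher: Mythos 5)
Your proposal is correct and matches the paper's own treatment: the paper proves the generic version of this lifting result by exactly your second step (pass from $\rho\models\sigma:\phi$ to some $\rho'\models\phi$ via the standardness of $\sigma$, apply the validity of $\phi\to\psi$, and return), and it asserts elsewhere that store-like configurations such as those of $\Conf_\fodl$ are standard, which is precisely your auxiliary lemma. Your observation that only condition~(1) of Definition~\ref{def:free configurations} is needed here also agrees with the paper's remark accompanying Proposition~\ref{prop:lifting process simple}.
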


\begin{proposition}
	In FODL, given a sound rule of the form
	$$
	\begin{aligned}
		\infer[]
		{\Gamma\Rightarrow \Delta}
		{\Gamma'\Rightarrow \Delta'}
	\end{aligned}, 
	$$
	then rule 
	$$
	\begin{aligned}
		\infer[]
		{\sigma : \Gamma\Rightarrow \sigma : \Delta}
		{\sigma : \Gamma'\Rightarrow \sigma : \Delta'}
	\end{aligned}, 
	$$
	is sound for any configuration $\sigma\in \Conf_\free(\Gamma\cup \Delta\cup \Gamma'\cup \Delta')$. 
\end{proposition}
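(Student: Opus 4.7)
The proposition is a direct analogue of Proposition~\ref{prop:lifting process 2} specialised to the program domain $\Domain{\fodl}$ with a single premise, so my plan is to reuse its proof template. The core idea is that a configuration $\sigma$ being free with respect to $A \dddef \Gamma\cup \Delta\cup \Gamma'\cup \Delta'$ gives a two-way bridge between the $\LDL$ semantics of labelled formulas and the FODL semantics of their unlabelled counterparts: condition 1 of Definition~\ref{def:free configurations} lets us descend from labelled to unlabelled, and condition 2 lets us ascend in the reverse direction, both preserving the truth values of every formula in $A$.

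I would proceed in three stages. First, assuming the labelled premise $\models \sigma:\Gamma' \Rightarrow \sigma:\Delta'$ is valid in $\LDL$, I would derive that the unlabelled premise $\Gamma' \Rightarrow \Delta'$ is valid in FODL. Given any $\rho'\in\Eval_\fodl$ with $\rho'\models \phi$ for all $\phi\in\Gamma'$, I invoke freeness condition 2 to obtain $\rho\in\Eval_\fodl$ with $(\rho,\sigma)\se_A \rho'$; by Definition~\ref{def:Same Effects} this yields $\rho\models \sigma:\phi$ for every $\phi\in\Gamma'$; the assumed labelled validity produces some $\psi\in \Delta'$ with $\rho\models \sigma:\psi$; and a second use of $\se_A$ transfers this back to $\rho'\models \psi$. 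Second, invoking the hypothesised soundness of the original unlabelled rule, I obtain $\models \Gamma\Rightarrow \Delta$ in FODL. Third, I lift this unlabelled validity back up to $\models \sigma:\Gamma\Rightarrow \sigma:\Delta$: given any $\rho\in\Eval_\fodl$ with $\rho\models \sigma:\phi$ for all $\phi\in\Gamma$, I apply freeness condition 1 to obtain $\rho'$ with $(\rho,\sigma)\se_A \rho'$; I transfer by $\se_A$ to get $\rho'\models \Gamma$; the unlabelled validity gives some $\psi\in\Delta$ with $\rho'\models \psi$; and one more application of $\se_A$ yields $\rho\models \sigma:\psi$.

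The step I expect to require the most care is the first, the ``descent'' from the labelled premise to the unlabelled one, because it is exactly here that freeness condition 2 is indispensable (condition 1 alone, as in the axiom case of Proposition~\ref{prop:lifting process simple}, would not suffice). Concretely, one must verify that the pre-image evaluation $\rho$ supplied by condition 2 simultaneously witnesses $\sigma:\Gamma'$ in $\LDL$, so that the labelled validity assumption actually applies; this is where the uniform treatment of $A$ as a single set (rather than separating $\Gamma'$ and $\Delta'$) becomes essential, since we need the same $\se_A$ relation to carry both the antecedent and the succedent formulas. Once this subtlety is clear, the remaining steps are mechanical, and the whole argument is agnostic to the particular structure of regular expressions with tests in FODL; it relies solely on the generic interface provided by $\Eval_\fodl$, $\Conf_\fodl$, $\app_\fodl$ and the semantics of labelled formulas in Definition~\ref{def:Semantics of Labelled LDL Formulas}.
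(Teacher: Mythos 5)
Your proof is correct and takes essentially the same route as the paper's own proof of the general lifting result (Proposition~\ref{prop:lifting process 2}): use freeness condition~2 to transfer validity of the labelled premise down to the unlabelled premise, invoke the soundness of the original FODL rule, then use freeness condition~1 to lift validity of the unlabelled conclusion back to the labelled one. The only difference is that you spell out the sequent-validity bookkeeping (from $\rho\models\Gamma$ obtain $\rho\models\psi$ for some $\psi\in\Delta$) that the paper compresses into the single equivalence ``$\rho\models\phi$ iff $\rho'\models\sigma:\phi$ for all $\phi\in A$''.
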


\fi

\subsection{Encoding of Complex Configurations}
\label{section:Encoding of Complex Configurations}

The examples of configurations we have seen in both \emph{While} programs and Esterel programs (Section~\ref{section:Examples of Term Structures}) are both a simple version of variable storage. 
We show an example of defining a type of more complex configurations to capture the notion of `heaps' for separation logic~\cite{Reynolds02} --- an extension of Hoare logic for reasoning about program data structures. 
We will encode a partial separation logic into $\LDL$. 
From this example, we see that a configuration interpretation $\app$ can be more than just substitutions of terms like $\app_\WP$ and $\app_\E$ seen in previous examples. 

\ifx
we encode a part of separation logic~\cite{Reynolds02} --- an extension of Hoare logic for reasoning about program data structures --- in which a type of more complex configurations are required to capture the notion of `heaps'. 
From this encoding, we shall see that a configuration interpretation can be defined more than just as substitutions of terms like $\app_\WP$ and $\app_\E$ in the examples in Section~\ref{section:Examples of Term Structures}. 
\fi

\textbf{Separation Logic}. 
In the following, we assume readers are familiar with separation logic and only give an informal explanations of its semantics.
For simplicity, we only introduce a partial separation logic primitives: 
for formulas, we only introduce an atomic formula $e\allocto e'$ and a critical operator $\sepc$ (while omitting another critical operator $\sepi$); 
for programs, we only introduce atomic program statements: $x := \Alloc(e)$, $x := [e]$, $[e] := e'$ and $\disAlloc(e)$ that have direct influences on heaps, and ignore compositional programs that varies from case to case. 
After the introductions of the semantics, we give the encoding of separation logic into $\LDL$ as labeled formulas and explain how they actually capture the same meaning. 

Below we follow some conventions of notations: 
Given a partial function $f : A \partto B$, we use $\dom(f)$ to denote the domain of $f$. 
For a set $C$, partial function $f|_C : A\partto B$ is the function $f$ restricted on domain $\dom(f)\cap C$.  
$f[x\mapsto e]$ represents the partial function that maps $x$ to $e$, and maps the other variables in its domain to the same value as $f$. 

Let $\mbb{V} = \mbb{Z}\cup \textit{Addr}$ be the set of values, where 
$\textit{Addr}$ is a set of addresses. 
We assume $\textit{Addr}$ to be expressed with an infinite set of integer numbers. 
In separation logic, a store $s : V\to \mbb{V}$ is a function that maps each variable to a value of $\mbb{V}$, 
a heap $h : \textit{Addr}\partto \mbb{V}$ is a partial function that maps each address to a value of $\mbb{V}$, expressing that the value is stored in a memory indicated by the address. 
$\dom(h)$ is a finite subset of $\textit{Addr}$. 
Let $e$ be an arithmetic expression with or without variables, 
$s(e)$ returns the value by replacing each variable $x$ of $e$ with value $s(x)$.  
A state is a store-heap pair $(s, h)$. 
The \emph{disjoint relation} $h_1\disj h_2$ is defined if $\dom(h_1)\cap\dom(h_2) = \emptyset$. 

Here we informally explain the behaviours of program statements. 
Given a state $(s,h)$, 
statement $x := \Alloc(e)$ allocates a memory addressed by a new integer $n$ in $h$ to store the value of expression $e$ (thus obtaining a new heap $h\cup \{(n, s(e))\}$ where $n\notin \dom(h)$), 
and assigns $n$ to $x$.
Statement $x := [e]$ assigns the value of the address $e$ in $h$ (i.e. $h(s(e))$) to variable $x$. 
$[e] := e'$ means to assign the value $e'$ to the memory of the address $e$ in $h$ (i.e. obtaining a new heap $h[s(e)\mapsto s(e')]$). 
$\disAlloc(e)$ means to de-allocate the memory of address $e$ in heap 
(i.e. obtaining a new heap $h|_{\dom(h)\setminus\{s(e)\}}$). 

Formula $e\allocto e'$ means that value $e'$ is stored in the memory of address $e$. 
Given a state $(s, h)$, $s, h\models e\allocto e'$ is defined if $h(s(e)) = s(e')$. 
For any separation logical formulas $\phi$ and $\psi$, $s,h\models \phi\ \sepc\ \psi$ if 
there exist heaps $h_1, h_2$ such that $h = h_1\cup h_2$, $h_1\disj h_2$, and $s,h_1\models \phi$ and $s, h_2\models \psi$.

\begin{example}
	Let $(s, h)$ be a state such that $s(x) = 3, s(y) = 4$ and $h = \emptyset$, then 
	the following table shows the information of each state about focused variables and addresses during the process of the following executions: 
	$$
	(s, h)\xrightarrow{x := \Alloc(1)}
	(s_1,h_1)\xrightarrow {y := \Alloc(1)}
	(s_2,h_2)\xrightarrow {[y] := 37}
	(s_3,h_3)\xrightarrow {y := [x + 1]}
	(s_4,h_4)\xrightarrow{\disAlloc(x+1)}
	(s_5,h_5).
	$$
	\begin{center}
		\begin{tabular}{| c | l | c | l |}
			\toprule
			& \multicolumn{1}{c|}{Store} & & \multicolumn{1}{c|}{Heap}\\
			\hline
			$s$ & $x: 3$, $y: 4$ & $h$ & empty\\
			\hline
			$s_1$ & $x: 37$, $y: 4$ & $h_1$ & $37 : 1$\\
			\hline
			$s_2$ & $x : 37$, $y : 38$ & $h_2$ & $37: 1$, $38 : 1$\\
			\hline
			$s_3$ & $x : 37$, $y : 38$ & $h_3$ & $37 : 1$, $38 : 37$\\
			\hline
			$s_4$ & $x : 37$, $y : 37$ & $h_4$ & $37: 1$, $38 : 37$\\
			\hline
			$s_5$ & $x : 37$, $y : 37$ & $h_5$ & $37 : 1$\\
			\bottomrule
		\end{tabular}   
	\end{center}
	Let $\phi\dddef (x\allocto 1\ \sepc\ y\allocto 1)$, $\psi\dddef (x\allocto 1\ \wedge\ y\allocto 1)$, 
	we have $s_2, h_2\models \phi$ and $s_2, h_2\models \psi$, $s_5, h_5\models \psi$, but $s_5, h_5\not\models \phi$ since $x$ and $y$ point to the single memory storing value $1$.  
\end{example}

\begin{table}[tbp]
	\begin{center}
		\noindent\makebox[\textwidth]{%
			\scalebox{0.9}{
				\begin{tabular}{c}
					\toprule
					\multicolumn{1}{l}{Configurations $\sigma_\Sep$: $(s, h)$}\\
					\midrule
					\multicolumn{1}{l}{Interpretations $\app_\Sep$:}\\
					\begin{tabular}{l}
						$\app_\Sep((s, h), e \allocto e')\dddef h(s(e)) = s(e')$\\
						$\app_\Sep((s,h), \phi\ \sepc\ \psi)\dddef \exists X, Y. (h = X\cup Y) \wedge (X\bot Y) \wedge \app_\Sep((s,X), \phi) \wedge \app_\Sep((s, Y), \psi)$
					\end{tabular}
					\\
					\midrule
					\multicolumn{1}{l}{Inference rules for behaviours of program statements:}
					\\
					$
					\infer[^{1\ (\Alloc)}]
					{\Gamma \Rightarrow (x := \bff{cons}(e), (s, h))\trans (\ter, (s[x\mapsto n], h\cup \{(n,s(e))\})), \Delta}
					{}
					$
					\ \ 
					$
					\infer[^{(x := [e])}]
					{\Gamma \Rightarrow (x := [e], (s, h))\trans (\ter, (s[x\mapsto h(s(e))], h)), \Delta}
					{}
					$
					\\ 
					$
					\infer[^{([e] := e')}]
					{\Gamma \Rightarrow ([e] := e', (s, h))\trans (\ter, (s, h[s(e)\mapsto s(e')])), \Delta}
					{}
					$
					\ \ 
					$
					\infer[^{(\disAlloc)}]
					{\Gamma \Rightarrow (\disAlloc(e), (s, h))\trans (\ter, (s, h|_{\dom(h)\setminus\{s(e)\}})), \Delta}
					{}
					$
					\\
					\midrule
					\multicolumn{1}{l}{
						$^{1}$ $n$ is new w.r.t. $h$.
					}
					\\
					\bottomrule
				\end{tabular}
			}
		}
	\end{center}
	\caption{Encoding of A Part of Separation Logic in $\LDL$}
	\label{table:Encoding of A Part of Separation Logic}
\end{table}

\textbf{Encoding of Separation Logic in $\LDL$}. 
In $\LDL$, we choose the store-heap pairs as the configurations namely $\sigma_\Sep$. 
Table~\ref{table:Encoding of A Part of Separation Logic} lists the rules for  the program behaviours of the atomic  program statements. 
To capture the semantics of separation logical formulas, interpretations $\app_\Sep$ are defined in an inductive way according to their syntactical structures. 
Note that the formula $\app_\Sep((s,h), \phi\ \sepc\ \psi)$ requires variables $X, Y$ ranging over heaps. 



To tackle a formula like $\app_\Sep((s,h), \phi\ \sepc\ \psi)$, additional rules can be proposed. 
In this example, we can propose a rule
$$
\begin{aligned}
	\infer[^{(\sigma \sepc)}]
	{\Gamma\Rightarrow (s, h_1\cup h_2) : \phi\ \sepc\ \psi, \Delta}
	{
		\Gamma\Rightarrow h_1\bot h_2, \Delta
		&
		\Gamma\Rightarrow (s,h_1) : \phi, \Delta
		&
		\Gamma\Rightarrow (s, h_2) : \psi, \Delta
	}   
\end{aligned}
$$
to decompose the heap's structure, or a rule
$$
\begin{aligned}
	\infer[^{(\sigma \textit{Frm})}]
	{\Gamma\Rightarrow (s, h) : \phi\ \sepc\ \psi, \Delta}
	{\Gamma\Rightarrow (s, h) : \phi, \Delta}
\end{aligned}, \mbox{ if no variables of $\dom(h)$ appear in $\psi$}, 
$$
to simplify the formula's structure. 
These rules are inspired from their
counterparts for programs in separation logic. 

In practice, configurations can be more explicit structures than store-heap pairs $(s, h)$. Similar encoding can be obtained accordingly. 
From this example, we envision that the entire theory of separation logic can actually be embedded into $\LDL$, where additional rules like the above ones will support a ``configuration-level reasoning'' of separation logical formulas. 


\section{Soundness of Cyclic Proof System $\pfDLp$}
\label{section:Proof of Theorem - theo:Soundness of A Cyclic Preproof}
We analyze and prove the soundness of $\pfDLp$ stated as Theorem~\ref{theo:Soundness of A Cyclic Preproof}.  
We need to show that
if a preproof is cyclic (Definition~\ref{def:Cyclic Preproof}), that is, if every derivation path of the preproof is followed by a safe and progressive trace, then the conclusion is valid. 
We follow the main idea behind~\cite{Brotherston08} to carry out a proof by contradiction: 
suppose the conclusion is invalid, then 
we will show that there exists an \emph{invalid derivation path} in which each node is invalid, 
and 
one of its progressive traces starting from a formula $\tau$ in a node $\nu$ leads to an infinite descent sequence of elements along this trace ordered by a well-founded relation $\mult$ (introduced in Section~\ref{section:Well-founded Relation pmult}), which violates the definition of well-foundedness itself. 

Below, we only consider the case when $\tau\in \{\sigma : [\alpha]\phi, \sigma: \la\alpha\ra\phi\}$ in a node $\nu$ of the form $\Gamma\Rightarrow \tau, \Delta$ for arbitrary $\Gamma$ and $\Delta$. 
Other cases for $\tau$ are trivial. 

\ifx
In this subsection, we analyze and prove Theorem~\ref{theo:Soundness of A Cyclic Preproof}. 
We only focus on the case when the conclusion is of the form $\Gamma \Rightarrow \sigma : [\alpha]\phi$ or $\Gamma\Rightarrow \sigma : \la\alpha\ra$, where dynamic $\LDL$ formulas $ \sigma : [\alpha]\phi$ and $\sigma : \la\alpha\ra\phi$ are the only formula on the right side of the sequent.
Other cases are trivial.

To prove Theorem~\ref{theo:Soundness of A Cyclic Preproof}, we need to show that
if a preproof is cyclic (Definition~\ref{def:Cyclic Preproof}), that is, if any derivation path is followed by a progressive trace, then the conclusion is sound. 
We carry out the proof by contradiction following the main idea behind~\cite{Brotherston08}: Suppose the conclusion is not sound, that is, proposition $\mfr{P}(\Gamma\Rightarrow \sigma: [\alpha]\phi)$ (resp. $\mfr{P}(\Gamma\Rightarrow\sigma: \la\alpha\ra\phi)$) is not true,
then we will show that it induces an infinite descent sequence of elements ordered by 
the well-founded relation $\mult$ introduced as follows, which violates Definition~\ref{def:Well-foundedness}. 
\fi

In the rest of this section, we firstly introduce the well-founded relation $\pmult$, then focus on the main skeleton of proving Theorem~\ref{theo:Soundness of A Cyclic Preproof}. 
Other proof details are given in Appendix~\ref{section:Other Propositions and Proofs}. 

\subsection{Well-founded Relation $\pmult$}
\label{section:Well-founded Relation pmult}

\textbf{Well-foundedness}. 
A relation $\preceq$ on a set $S$ is \emph{partially ordered}, if 
it satisfies the following properties: 
(1) Reflexivity. $t\preceq t$ for each $t\in S$. 
(2) Anti-symmetry. For any $t_1, t_2\in S$, if $t_1\preceq t_2$ and $t_2\preceq t_1$, then $t_1$ and $t_2$ are the same element in $S$, we denote by $t_1 = t_2$.
(3) Transitivity. For any $t_1, t_2, t_3\in S$, if $t_1\preceq t_2$ and $t_2\preceq t_3$, then $t_1\preceq t_3$. 

$t_1 \prec t_2$ is defined as $t_1\preceq t_2$ and $t_1\neq t_2$. 


Given a set $S$ and a partial-order relation $\preceq$ on $S$, 
$\preceq$ is called a \emph{well-founded relation} over $S$, if for any element $a$ in $S$, there is no infinite descent sequence: $a \succ a_1 \succ a_2 \succ ...$ in $S$. 
Set $S$ is called a \emph{well-founded} set w.r.t. $\preceq$. 

\textbf{Relation $\mult$}.  
Given two paths $tr_1$ and $tr_2$, 
relation $tr_1 \suf tr_2$ is defined if $tr_2$ is a suffix of $tr_1$. 
Obviously $\suf$ is partially ordered. 
Relation $tr_1\psuf tr_2$ expresses that $tr_1$ is a proper suffix of $tr_2$. 
In a set of finite paths, relation $\suf$ is well-founded,  because every finite path has only a finite number of suffixes. 

\begin{definition}[Relation $\mult$]
	\label{def:Relation pmult}
	Given two finite sets $\cnt_1$ and $\cnt_2$ of finite paths,  
	$\cnt_1\mult \cnt_2$ is defined if either (1) $\cnt_1 = \cnt_2$; or (2) set $\cnt_1$ can be obtained from $\cnt_2$ by replacing (or removing) one or more elements of $\cnt_2$ each with a finite number of elements, such that
	for each replaced element $tr$, its replacements $tr_1,...,tr_n$ ($n\ge 1$) in $\cnt_1$ satisfies that $tr_i\psuf tr$ for any $i$, $1\le i\le n$.  
	
\end{definition}

\begin{proposition}
	\label{prop:relation mult is partially ordered}
	$\mult$ is a partial-order relation. 
\end{proposition}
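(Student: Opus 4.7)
The plan is to verify reflexivity, anti-symmetry, and transitivity of $\mult$ in turn. Reflexivity is immediate from clause~(1) of Definition~\ref{def:Relation pmult}: for every $\cnt$ we have $\cnt \mult \cnt$. For the remaining properties I will first establish transitivity of $\mult$, and then derive anti-symmetry by combining transitivity with an irreflexivity argument for the strict part $\pmult$.

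For transitivity, suppose $\cnt_1 \mult \cnt_2$ and $\cnt_2 \mult \cnt_3$. The equality cases are immediate, so assume both steps are strict. Then (viewing the $\cnt_i$ as multisets) $\cnt_2 = (\cnt_3 \setminus X) \cup \bigcup_{t \in X} R_3(t)$ for some nonempty $X \subseteq \cnt_3$ with each element of $R_3(t)$ a proper suffix of $t$, and likewise $\cnt_1 = (\cnt_2 \setminus Y) \cup \bigcup_{s \in Y} R_2(s)$. I would decompose $Y$ as $Y_A \cup Y_B$, where $Y_A \subseteq \cnt_3 \setminus X$ and $Y_B \subseteq \bigcup_{t \in X} R_3(t)$ (fixing, for each copy of an element of $Y_B$, a specific $t \in X$ from which it was produced), and then present $\cnt_1$ as a single replacement step on $\cnt_3$ with replaced multiset $Z \dddef X \cup Y_A$: each $t \in Y_A$ is replaced by $R_2(t)$, and each $t \in X$ is replaced by $(R_3(t) \setminus Y_B) \cup \bigcup_{s \in Y_B \cap R_3(t)} R_2(s)$. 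Transitivity of $\psuf$ on finite paths then guarantees that every element of the combined replacement of any $t \in Z$ is a proper suffix of $t$, giving $\cnt_1 \pmult \cnt_3$.

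For anti-symmetry, suppose $\cnt_1 \mult \cnt_2$ and $\cnt_2 \mult \cnt_1$, and for contradiction $\cnt_1 \neq \cnt_2$. Then both relations are strict, and transitivity yields $\cnt_1 \pmult \cnt_1$. To rule this out I will show $\pmult$ is irreflexive: if $\cnt \pmult \cnt$ is witnessed by a nonempty $Z \subseteq \cnt$ and replacements $R(t)$ for $t \in Z$, the multiset identity $\cnt = (\cnt \setminus Z) \cup \bigcup_{t \in Z} R(t)$ forces $Z = \bigcup_{t \in Z} R(t)$. Let $t^* \in Z$ realise the maximum length $k^* \dddef \max\{|t| : t \in Z\}$; every element of every $R(t)$ is a proper suffix of $t$ and hence has length strictly less than $|t| \le k^*$, so $\bigcup_{t \in Z} R(t)$ contains no path of length $\ge k^*$, contradicting the presence of $t^*$ in $Z$. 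The main obstacle I anticipate is the multiset bookkeeping in the transitivity step, where one must carefully allocate each copy of an element of $Y$ either to $\cnt_3 \setminus X$ or to a specific $R_3(t)$; this is routine but notationally delicate, and could alternatively be sidestepped by observing that $\pmult$ coincides with the Dershowitz--Manna multiset ordering induced by $\psuf$, so that its standard closure properties apply directly.
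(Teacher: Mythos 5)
Your proposal is correct, and the reflexivity and transitivity parts align with the paper (the paper merely asserts transitivity "by the definition of replacements and the transitivity of $\psuf$", whereas you spell out the multiset bookkeeping of composing two replacement steps — a welcome elaboration of what the paper leaves implicit). Where you genuinely diverge is anti-symmetry. The paper argues directly: from $\cnt_1\mult\cnt_2$, $\cnt_2\mult\cnt_1$ and $\cnt_1\neq\cnt_2$ it builds functions $f_{\cnt_1,\cnt_2}$, $f_{\cnt_2,\cnt_1}$ sending each path to the element it replaces or equals, and bounces between $\cnt_1$ and $\cnt_2$ to produce an infinite strictly descending chain $tr\psufr f_{\cnt_1,\cnt_2}(tr)\psufr\cdots$, contradicting well-foundedness of the suffix relation on finite paths. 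You instead first establish that the strict part composes (strict $\circ$ strict $=$ strict), deduce $\cnt_1\pmult\cnt_1$, and refute that by a purely finite maximum-length argument on the multiset identity $Z=\bigcup_{t\in Z}R(t)$. Your route buys a more elementary argument — no appeal to well-foundedness or infinite descent, just a maximum over a finite multiset — and it sidesteps a delicacy in the paper's proof, namely the claim that the replaced element $f_{\cnt_1,\cnt_2}(tr)$ cannot survive into $\cnt_1$, which requires some care with multiplicities. The cost is that your anti-symmetry depends on having done the transitivity bookkeeping carefully first, whereas the paper's argument is self-contained. Your closing remark that $\pmult$ is an instance of the Dershowitz--Manna multiset ordering, so its standard order-theoretic properties could be imported wholesale, is also consistent with the paper, which invokes exactly that citation for well-foundedness in Proposition~\ref{prop:well-foundedness of relation pmult}.
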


The proof of Proposition~\ref{prop:relation mult is partially ordered} is given in Appendix~\ref{section:Other Propositions and Proofs}. 


\begin{example}
	Let $C_1=\{tr_1, tr_2, tr_3\}$, where $tr_1 \dddef (\alpha, \sigma)(\alpha_1, \sigma_1)(\alpha_2, \sigma_2)(\alpha_3, \sigma_3)(\ter, \sigma_4), tr_2 \dddef (\alpha, \sigma)(\alpha_1, \sigma_1)(\beta_1, \delta_1)(\beta_2, \delta_2)(\ter, \delta_3)$ and $tr_3 \dddef (\alpha, \sigma)(\ter, \tau)$; 
	$C_2 = \{tr'_1, tr'_2\}$, where 
	$tr'_1 \dddef (\alpha_1, \sigma_1)(\alpha_2, \sigma_2)(\alpha_3, \sigma_3)(\ter, \sigma_4), tr'_2 \dddef (\alpha_1, \sigma_1)(\beta_1, \delta_1)(\beta_2, \delta_2)(\ter, \delta_3)$. 
	We see that $tr'_1 \psuf tr_1$ and $tr'_2 \psuf tr_2$. 
	$C_2$ can be obtained from $C_1$ by replacing $tr_1$ and $tr_2$ with $tr'_1$ and $tr'_2$  respectively, and removing $tr_3$. 
	Hence $C_2\mult C_1$. 
	Since $C_1\neq C_2$, $C_2\pmult C_1$. 
\end{example}

\begin{proposition}
	\label{prop:well-foundedness of relation pmult}
	The relation $\mult$ between any two finite sets of finite paths is a  well-founded relation. 
\end{proposition}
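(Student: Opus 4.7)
The plan is to reduce well-foundedness of $\pmult$ to the classical Dershowitz--Manna multiset ordering theorem, which states that the multiset extension of any well-founded strict order is again well-founded. First I would verify that the base relation $\psuf$ on finite paths is itself well-founded: any finite path of length $n$ has at most $n-1$ proper suffixes, so every $\psuf$-descending chain terminates. Combined with the transitivity and irreflexivity of ``is a proper suffix of,'' this establishes $\psuf$ as a well-founded strict partial order on the set of finite paths.

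Second, I would make explicit the correspondence between Definition~\ref{def:Relation pmult} and the Dershowitz--Manna multiset ordering $\prec_{\textit{mul}}$ defined by $M \prec_{\textit{mul}} N$ iff there exist finite multisets $X, Y$ with $\emptyset \neq X \subseteq N$, $M = (N \setminus X) \cup Y$, and for every $y \in Y$ there is $x \in X$ with $y \psuf x$. Unpacking Definition~\ref{def:Relation pmult}, a step $\cnt_1 \pmult \cnt_2$ removes a nonempty subset of $\cnt_2$ and adds replacements each of which is a proper suffix (under $\psuf$) of one of the removed elements, which is exactly the shape of a $\prec_{\textit{mul}}$-step. Since finite sets are a degenerate case of finite multisets, invoking the Dershowitz--Manna theorem applied to $\psuf$ yields well-foundedness of $\prec_{\textit{mul}}$, and hence of $\pmult$, on the collection of finite sets of finite paths.

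The main obstacle I expect is not mathematical depth but bookkeeping in the reduction. In particular, one must verify that Definition~\ref{def:Relation pmult} genuinely covers the pure-removal case, so that a single $\pmult$-step can be realized as a single $\prec_{\textit{mul}}$-step; the parenthetical wording ``(or removing)'' together with the worked example (where $tr_3$ is dropped without replacement) confirms this. One must also check that allowing several replacements per removed element does not break the embedding: this is immediate because the constraint $tr_i \psuf tr$ on every replacement $tr_i$ of a removed $tr$ gives each added element a strictly greater predecessor in the removed set, matching the condition on $Y$ in $\prec_{\textit{mul}}$. Once this translation is spelled out, well-foundedness transfers verbatim from the Dershowitz--Manna theorem; should one prefer a self-contained argument, an equivalent route would be a direct proof by well-founded induction using the minimal bad-sequence construction, but the reduction is cleaner and already anticipated in the paper.
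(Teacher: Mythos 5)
Your proposal is correct and matches the paper's own (omitted) argument: the paper likewise observes that $\psuf$ is well-founded on finite paths and then appeals to the well-foundedness of the Dershowitz--Manna multiset ordering, of which $\pmult$ is a special case. You simply spell out the reduction in more detail than the paper does.
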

We omit the proof since 
relation $\mult$ is in fact a special case of the ``multi-set ordering'' introduced in~\cite{Dershowitz79}, where it has been proved that multi-set ordering is well-founded. 

\subsection{Proof of Theorem~\ref{theo:Soundness of A Cyclic Preproof}}
Below we give the main skeleton of the proof by skipping the details of the critical Lemma~\ref{lemma:infinite descent sequence}, whose proof can be found in Appendix~\ref{section:Other Propositions and Proofs}. 

Following the main idea in the beginning of Section~\ref{section:Proof of Theorem - theo:Soundness of A Cyclic Preproof}, 
next we first introduce the notion of ``counter-example paths'' that makes a formula $\sigma : [\alpha]\phi$ or $\sigma : \la\alpha\ra \phi$ invalid.   


\begin{definition}[Counter-example Paths]
	\label{def:counter-example set}
	A ``counter-example'' $\CT(\rho, \tau, \nu)$ of a formula $\tau\in \{\sigma: [\alpha]\phi, \sigma:\la\alpha\ra\phi\}$ in a sequent $\nu\dddef (\Gamma\Rightarrow \tau, \Delta)$ is a set of minimum paths defined as:
	$
	\CT(\rho, \tau, \nu)\dddef \{\rho(\alpha, \sigma)...(\ter, \sigma')\ |\ \rho\models \Gamma, \rho\not\models \sigma' : \phi
	\}. 
	$
\end{definition}

We write $\rho\models \Gamma$ to mean that $\rho\models \phi$ for all $\phi\in \Gamma$.
Recall that a path being minimum is defined in Section~\ref{section:Syntax and Semantics of LDL}.

\begin{lemma}
	\label{lemma:infinite descent sequence}
	In a cyclic preproof (where there is at least one derivation path), 
	let $(\tau, \tau')$ be a pair of a derivation trace over a node pair $(\nu, \nu')$ of an invalid derivation path, where $\tau\in \{\sigma : [\alpha]\phi, \sigma : \la\alpha\ra\phi\}$ and $\nu\dddef (\Gamma\Rightarrow \tau, \Delta)$. 
	For any counter-example $\CT(\rho, \tau, \nu)$ of $\tau$, 
	there exists an evaluation $\rho'$ and a counter-example $\CT(\rho', \tau', \nu')$ of $\tau'$ such that $\CT(\rho', \tau', \nu')\mult \CT(\rho, \tau, \nu)$. Moreover, if $(\tau, \tau')$ is a progressive step, then $\CT(\rho', \tau', \nu')\pmult \CT(\rho, \tau, \nu)$. 
\end{lemma}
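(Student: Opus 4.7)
The plan is to prove the lemma by case analysis on the inference rule linking $\nu$ (conclusion) to $\nu'$ (premise) along the invalid derivation path; in each case I construct an evaluation $\rho'$ and relate $\CT(\rho',\tau',\nu')$ to $\CT(\rho,\tau,\nu)$ using the path-suffix replacement and path removal of Definition~\ref{def:Relation pmult}. The soundness of $P_{\Oper,\Terminate}$ will be used silently throughout to pass between derivable transitions/terminations and true ones in $\Oper$ and $\Terminate$. The cases split into three groups: (i) the propositional/structural rules where $\tau$ is not a target formula; (ii) the labelled rewrite rules $(\sigma\neg)$, $(\sigma\wedge)$, $([\ter])$ and $(\textit{Sub})$ whose target pairs consume no genuine program-transition step; and (iii) the dynamic rules $([\alpha])$ and $(\la\alpha\ra)$, which supply the progressive steps.

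For group (i) --- $(\textit{ax})$, $(\neg L)$, $(\neg R)$, $(\wedge L)$, $(\wedge R)$, $(\textit{Cut})$, $(\textit{Wk L})$, $(\textit{Wk R})$, $(\textit{Con})$ --- the dynamic formula $\tau$ reappears unchanged on the right of $\nu'$, so $\tau'\equiv\tau$, and I take $\rho'\dddef\rho$ (using the invalidity of $\nu'$ along the derivation path to guarantee the needed witness exists for context-changing rules such as $(\textit{Cut})$ and weakening), obtaining $\CT(\rho',\tau',\nu')=\CT(\rho,\tau,\nu)$ and hence $\mult$ by reflexivity. For group (ii): $([\ter])$ is trivial because $\ter$ has no outgoing transitions by well-definedness of $\Oper$, so the only minimum terminal path is $(\ter,\sigma)$ itself; $(\sigma\neg)$ and $(\sigma\wedge)$ do not arise at the outermost level when $\tau\in\{\sigma:[\alpha]\phi,\sigma:\la\alpha\ra\phi\}$, so these subcases are vacuous; and for $(\textit{Sub})$ with substitution $\eta$, Definition~\ref{def:Abstract Substitution} supplies $\rho'$ with $\rho(\eta(\psi))\equiv\rho'(\psi)$ for every $\psi$, giving a bijection between starting states and falsifying endpoints and hence equality of the two counter-example sets.

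The substantive cases are $([\alpha])$ and $(\la\alpha\ra)$. For $([\alpha])$ with target pair $(\sigma:[\alpha]\phi,\sigma':[\alpha']\phi)$ arising from a derivable transition $(\alpha,\sigma)\trans(\alpha',\sigma')$, I set $\rho'\dddef\rho$; since $\rho\models\Gamma$, the transition is true under $\rho$. Every path in $\CT(\rho,\tau,\nu)$ whose second state is $\rho(\alpha',\sigma')$ has its one-step suffix in $\CT(\rho,\tau',\nu')$, and conversely every element of the latter, prepended with $\rho(\alpha,\sigma)$, lies in $\CT(\rho,\tau,\nu)$. Hence $\CT(\rho,\tau',\nu')$ is obtained from $\CT(\rho,\tau,\nu)$ by replacing such paths by their proper suffixes and removing the paths going through other transitions out of $\rho(\alpha,\sigma)$, yielding $\pmult$ since $\CT(\rho,\tau,\nu)$ is non-empty (as $\tau$ fails at $\rho$); this matches the progressive status of every $([\alpha])$ target pair. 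For $(\la\alpha\ra)$, again set $\rho'\dddef\rho$. Since $\rho\not\models\sigma:\la\alpha\ra\phi$, every terminal path from $\rho(\alpha,\sigma)$ falsifies $\phi$; prepending $\rho(\alpha,\sigma)$ to any terminal path from $\rho(\alpha',\sigma')$ gives such a path, so its suffix lies in $\CT(\rho,\tau',\nu')$, and conversely. This gives the basic $\mult$ by replacement plus removal. For the progressive refinement to $\pmult$, the termination side-condition $\vdash_{P_{\Oper,\Terminate}}(\Gamma\Rightarrow(\alpha',\sigma')\termi,\Delta)$ together with soundness guarantees at least one terminal path from $\rho(\alpha',\sigma')$, so $\CT(\rho,\tau',\nu')$ is non-empty and at least one genuine proper-suffix replacement occurs. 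The main obstacle lies precisely here: without the termination condition, $\CT(\rho,\tau',\nu')$ can collapse to $\emptyset$ while $\CT(\rho,\tau,\nu)$ is also empty (i.e.\ $(\alpha,\sigma)$ diverges), leaving the two sets equal and only yielding $\mult$; verifying that the termination side-condition rules out this degeneracy --- and checking minimality when prepending $\rho(\alpha,\sigma)$ to a suffix, so that the prepended path really is a minimum terminal path --- is the technical heart of the proof.
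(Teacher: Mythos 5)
Your proposal is correct and follows essentially the same route as the paper's own proof: the same case split (trivial structural rules, the substitution rule handled via Definition~\ref{def:Abstract Substitution}, and the two dynamic rules), the same choice of $\rho'$ in each case, the same prepend-one-state/proper-suffix argument relating the two counter-example sets, and the same identification of the termination side-condition as what forces non-emptiness and hence the strict relation $\pmult$ in the $(\la\alpha\ra)$ case. The minimality concern you flag at the end is a real technical detail, but the paper glosses over it in exactly the same way.
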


\ifx
\begin{lemma}
	\label{lemma:infinite descent sequence}
	In a cyclic preproof (where there is at least one derivation path), let $\tau\in \{\sigma: [\alpha]\phi, \sigma:\la\alpha\ra\phi\}$ be a formula in a node $\nu\dddef (\Gamma\Rightarrow \tau)$, 
	$\rho$ is an evaluation satisfying that $\rho\models \phi$ for all $\phi\in \Gamma$ and  
	$F(\rho, \alpha, \sigma, \phi)$ is a non-empty, finite counter-example of $\tau$. 
	Then there exists a formula pair $(\tau, \tau')$ appearing in a node pair $(\nu, \nu')$ and a set $F(\rho, \alpha', \sigma', \phi)$ such that $F(\rho, \alpha', \sigma', \phi)$ is a non-empty, finite counter-example of $\tau'$, and $F(\rho, \alpha', \sigma', \phi) \mult F(\rho, \alpha, \sigma, \phi)$. 
	Moreover, if $(\tau, \tau')$ is a progressive step, then $F(\rho, \alpha', \sigma', \phi) \pmult F(\rho, \alpha, \sigma, \phi)$. 
\end{lemma}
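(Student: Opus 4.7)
The plan is a case analysis on the inference rule linking $\nu$ to $\nu'$ along the invalid derivation path, split by whether $(\tau, \tau')$ is a target pair of $([\alpha])$ or $(\la\alpha\ra)$ (a progressive step, where the stronger conclusion $\CT(\rho', \tau', \nu') \pmult \CT(\rho, \tau, \nu)$ is required) or a non-target pair forced by the definition of a derivation trace to satisfy $\tau \equiv \tau'$ (for which $\mult$ suffices). In essentially every subcase I plan to take $\rho' \dddef \rho$ and build $\CT(\rho, \tau', \nu')$ from $\CT(\rho, \tau, \nu)$ by dropping those paths whose first transition does not reach $\rho(\alpha', \sigma')$ and replacing each surviving path by its strict suffix starting at $\rho(\alpha', \sigma')$; this directly realises the replacement-and-removal template of Definition~\ref{def:Relation pmult}. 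The only case where $\rho' \neq \rho$ is rule $(\Sub)$, where Definition~\ref{def:Abstract Substitution} supplies a $\rho'$ with $\rho'(\psi) \equiv \rho(\Sub(\psi))$ for every $\psi$, after which $\CT(\rho', \tau', \nu') = \CT(\rho, \tau, \nu)$ follows from the structural action of substitution on program states.

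For the progressive step on rule $([\alpha])$, every element of $\CT(\rho, \tau', \nu')$ extends, by prepending the transition $\rho(\alpha, \sigma) \trans \rho(\alpha', \sigma')$, to an element of $\CT(\rho, \tau, \nu)$ of which it is a proper suffix; the paths of $\CT(\rho, \tau, \nu)$ whose first transition lands at a different successor are discarded. Completeness of $P_{\Oper, \Terminate}$ is what lets me identify $\rho(\alpha', \sigma')$ as a genuine successor of $\rho(\alpha, \sigma)$ in $\Oper$. The construction then yields $\CT(\rho, \tau', \nu') \pmult \CT(\rho, \tau, \nu)$. For the progressive step on $(\la\alpha\ra)$, $\CT(\rho, \tau, \nu)$ is the set of all minimum terminal paths from $\rho(\alpha, \sigma)$ and $\CT(\rho, \tau', \nu')$ the analogous set from $\rho(\alpha', \sigma')$; the same prepending argument gives the strict-suffix correspondence. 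Non-emptiness of $\CT(\rho, \tau', \nu')$ uses precisely the termination side-condition $\vdash_{P_{\Oper, \Terminate}}(\Gamma \Rightarrow (\alpha', \sigma')\termi, \Delta)$ attached in Definition~\ref{def:Progressive Step/Progressive Derivation Trace}, without which the set could be empty.

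The non-progressive cases are considerably lighter. Structural rules $(\textit{Wk L}), (\textit{Wk R}), (\textit{Con})$ and applications of $(\neg L), (\neg R), (\wedge L), (\wedge R), (\sigma\neg), (\sigma\wedge), ([\ter]), (\textit{Int})$ or of the dynamic rules on formulas other than $\tau$ leave the start state, the target formula and the evaluation unchanged, so $\CT(\rho, \tau, \nu)$ is preserved outright with $\rho' \dddef \rho$. For $(\textit{Cut})$ with cut formula $\phi_c$ the invalid path follows the premise compatible with $\rho$: $\Gamma, \phi_c \Rightarrow \Delta$ if $\rho \models \phi_c$ and $\Gamma \Rightarrow \phi_c, \Delta$ otherwise; in both subcases $\rho$ still witnesses invalidity of the selected premise and $\CT$ remains unchanged. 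The $(\Sub)$ subcase was described above.

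The chief obstacle I anticipate is the progressive $(\la\alpha\ra)$ case: the prepending correspondence works because the semantics of $\la\cdot\ra$ universally quantifies over terminal paths, so every terminal path from $\rho(\alpha', \sigma')$ extends to one in $\CT(\rho, \tau, \nu)$; non-emptiness of the new set, however, hinges crucially on the termination side-condition, without which a diverging $\alpha'$ would leave $\CT(\rho, \tau', \nu')$ empty and collapse the descent. A secondary subtlety is the careful bookkeeping of which evaluation is in play in the $(\textit{Cut})$ and $(\Sub)$ branches, to ensure that the sequence of evaluations generated along the trace remains coherent when the lemma is iterated in the proof of Theorem~\ref{theo:Soundness of A Cyclic Preproof}.
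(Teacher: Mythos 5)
Your proposal follows essentially the same route as the paper's proof: a case analysis reducing to rules $([\alpha])$, $(\la\alpha\ra)$ and $(\Sub)$, the prepend-the-transition/proper-suffix correspondence between counter-example paths to realise the replacement template of Definition~\ref{def:Relation pmult}, the termination side-condition of Definition~\ref{def:Progressive Step/Progressive Derivation Trace} to secure non-emptiness in the $\la\cdot\ra$ case, and Definition~\ref{def:Abstract Substitution} to supply the new evaluation in the $(\Sub)$ case. One small correction: passing from $\vdash_{P_{\Oper,\Terminate}}(\Gamma\Rightarrow (\alpha,\sigma)\trans(\alpha',\sigma'),\Delta)$ to the fact that $\rho(\alpha,\sigma)\trans\rho(\alpha',\sigma')$ is a genuine element of $\Oper$ uses the \emph{soundness} of $P_{\Oper,\Terminate}$, not its completeness.
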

\fi

In Lemma~\ref{lemma:infinite descent sequence}, note that if 
there is no derivation paths, then the preproof itself must be a finite proof tree. 
Then its soundness is straightforward according to the soundness of rules in Table~\ref{table:General Rules for LDL}. 

\ifx
Lemma~\ref{lemma:infinite descent sequence} shows the critical step of proving Theorem~\ref{theo:Soundness of A Cyclic Preproof} that 
along a derivation step $(\nu, \nu')$, we can obtain a counter-example $\CT(\rho', \tau', \nu')$ of a formula $\tau'$ in $\nu'$ from the counter-example $\CT(\rho, \tau, \nu)$ of a formula $\tau$ in $\nu$. 

Intuitively, Lemma~\ref{lemma:infinite descent sequence} shows that along an arbitrary derivation step from a dynamic formula $\tau$ to a formula $\tau'$, the invalidity of $\tau$ leads to the invalidity of $\tau'$, and the counter-example paths $F(\rho, \alpha', \sigma', \phi)$ of $\tau'$ is less than the counter-example paths $F(\rho, \alpha, \sigma, \phi)$ of $\tau$ w.r.t. relation $\pmult$. Moreover, a progressive derivation step guarantees a strictly less-than relation between $F(\rho, \alpha', \sigma', \phi)$ and $F(\rho, \alpha, \sigma, \phi)$. 
\fi

Based on Theorem~\ref{theo:soundness of rules for LDL} and Lemma~\ref{lemma:infinite descent sequence}, we give the proof of Theorem~\ref{theo:Soundness of A Cyclic Preproof} as follows. 

\begin{proof}[Proof of Theorem~\ref{theo:Soundness of A Cyclic Preproof}]
	As stated previously, we prove by contradiction and only focus on the cases where 
	the first formula $\tau_1$ of a progressive trace is of the form: $\sigma : [\alpha]\phi$ or $\sigma : \la\alpha\ra\phi$. 
	Let the progressive trace be $\tau_1\tau_2...\tau_m...$ over a derivation path
	$...\nu_1\nu_2...\nu_m...$ ($m\ge 1$) starting from $\tau_1$ in $\nu_1$, where $\nu_1\dddef (\Gamma\Rightarrow \tau_1, \Delta)$, and formula $\mfr{P}(\nu_i)$  $(i\ge 1)$ for each node $\nu_i$ is invalid.

	From that $\mfr{P}(\Gamma\Rightarrow \tau_1, \Delta)$ is invalid, 
	there is an evaluation $\rho$ satisfying that $\rho\models \Gamma$ but $\rho\not\models \tau_1$, so $\CT_1(\rho, \tau_1, \nu_1)$ is well defined. 
	By Lemma~\ref{lemma:infinite descent sequence}, 
	from $\CT_1$ we can obtain an infinite sequence of counter-examples: 
	$\CT_1,\CT_2,...,\CT_m,...$, with each $\CT_i$ ($i\ge 1$) being the counter-example of formula $\tau_i$, and which satisfies that
	$\CT_1\multr \CT_2\multr...\multr \CT_m\multr...$. 
	Moreover, since $\tau_1\tau_2...\tau_m...$ is progressive (Definition~\ref{def:Cyclic Preproof}), 
	there must be an infinite number of strict relation $\pmult$ among these $\mult$s. This thus forms an infinite descent sequence w.r.t. $\pmult$, violating 
	the well-foundedness of relation $\mult$ (Proposition~\ref{prop:well-foundedness of relation pmult}). 
	
	\ifx
	Let $\tau\in \{\sigma:[\alpha]\phi, \sigma:\la\alpha\ra\phi\}$. 
	Suppose $\mfr{P}(\Gamma\Rightarrow \tau)$ is not true. 
	Then there is an evaluation $\rho$ satisfying that 
	$\rho\models \Gamma$ but $\rho\not\models \tau$, 
	making the counter-example set $F(\rho, \alpha, \sigma, \phi)$ non-empty. 
	By~\ref{item:Termination Finiteness} of Definition~\ref{def:Program Properties}, 
	$F(\rho, \alpha, \sigma, \phi)$ is also finite. 
	According to Lemma~\ref{lemma:infinite descent sequence}, 
	we can obtain an infinite derivation trace $\tau_1\tau_2...\tau_m...$ over a derivation path
	$\nu_1\nu_2...\nu_m...$ ($m\ge 1$) and each set $F_i$ (with $F_1\dddef F(\rho, \alpha, \sigma, \phi)$) as a non-empty finite counter-example of formula $\tau_i$ ($i\ge 1$). 
	By Lemma~\ref{lemma:infinite descent sequence},  from these counter-examples we obtain an infinite relation chain:
	$F_1\multr F_2\multr...\multr F_m\multr...$. 
	Moreover, since $\tau_1\tau_2...\tau_m...$ is progressive (Definition~\ref{def:Cyclic Preproof}), there must be an infinite number of strict relation $\pmultr$ among these relations. This thus forms an infinite descent sequence w.r.t. relation $\pmultr$, violating 
	the well-foundedness of relation $\pmultr$ (Proposition~\ref{prop:well-foundedness of relation pmult}). 
	
	\fi
\end{proof}


\ifx
Therefore, it is impossible to discuss the completeness of its proof system, unless the structures of formulas, programs and configurations are given specifically.
Whether the proof system of a ``specific'' $\LDL$, which is given by specific formulas, programs and configurations, is complete or not, depends on whether we can build a cyclic preproof for any formula in this logic, by constructing suitable configurations so that every infinite derivation path has a progressive derivation trace.
\fi

\section{Related Work}
\label{section:Related Work} 
The idea of reasoning about programs directly based on program  behaviours is not new. 
Previous work such as~\cite{Rosu12,Rosu13,Stefanescu14,X.Chen19} in last decade has addressed this issue using theories based on rewriting logics~\cite{Meseguer12}. 
Matching logic~\cite{Rosu12} is based on patterns and pattern matching. 
Its basic form, a reachability rule $\varphi\Rightarrow \varphi'$ (where $\Rightarrow$ means differently from in sequents here), captures whether pattern $\varphi'$ is reachable from pattern $\varphi$ in a given pattern reachability system. 
The concept of `patterns' in matching logic has a wider scope of meaning than the concepts of `programs', `configurations' and `formulas' in $\LDL$. 
Based on matching logic, one-path and all-paths reachability logics~\cite{Rosu13,Stefanescu14} were developed by enhancing the expressive power of the reachability rule. 
A more powerful matching $\mu$-logic~\cite{X.Chen19} was proposed by adding a least fixpoint $\mu$-binder to matching logic. 

In terms of expressiveness, the semantics of modality $[\cdot]$ in dynamic logic cannot be fully captured by matching logic and one-path reachability logic when the programs are non-deterministic. 
We conjecture that matching $\mu$-logic can encode $\LDL$.
As a generic calculus, it has been claimed that matching $\mu$-logic can encode traditional dynamic logics (cf.~\cite{X.Chen19}). 
Compared to these theories based on rewriting logics, 
$\LDL$ has a more specific logical structure and belongs to a type of modal logics, with modalities $[\cdot]$ and $\la\cdot\ra$ to efficiently capture and reason about program specifications. 
On the other hand, to encode a $\LDL$ specification $\phi\to (\sigma : \la\alpha\ra \psi)$ in matching logic for example,  
one needs to somehow `mix up' all structures together to form a pattern, in a form like: $\varphi\dddef \alpha\wedge \sigma\wedge \phi$ and $\varphi'\dddef \ter\wedge \sigma'\wedge \psi$ for some configuration $\sigma'$, in order to conduct a reachability rule $\varphi\Rightarrow{} \varphi'$.

\ifx
Compared to matching, reachability and matching-$\mu$ logics, $\LDL$ is based on a different proof theory that is deeply rooted in the theories of dynamic logics and cyclic reasoning rather than coinduction~\cite{Rosu13}. 
One advantage of $\LDL$ is that $\LDL$ comes to program deductions in a more natural way in the sense that program models and logical formulas are explicitly expressed and clearly separated by modality $[\cdot]$. 
On the other hand, to express a $\LDL$ specification, like $(\sigma : \phi)\Rightarrow{} (\sigma : \la\alpha\ra \psi)$, in matching logic for example, 
one needs to somehow `mix up' all structures together to form a pattern, in a form like: $\varphi\dddef \alpha\wedge \sigma\wedge \phi$ and $\varphi'\dddef \ter\wedge \sigma'\wedge \psi$ for some configuration $\sigma'$, in order to conduct the reachability rule $\varphi\Rightarrow{} \varphi'$.  
\fi

\ifx
This is against the philosophy we uphold in this paper 
to separate programs and formulas throughout the entire derivation process as in traditional dynamic logics.   
\fi

\cite{Moore18}~proposed a general program verification framework based on coinduction. 
Using the terminology in this paper, 
in~\cite{Moore18} a program specification can be expressed as a pair $(c, P)$ with $c$ a program state and $P$ a set of program states, capturing the exact meaning of 
formula $\sigma : [\alpha]\phi$ in $\LDL$ if we let $c$ be $(\alpha, \sigma)$ and let $P$ represent the semantics of formula $\phi$.   
The authors designed a method to derive a program specification $(c, P)$ in a coinductive way 
according to the operational semantics of $c$. 
Following~\cite{Moore18}, \cite{X.Li21}~also proposed a general framework for program reasoning,  but via big-step operational semantics. 
Unlike the frameworks in~\cite{Moore18} and~\cite{X.Li21}, which are directly based on mathematical set theory, 
$\LDL$ is a logic, which in some cases is more convenient to catch program properties.   
In terms of expressiveness, $\LDL$ can also express the negation of a dynamic formula $\sigma : [\alpha]\phi$ as the dual formula $\sigma : \la\alpha\ra\neg \phi$, whose meaning cannot be captured in the framework of~\cite{Moore18}.

The structure `updates' in dynamic logics, used in Java dynamic logic~\cite{Beckert2016}, differential dynamic logic~\cite{Platzer07b}, dynamic trace logic~\cite{Beckert13}, etc., works as a special case of the configurations proposed in this paper. 
As illustrated in Section~\ref{section:Example One: A While Program}, a configuration in $\LDL$ can be more than just a ``delay substitution'' (cf.~\cite{Beckert13}) of variables and terms. 

The proof system of $\LDL$ relies on the cyclic proof theory which firstly arose in~\cite{Stirling91} and was later developed in 
different logics such as~\cite{Brotherston07} and \cite{Brotherston08}. 
Traditional dynamic logics' proof systems are not cyclic ones.  
In~\cite{Jungteerapanich09}, 
Jungteerapanich proposed a complete cyclic proof system for $\mu$-calculus, which subsumes PDL~\cite{Fischer79} in its expressiveness. In~\cite{Docherty19}, Docherty and Rowe proposed a complete labelled cyclic proof system for PDL. 
Both $\mu$-calculus and PDL, as mentioned in Section~\ref{section:Summery}, are logics designed for dissolving regular expressions as their program models.  
In terms of program structures, however, $\LDL$ is relatively simpler than PDL, in which programs can also depend on dynamic formulas. 
The labelled form of $\LDL$ formula $\sigma : [\alpha]\phi$ is inspired from~\cite{Docherty19}, where a labelled PDL formula is of the form $s : [\alpha]\phi$, with $s$ a world in a Kripke structure. 
While in $\LDL$, a configuration $\sigma$ can be an explicit term. 

There has been some other work for generalizing the theories of dynamic logics, such as~\cite{MossakowskiEA09,Hennicker19}. 
However, generally speaking, they neither consider structures as general as allowed by configurations and programs in $\LDL$, nor adopt a similar approach for reasoning about programs. 
\cite{MossakowskiEA09} proposed a complete generic dynamic logic for monads of programming languages. 
In~\cite{Hennicker19}, the authors proposed a dynamic logic for reasoning about program models that are more general than regular expressions of PDL, also based on program behaviours (where it is called ``interaction-based'' behaviours). 
But there the program states are still abstract worlds of Kripke structures as in PDL, unlike the configurations in $\LDL$. 
And yet no proof systems have been proposed and analyzed for the logic.  

\section{Some Discussions \&\ Future Work}
\label{section:Discussions and Future Work}

In this paper, we propose a dynamic-logic-like parameterized formalism allowing instantiations of a rich set of programs and formulas in interested domains. 
$\LDL$ provides a flexible verification framework to encompass existing dynamic-logic theories, by supporting a direct symbolic-execution-based reasoning according to programs behaviours, while still preserving structure-based reasoning through lifting processes. 
In practical verification, trade-off can be made between these two types of reasoning.  
\ifx
Essentially, $\LDL$ is a second-ordered logic restricted to a particular form that facilitates expressing and reasoning about programs and systems in a dynamic-logic style. $\LDL$ provides a flexible verification framework to encompass different dynamic logic theories, in which different types of reasoning are offered to make a trade-off between structure-based reasoning and symbolic-execution-based reasoning. 
\fi
\ifx
$\LDL$ can be a very powerful logic calculus due to its general forms. 
However, we do not attempt to propose a ``unified framework'' for programming and verifications, just as what matching-$\mu$ logic and UTP theory~\cite{Hoare98} aim for. 
Instead, we would rather treat $\LDL$ as a second-ordered logic restricted to a particular form that facilitates expressing and reasoning about programs and systems in a dynamic-logic style. 
\fi
Through examples displayed in this paper, one can see the potential of $\LDL$ to be used for other types of programs, especially those do not support structural rules, such as transitional models~\cite{Hennicker19} or neural networks~\cite{XiyueZhang22}. 
On the other hand, due to its generality, $\LDL$ loses the completeness that a dynamic logic usually possesses, and must rely on additional lifted rules to support a compositional reasoning.

\ifx
Though we only display an example in this paper, it can be easily seen that $\LDL$ can be used to reason about many other types of programs, especially those do not support structural rules, such as neural networks, as done recently in~\cite{XiyueZhang22}, and also many other abstract program models, such as Kleene algebra with tests~\cite{kozen97} and CCS~\cite{Milner82}.  
\fi

\ifx
$\LDL$ can be taken as a general theory that can subsume the existed dynamic logic theories in which 
program models satisfy the program properties declared in Definition~\ref{def:Program Properties}. Any rule in a dynamic logic theory can be lifted as a labelled rule in $\LDL$, which is derived without changing the labels of a formula.  
For instance, in the example discussed in Section~\ref{section:Summery}, 
to derive formula $\psi\dddef (\{x\mapsto t\} : x\ge 0\Rightarrow{} \{x\mapsto t\} : [x := x + 1]x > 0)$, one can also apply the ``lifting version'' of the assignment rule
($x := e$) as: 
$\begin{aligned} \infer[]
	{\sigma : [x := e]\psi}
	{\sigma : \psi[x / e]}\end{aligned}$, 
and $\psi$ then becomes $\psi'''\dddef (\{x\mapsto t\} : x\ge 0)\to (\{x\mapsto t\} : x + 1 > 0)$, which also can be transformed into formula $\phi'$: $t\ge 0\to t + 1 > 0$ after the applications of $\{x\mapsto t\}$. 
This ``lifting ability'' provides $\LDL$ with a flexible framework in which 
different inference rules can be applied to make a trade-off between 
structure-based reasoning and symbolic executions in practical deduction processes. 
More work will be on analyzing how program verification can benefit from this flexibility of derivation. 
\fi

\ifx
This ``lifting ability'' allows $\LDL$ to subsume special dynamic logic theories and 
provides a flexible framework in which 
different inference rules can be applied to make a trade-off between 
compositional reasoning and symbolic executions in practical deduction processes. 
\fi

\ifx
notice that unlike matching-$\mu$ logic and UTP theory~\cite{???}, 
$\LDL$ never intend to be a ``unified framework'' for programming and verification. 
We keep stressing that the main advantage of $\LDL$ is that 
a $\LDL$ formula is still in a dynamic logical form $[\alpha]\phi$ (with labels $\sigma$), which in our opinion is convenient for program derivations. 
\fi


About future work, 
currently we plan a full mechanization of $\LDL$ using Coq~\cite{Bertot04}, which on one hand can help verifying the correctness of the theory itself, and on the other hand provides a verification framework built upon Coq to support verifying general programs. 
Our long-term goal is to develop an independent program verifier based on the theory of $\LDL$, which could fully make use of the advantages of symbolic-execution-based reasoning and cyclic proofs to maximize automation of verifications. 
One possible way is based on \textit{Cyclist}~\cite{Brotherston12}, a proof engine supported by an efficient cyclic-proof-search algorithm. 
To see the full potential of $\LDL$, we are also trying to use $\LDL$ to describe and verify more types of programs or system models.



\ifx
The general framework provided by $\LDL$ could be helpful for maximizing the automation of these theorem provers for program verification. 
To see the full potential of $\LDL$, we are also trying to use $\LDL$ to describe and verify more types of programs or structures. 
\fi

\ifx
\section{Conclusion and Future Work}
\label{section:Conclusion}
We propose a dynamic-logic-like formalism called labelled dynamic logic ($\LDL$) which provides a general framework to support direct program reasoning via operational semantics. 
$\LDL$ allows general forms of programs, configurations and formulas as `parameters' 
and inherits from traditional dynamic logics the modal operator $[\cdot]$. 
We build a cyclic proof system for $\LDL$, which relies a cyclic preproof structure to gurrantee its soundness. 
We formally prove the soundness of $\LDL$, making use of the well-foundedness of partial-order relations related to the semantics of $\LDL$ dynamic formulas. 
Like other dynamic logics and Hoare-style logics, $\LDL$ separates program models and formulas in its expressions to allow a clear observation of program evolution during the deduction processes. 
We analyze two case studies to show how $\LDL$ can be used to reason about different types of programs based on their operational semantics.

$\LDL$ can be a very powerful tool due to its general forms. 
One future work will focus on mechanization of $\LDL$ in theorem provers like Coq~\cite{Bertot10}. 
The general framework provided by $\LDL$ could be helpful for maximizing the automation of these theorem provers for program verification. 
To see the full potential of $\LDL$, we are also trying to use $\LDL$ to describe and verify more types of programs or structures. 
\fi

\paragraph{Acknowledgment}
This work is partially supported by the Youth Project of National Science Foundation of China (No. 62102329), the Project
of National Science Foundation of China (No. 62272397), and the New Faculty Start-up Foundation of NUAA (No. 90YAT24003).


\bibliographystyle{splncs04}
\bibliography{main}

\appendix

\section{Other Propositions and Proofs}
\label{section:Other Propositions and Proofs}

\ifx
\begin{lemma}
	For any evaluation $\rho_A$ and substitution $t[t'/x]$ given free-variable set $A$, $t'\in \TA$ and $x\in \Var$, the following properties hold: 
	\begin{enumerate}
		\item $t\equiv t[x/x]$;
		\item $t\equiv t[t'/x]$, if $x\notin \FV(t)$;
		\item $\rho_A(t[t'/x]) \equiv \rho_{A\setminus\{x\}}(t)[\rho_A(t')/x]$.
	\end{enumerate}
\end{lemma}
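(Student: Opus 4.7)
The plan is to prove all three properties simultaneously by structural induction on the term $t\in \TA$, exploiting the fact that substitutions and evaluations are structural functions with respect to the signature set $\Sigma$ (Section~\ref{section:Programs and Configurations} and Definition~\ref{def:Abstract Substitution}). For each property the base cases split into variables and $0$-ary signatures, the inductive step treats $n$-ary signature applications $s_n(t_1,\ldots,t_n)$ by pushing the substitution and the evaluation inside using their structurality and invoking the induction hypothesis on each $t_i$, and binder cases (assignments $y:=(\cdot)$, quantifiers $\forall y.(\cdot), \exists y.(\cdot)$, configuration mappings $y\mapsto(\cdot)$) are handled by the standard convention that bound occurrences are untouched by substitution.

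For property 1, the only nontrivial base case is the variable case: if $t\equiv x$ then $t[x/x]\equiv x\equiv t$, and if $t\equiv y\not\equiv x$ then $t[y/x]$ is defined to leave $y$ unchanged. For property 2, the base variable case is also immediate: $x\notin \FV(t)$ with $t$ a variable forces $t\not\equiv x$, so $t[t'/x]\equiv t$. Both properties then lift through signature constructors by structurality, and through binders by observing that when the binder is $x:=(\cdot)$ or $\forall x.(\cdot)$ the substitution has no effect on the body (as $x$ becomes bound), while for a binder on a different variable $y$ one applies the induction hypothesis to the body after renaming $y$ if necessary to avoid capture.

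The real work is in property 3, which is the classical substitution lemma adapted to the parameterised-evaluation setting $\rho_A$. The plan is to case on $t$: if $t\equiv x$, both sides reduce to $\rho_A(t')$ using property 1; if $t\equiv y\not\equiv x$, the left side becomes $\rho_A(y)$ while the right side becomes $\rho_{A\setminus\{x\}}(y)[\rho_A(t')/x]\equiv \rho_A(y)$ since $\rho_{A\setminus\{x\}}$ and $\rho_A$ must agree on $y\neq x$ and then $x\notin \FV(\rho_A(y))$ lets property 2 collapse the outer substitution. For a signature term $s_n(t_1,\ldots,t_n)$, both sides decompose by structurality of $(\cdot)[t'/x]$, of $\rho_A$, and of $\rho_{A\setminus\{x\}}$ into the same shape $s_n(\ldots)$, and the induction hypothesis closes each component.

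The main obstacle I anticipate is the binder case of property 3: when $t$ has the form $b\,y\,t_0$ for some binder $b$ on a fresh variable $y\neq x$, one must commute the outer substitution and evaluation past the binder without capturing $y$ in $t'$, and also argue that $\rho_{A\setminus\{x\}}$ behaves correctly on the body of the binder (in particular that removing $x$ from $A$ does not interfere with the binder's variable $y$). The standard remedy is Barendregt's convention: silently $\alpha$-rename $y$ so that $y\notin \FV(t')\cup\{x\}\cup A$, after which the equation follows from the induction hypothesis on $t_0$ and property 2 used to discharge the now-fresh $y$. A secondary subtlety is the configuration-interpretation binder $x\mapsto(\cdot)$: since $\app$ is defined structurally but treats $x$ as a binder in the labelled-formula part and as free in the expression part, the induction has to split the mapping into its two components before recursing.
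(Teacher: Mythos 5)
Your structural induction is the standard argument for a substitution lemma and it is essentially correct, but there is nothing in the paper to compare it against: the lemma appears only in a disabled fragment of the source and its proof is left as a literal placeholder. The only surviving trace of the statement is the unjustified chain $\sigma^*(\phi[e/x])\equiv \sigma^*_{\Var_\WP\setminus\{x\}}(\phi)[\sigma^*(e)/x]\equiv\rho_{\Var_\WP\setminus\{x\}}(\phi)[\sigma^*(e)/x]\equiv\rho(\phi)$ asserted ``by the definition of the substitutions'' in the soundness argument for rule $(\Sub)$ in the \emph{While} domain, which is exactly property~3; your proof supplies the missing justification. Two adjustments are worth making. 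First, the induction can only be run inside a concrete program domain such as $\Domain{\WP}$, where the term grammar and the clauses for $\FV$, $\eta^*_A$ and $\sigma^*$ are actually given; the abstract set $\TA$ has no inductive structure to recurse on, so the lemma must be read as a coherence property of each instantiation rather than a theorem about $\TA$ in general. Second, the ``simultaneous'' induction is misstated in form: in the variable case of property~3 you invoke property~2 on the closed term $\rho_A(y)$, and in the binder case on $\rho_A(t')$ --- neither is a subterm of $t$, so these uses are not covered by a mutual induction hypothesis on $t$. The fix is immediate (establish properties~1 and~2 first, for all terms, by their own inductions, and only then prove~3), but as written the dependency is circular. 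Your treatment of binders by admissibility and renaming matches the convention the paper explicitly adopts, and your remark that a mapping $x\mapsto e$ binds $x$ only in the labelled-formula position while leaving the variables of $e$ free is consistent with the paper's definitions of $\FV$ and $\BD$.
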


\begin{proof}
	???
\end{proof}
\fi

\begin{lemma}
	\label{lemma:from logical consequence to sequent derivation}
	For any formulas $\phi_1,...,\phi_n, \phi$, 
	if formula $\phi_1\wedge ...\wedge\phi_n\to \phi$ is valid, 
	then sequent 
	$$
	\begin{aligned}
		\infer[]
		{
			\Gamma\Rightarrow \phi, \Delta
		}
		{
			\Gamma\Rightarrow \phi_1, \Delta
			&
			...
			&
			\Gamma\Rightarrow \phi_n, \Delta
	}\end{aligned}$$ is sound for any contexts $\Gamma, \Delta$. 
\end{lemma}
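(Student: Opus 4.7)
The plan is to reduce the lemma directly to the semantic definition of $\mfr{P}$ for sequents given in Section~\ref{section:Proof System and Sequent Calculus}, together with the assumed validity of the propositional implication. First I would fix an arbitrary evaluation $\rho\in\Eval$ and arbitrary contexts $\Gamma,\Delta$, and assume each premise sequent $\Gamma\Rightarrow\phi_i,\Delta$ ($1\le i\le n$) is valid; the task is then to show that $\rho$ validates the conclusion $\Gamma\Rightarrow\phi,\Delta$, i.e.\ that $\rho\models\mfr{P}(\Gamma\Rightarrow\phi,\Delta)$.

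Next I would case-split on $\rho$. If $\rho\not\models\bigwedge_{\psi\in\Gamma}\psi$, then the conclusion sequent is vacuously validated. Otherwise, if some $\psi\in\Delta$ satisfies $\rho\models\psi$, the conclusion again holds immediately. In the remaining case, $\rho\models\bigwedge_{\psi\in\Gamma}\psi$ and $\rho\not\models\bigvee_{\psi\in\Delta}\psi$, so validity of each premise $\Gamma\Rightarrow\phi_i,\Delta$ forces $\rho\models\phi_i$ for every $i=1,\ldots,n$. Therefore $\rho\models\phi_1\wedge\cdots\wedge\phi_n$, and by the assumed validity of $\phi_1\wedge\cdots\wedge\phi_n\to\phi$ we obtain $\rho\models\phi$, which establishes $\rho\models\mfr{P}(\Gamma\Rightarrow\phi,\Delta)$. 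Since $\rho$ was arbitrary, the rule is sound.

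The argument is purely propositional, and I do not expect any real obstacle. The only subtlety worth mentioning is that $\rho\models(\cdot)$ is defined via $\boolsem\circ\rho$ on propositions, so I should note that $\rho(\phi_i)$ and $\rho(\phi)$ are propositions; this is guaranteed by $\phi_i,\phi\in\Fmla$ together with the structural nature of $\rho$ stipulated in Section~\ref{section:Programs and Configurations}, which ensures logical connectives $\wedge$ and $\to$ distribute over $\rho$ and preserve the Boolean semantics used to interpret the assumed validity of $\phi_1\wedge\cdots\wedge\phi_n\to\phi$.
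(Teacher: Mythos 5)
Your proposal is correct and follows essentially the same argument as the paper: the paper's proof also fixes an arbitrary $\rho$, restricts attention to the non-trivial case where $\rho\models\Gamma$ and $\rho\not\models\psi$ for all $\psi\in\Delta$, extracts $\rho\models\phi_i$ from each premise, and concludes $\rho\models\phi$ from the validity of $\phi_1\wedge\cdots\wedge\phi_n\to\phi$. Your explicit handling of the vacuous cases and the remark about $\rho$ being structural are harmless elaborations of what the paper leaves implicit.
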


\begin{proof}
	Assume $\mfr{P}(\Gamma\Rightarrow \phi_1, \Delta)$,...,$\mfr{P}(\Gamma\Rightarrow \phi_n, \Delta)$ are valid, 
	for any evaluation $\rho\in \Eval$, let $\rho\models \Gamma$ and $\rho\not\models \psi$ for all $\psi\in \Delta$, 
	we need to prove $\rho\models \phi$. 
	From the assumption we have $\rho\models \phi_1$, ..., $\rho\models \phi_n$. 
	$\rho\models \phi$ is an immediate result since
	$\phi_1\wedge ...\wedge \phi_n\to \phi$. 
\end{proof}

{\parindent 0pt
	\textbf{Content of Theorem~\ref{theo:soundness of rules for LDL}: }
	Provided that proof system $P_{\Oper, \Terminate}$ is sound, all rules in $\mcl{DL}_p$ (Table~\ref{table:General Rules for LDL}) are sound. 
}

Below we only prove the soundness of rules $([\alpha])$, $(\la\alpha\ra)$ and $(\Sub)$. 
Other rules can be proved similarly based on the semantics of $\LDL$ formulas (Definition~\ref{def:Semantics of Labelled LDL Formulas}).  
By Lemma~\ref{lemma:from logical consequence to sequent derivation}, 
it is sufficient to prove $\phi_1\wedge...\wedge\phi_n\to \phi$, that is, for any evaluation $\rho$, if $\rho\models \phi_1$, ..., $\rho\models\phi_n$, then $\rho\models \phi$. 

\begin{proof}[Proof of Theorem~\ref{theo:soundness of rules for LDL}]        
	For rule $([\alpha])$, by the soundness of proof system $\vdash_{P_{\Oper, \Terminate}} \Gamma\Rightarrow (\alpha, \sigma)\trans (\alpha', \sigma'), \Delta$, it is sufficient to prove that 
	for any evaluation $\rho$, if $\rho\models \sigma' : [\alpha']\phi$ and $(\rho(\alpha,\sigma)\trans\rho(\alpha',\sigma'))\in \Oper$ for all $(\alpha', \sigma')\in \Phi$, then $\rho\models \sigma : [\alpha]\phi$. 
	However, this is straightforward by Definition~\ref{def:Semantics of Labelled LDL Formulas} and the completeness of $P_{\Oper, \Terminate}$ w.r.t. $\Oper$, because since $\alpha\not\equiv \ter$, 
	any path starting from $\rho(\alpha, \sigma)$ is of the form: 
	$\rho(\alpha, \sigma)\rho(\alpha', \sigma')...(\ter, \sigma'')$ for some $(\alpha', \sigma')\in \Phi$. 

	For rule $(\la\alpha\ra)$, by the soundness of proof system $\vdash_{P_{\Oper, \Terminate}} \Gamma\Rightarrow (\alpha, \sigma)\trans (\alpha', \sigma'), \Delta$, it is sufficient to prove that for any evaluation $\rho$, 
	if $\rho\models \sigma' : \la\alpha'\ra \phi$ and $(\rho(\alpha,\sigma)\trans \rho(\alpha', \sigma'))\in \Oper$ for some $(\alpha', \sigma')$, then 
	$\rho\models \sigma : \la\alpha\ra\phi$. 
	However, this is direct by Definition~\ref{def:Semantics of Labelled LDL Formulas}. 
	
	For rule $(\Sub)$, it is sufficient to prove that for any evaluation $\rho\in \Eval$, there exists an evaluation $\rho'\in \Eval$ such that for each formula $\phi\in \Gamma\cup \Delta$, 
	we have $\rho(\Sub(\phi))\equiv \rho'(\phi)$. 
	But this just matches the definition of $\Sub$ in Definition~\ref{def:Abstract Substitution}. 
	
	\ifx
	For rule $(\sigma \textit{Sub})$, 
	for any evaluation $\rho\in \Eval$, let $\rho'\dddef \rho[x\mapsto \rho(t)]$. 
	By the definition of substitutions in Section~\ref{section:Terms and Substitutions}, we can have that 
	for any term $u\in \TA$, $\rho(u[t/x]) \equiv \rho|_{\FV(u)\setminus\FV_x(u)}(u)[\rho(t)/x] \equiv \rho'^*|_{\FV(u)\setminus\FV_x(u)}(u)[\rho(t)/x] \equiv \rho'^*(u)$. 
	Therefore, it is easy to prove that $\mfr{P}(\Gamma\Rightarrow \Delta)$ implies $\mfr{P}(\Gamma[t/x]\Rightarrow \Delta[t/x])$. 
	\fi
\end{proof}

\ifx
{\parindent 0pt
	\textbf{Content of Lemma~\ref{lemma:infinite descent sequence}:}  
	In a cyclic preproof (where there is at least one derivation path), 
	let $(\tau, \tau')$ be a pair of a derivation trace over a node pair $(\nu, \nu')$ of an invalid derivation path, where $\tau\in \{\sigma : [\alpha]\phi, \sigma : \la\alpha\ra\phi\}$ and $\nu\dddef (\Gamma\Rightarrow \tau, \Delta)$. 
	For any counter-example $\CT(\rho, \tau, \nu)$ of $\tau$, 
	there exists an evaluation $\rho'$ and a counter-example $\CT(\rho', \tau', \nu')$ of $\tau'$ such that $\CT(\rho', \tau', \nu')\mult \CT(\rho, \tau, \nu)$. Moreover, if $(\tau, \tau')$ is a progressive step, then $\CT(\rho', \tau', \nu')\pmult \CT(\rho, \tau, \nu)$.  
}

Before proving Lemma~\ref{lemma:infinite descent sequence}, we introduce the concept of \emph{corresponding path} and prove Lemma~\ref{lemma:construction of terminal path}. 

Intuitively, a corresponding path is a result of program transitions made by the derivations of instances of rules $([\alpha])$ and $(\sigma\la\sigma\ra)$ along a derivation trace. 

\begin{definition}[Corresponding Paths]
	\label{def:Corresponding Path}
	Given a derivation trace $\tau_1\tau_2...\tau_m...$ over a derivation path $\nu_1\nu_2...\nu_m...$ ($m\ge 1$), where $\tau_i\equiv\sigma_i:[\alpha_i]\phi$ (resp.  $\tau_i\equiv\sigma_i:\la\alpha_i\ra\phi$) for each $i\ge 1$. 
	Each pair $(\tau_i, \tau_{i+1})$ either satisfies $\tau_i\equiv \tau_{i+1}$ or $(\nu_i, \nu_{i+1})$ is a conclusion-premise pair of an instance of rule $([\alpha])$ (resp. $(\la\alpha\ra)$). 
	The ``corresponding sequence'' of the derivation trace $\tau_1\tau_2...\tau_m...$ is a sequence of the form: $(l_1, \alpha_1, \sigma_1)...(l_m, \alpha_m, \sigma_m)...$ ($m\ge 1$), where each $l_i\in \{0, 1\}$ satisfies that 
	\begin{enumerate}
		\item $l_i = 0$, if $\tau_i\equiv \tau_{i+1}$;
		\item $l_i = 1$, if $(\nu_i, \nu_{i+1})$ is a conclusion-premise pair of an instance of rule $([\alpha])$ (resp. $(\la\alpha\ra)$). 
	\end{enumerate}
	Given an evaluation $\rho$, 
	the ``corresponding path'' of the derivation trace $\tau_1\tau_2...\tau_m...$ is the sequence of program states obtained from the corresponding sequence $(l_1, \alpha_1, \sigma_1)...(l_m, \alpha_m, \sigma_m)...$ by only keeping pairs $(\rho(\alpha_i), \rho(\sigma_i))$ whose $l_i = 1$ (while maintaining the order of these pairs the same as the corresponding elements in the sequence).

	\ifx
	Given a derivation trace $\tau_1\tau_2...\tau_m...$ over a derivation path $n_1n_2...n_m...$ ($m\ge 1$), where $\tau_1\in \{\sigma:[\alpha]\phi, \sigma:\la\alpha\ra\phi\}$, each pair $(\tau_i, \tau_{i+1})$ ($i\ge 1$) either satisfies $\tau_i\equiv \tau_{i+1}$ or $(n_i, n_{i+1})$ is a conclusion-premise pair of an instance of rule $([\alpha])$ or $(\la\alpha\ra)$,  
	a ``corresponding path'' $(\alpha_1, \sigma_1)...(\alpha_n, \sigma_n)...$ ($n\ge 1$) of the derivation trace $\tau_1\tau_2...\tau_m...$ satisfies that 
	$\alpha_1\in \Clo(\alpha), \sigma_1\in \Clo(\sigma)$, and for any $(\alpha_k,\sigma_k)\trans (\alpha_{k+1}, \sigma_{k+1})$ ($k \ge 1$) of the path and formula $\tau_i\dddef \sigma_i : [\alpha_i]\phi$ (resp. $\tau_i\dddef \sigma_i : \la\alpha_i\ra\phi$, $i\ge 1$) such that $\alpha_k\in \Clo(\alpha_i)$ and $\sigma_k\in \Clo(\sigma_i)$, 
	there is a pair $(\tau_j, \tau_{j+1})$ over $(n_j, n_{j+1})$ ($i \le j$) such that 
	$\tau_i\equiv ...\equiv \tau_j$, $(n_j, n_{j+1})$ is a conclusion-premise pair of an instance of rule $([\alpha])$ (resp. rule $(\la\alpha\ra)$), and $\tau_{j+1}\dddef \sigma_{j+1}[\alpha_{j+1}]\phi$ (resp. $\tau_{j+1}\dddef\sigma_{j+1}\la\alpha_{j+1}\ra\phi$) such that
	$\alpha_{k+1}\in \Clo(\alpha_{j+1})$ and $\sigma_{k+1}\in \Clo(\sigma_{j+1})$. 
	\fi
\end{definition}

\begin{lemma}
	\label{lemma:construction of terminal path}
	In a cyclic preproof of sequent $\Gamma \Rightarrow \sigma:\la\alpha\ra \phi$, 
	given an evaluation $\rho$, 
	any corresponding path $tr$ starting from $(\rho(\alpha), \rho(\sigma))$ of an infinite derivation trace $\tau_1\tau_2...\tau_m...$ ($m\ge 1$) starting from $\tau_1\dddef \sigma: \la\alpha\ra \phi$ eventually terminates.   
\end{lemma}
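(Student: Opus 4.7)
The plan is to prove the lemma by contradiction: assume the corresponding path $tr$ never reaches a state whose program component is $\ter$. Under this assumption I will extract an infinite strictly descending chain with respect to the well-founded relation $\pmult$ of Section~\ref{section:Well-founded Relation pmult}, contradicting Proposition~\ref{prop:well-foundedness of relation pmult}.

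First, I would invoke the cyclic hypothesis. By Definition~\ref{def:Cyclic Preproof}, the infinite derivation trace $\tau_1\tau_2\ldots$ carries infinitely many progressive steps. Because every $\tau_i$ is of the form $\sigma_i : \la\alpha_i\ra\phi$, each progressive step must, by Definition~\ref{def:Progressive Step/Progressive Derivation Trace}, be an instance of rule $(\la\alpha\ra)$ --- equivalently, an entry of the corresponding sequence with $l_i=1$ --- and so contributes a transition $(\rho(\alpha_i),\rho(\sigma_i)) \trans (\rho(\alpha_{i+1}),\rho(\sigma_{i+1}))$ to the corresponding path $tr$. Hence $tr$ is infinite. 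Moreover, the side condition $\vdash_{P_{\Oper,\Terminate}}(\Gamma \Rightarrow (\alpha', \sigma')\termi, \Delta)$ attached to progressive $(\la\alpha\ra)$-steps guarantees, via soundness of $P_{\Oper,\Terminate}$ and the structurality of $\rho$, that every state visited by $tr$ is terminal.

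Next, let $T_k$ be the set of minimum terminal paths starting from the $k$-th state of $tr$. By Definition~\ref{def:Program Properties}-\ref{item:Termination Finiteness} together with the terminality established above, each $T_k$ is nonempty and finite. The core combinatorial step is to show $T_{k+1}\pmult T_k$: the $k$-th transition of $tr$ lets one replace each element of $T_k$ beginning with that transition by its proper suffix (an element of $T_{k+1}$) and discard the remaining elements of $T_k$; since at least one such replacement occurs, the relation is strict. Iterating then produces an infinite descending chain $T_1\pmultr T_2\pmultr\ldots$ in $\pmult$, the sought contradiction.

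The main obstacle is a minimality subtlety: a path in $T_{k+1}$ may revisit the $k$-th state of $tr$ as an interior vertex, in which case prepending that state would break minimality and so the path is not literally the tail of any element of $T_k$. I plan to address this by working, in place of $T_k$, with the refinement consisting of those minimum terminal paths from the $k$-th state of $tr$ that avoid every state visited by $tr$ before step $k$; the $\ter$-freeness of $tr$ combined with the side condition will be used to show this refinement remains nonempty along $tr$ and admits the clean suffix-based correspondence needed to drive the $\pmult$-decrease and close the argument.
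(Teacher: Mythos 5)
Your strategy diverges from the paper's at the decisive point. The paper proves this lemma by reading the corresponding path as a descending chain $(\alpha_1,\sigma_1)\succeq(\alpha_2,\sigma_2)\succeq\cdots$ with respect to a well-founded order $\prec$ on program states (coming from an earlier formulation of ``progressive step'' in which a progressive $(\la\alpha\ra)$-instance must satisfy $(\alpha',\sigma')\prec(\alpha,\sigma)$); infinitely many progressive steps then give infinitely many strict descents along $tr$ itself, and the contradiction is immediate. You instead rely on the $\termi$-based side condition together with the order $\pmult$ on sets of minimum terminal paths. That substitution is where the proof breaks.

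The core step $T_{k+1}\pmult T_k$ is genuinely false, and the obstacle you flag at the end is not repairable. Concretely, take a nondeterministic program whose initial state $s_1=(\rho(\alpha),\rho(\sigma))$ has exactly the behaviours $s_1\trans s_2$, $s_2\trans s_1$ and $s_1\trans(\ter,\sigma_e)$; branching and termination finiteness hold. Both states are terminal, so every $(\la\alpha\ra)$-step of the infinite unfolding satisfies the side condition $\vdash_{P_{\Oper,\Terminate}}(\Gamma\Rightarrow(\alpha',\sigma')\termi,\Delta)$, the preproof is cyclic, and yet the corresponding path $s_1 s_2 s_1 s_2\cdots$ never terminates --- so under the definitions you are using the statement itself fails. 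In this example $T_1=\{s_1(\ter,\sigma_e)\}$ and $T_2=\{s_2\,s_1(\ter,\sigma_e)\}$; the unique element of $T_2$ is not a proper suffix of the unique element of $T_1$, so $T_2\pmult T_1$ fails, and your refinement $T'_2$ (minimum terminal paths from $s_2$ avoiding $s_1$) is empty, so the nonemptiness you defer to later cannot be established. The underlying reason is that terminality of every state on $tr$ only says termination is \emph{possible} from each state, not that the particular run $tr$ is forced to take it; a measure built from terminal paths reachable from $s_k$ therefore need not decrease along $tr$. Any correct proof of this lemma must use a progressiveness condition that strictly decreases a well-founded measure of the \emph{states on $tr$ themselves}, which is exactly what the paper's $\prec$-based argument does and what the $\termi$-based condition does not provide.
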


\begin{proof}
	By Definition~\ref{def:Corresponding Path} and rule $(\la\alpha\ra)$, 
	path $tr$ is of the form $(\alpha_1, \sigma_1)...(\alpha_n, \sigma_n)...$ satisfying that $\alpha_1\equiv \rho(\alpha)$, $\sigma_1\equiv \rho(\sigma)$ and 
	$(\alpha_1,\sigma_1)\succeq ...\succeq (\alpha_n, \sigma_n)\succeq...$. 
	If $tr$ does not terminate, since the derivation trace $\tau_1\tau_2...\tau_m...$ is progressive (by Definition~\ref{def:Cyclic Preproof}), 
	then alone these relation $\succeq$ between the elements of path $tr$, there must be 
	an infinite number of relations $\succ$. 
	This violates the well-foundedness of relation $\prec$ (see Definition~\ref{def:Well-foundedness}). 
	
	\ifx
	Since every derivation trace is progressive, according to rule $(\la\alpha\ra)$ and Definition~\ref{def:Corresponding Path}, if path $tr$ does not terminate, 
	then along the path there exists an infinite sequence of program states $(\alpha_1, \sigma_1),...,(\alpha_n, \sigma_n),...$ 
	satisfying that 
	$(\alpha_1, \sigma_1)\succ...\succ (\alpha_n, \sigma_n)\succ...$. 
	This violates the well-foundedness of relation $\prec$ (see Definition~\ref{def:Well-foundedness}). 
	\fi
\end{proof}
\fi

\ifx
{\parindent 0pt
	\textbf{Content of Proposition~\ref{prop:relation mult is partially ordered}:}  
	The following properties hold:
	\begin{enumerate}
		\item given two finite sets $\cnt_1, \cnt_2$ of finite paths, 
		$\cnt_1\multeq \cnt_2$ iff $\cnt_1 = \cnt_2$;
		\item $\mult$ is a partial-order relation. 
	\end{enumerate} 
}
\fi

{\parindent 0pt
	\textbf{Content of Proposition~\ref{prop:relation mult is partially ordered}:}  
	$\mult$ is a partial-order relation. 
}

\begin{proof}[Proof of Proposition~\ref{prop:relation mult is partially ordered}]
	The reflexivity is trivial. 
	The transitivity can be proved by the definition of `replacements' as described in Definition~\ref{def:Relation pmult} and the transitivity of relation $\psuf$.  
	Below we only prove the anti-symmetry. 
	
	For any finite sets $D_1, D_2$ of finite paths, if $D_1\mult D_2$ but $D_1\neq D_2$, let $f_{D_1,D_2} : D_1\to D_2$ be the function 
	defined such that for any $tr\in D_1$, either (1) $f_{D_1,D_2}(tr) \sufeq tr$; or (2) $tr$ is one of the replacements of a replaced element $f_{D_1,D_2}(tr)$ in $D_2$ with $tr\psuf f_{D_1,D_2}(tr)$. 
	
	For the anti-symmetry, 
	suppose $\cnt_1\mult\cnt_2$ and $\cnt_2\mult\cnt_1$ but $\cnt_1\neq \cnt_2$. 
	Let $tr\in \cnt_1$ but $tr\notin \cnt_2$. 
	Then from $\cnt_1\mult \cnt_2$, we have $tr\psuf f_{\cnt_1, \cnt_2}(tr)$. 
	Since $f_{\cnt_1, \cnt_2}(tr)$ is the replaced element, so $f_{\cnt_1, \cnt_2}(tr)\notin \cnt_1$. 
	By $\cnt_2\mult\cnt_1$, we have $f_{\cnt_1, \cnt_2}(tr)\psuf f_{\cnt_2, \cnt_1}(f_{\cnt_1, \cnt_2}(tr))$. 
	Continuing this process, we can construct an infinite descent sequence
	$tr\psuf f_{\cnt_1, \cnt_2}(tr)\psuf f_{\cnt_2, \cnt_1}(f_{\cnt_1, \cnt_2}(tr))\psuf ...$ w.r.t. relation $\suf$, which violates its well-foundedness. 
	So the only possibility is $\cnt_1 = \cnt_2$. 
	
	\ifx
	\textbf{Proof of 1}. 
	If $\cnt_1 = \cnt_2$, obviously we have $\cnt_1\multeq\cnt_2$ because both $\cnt_1\mult \cnt_2$ and $\cnt_2\mult \cnt_1$ hold. 
	
	For any finite sets $D_1, D_2$ of finite paths, if $D_1\mult D_2$ but $D_1\neq D_2$, let $f_{D_1,D_2} : D_1\to D_2$ be the function 
	defined such that for any $tr\in D_1$, either (1) $f_{D_1,D_2}(tr) \sufeq tr$; or (2) $tr$ is one of the replacements of $f_{D_1,D_2}(tr)$ in $D_2$ as described in Definition~\ref{def:Relation pmult} such that $tr\psuf f_{D_1,D_2}(tr)$. 
	
	If $\cnt_1\multeq\cnt_2$ but $\cnt_1 \neq \cnt_2$, let $tr\in \cnt_1$ but $tr\notin \cnt_2$. 
	Then from $\cnt_1\mult \cnt_2$, we have $tr\psuf f_{\cnt_1, \cnt_2}(tr)$. 
	Since $f_{\cnt_1, \cnt_2}(tr)$ is the replaced element, so $f_{\cnt_1, \cnt_2}(tr)\notin \cnt_1$. 
	By $\cnt_2\mult\cnt_1$, we have $f_{\cnt_1, \cnt_2}(tr)\psuf f_{\cnt_2, \cnt_1}(f_{\cnt_1, \cnt_2}(tr))$. 
	Continuing this process, we can construct an infinite descent sequence
	$tr\psuf f_{\cnt_1, \cnt_2}(tr)\psuf f_{\cnt_2, \cnt_1}(f_{\cnt_1, \cnt_2}(tr))\psuf ...$ w.r.t. relation $\suf$, which violates its well-foundedness. 
	So $\cnt_1 = \cnt_2$.

	\textbf{Proof of 2}. 
	\fi
	
\end{proof}

{\parindent 0pt
	\textbf{Content of Lemma~\ref{lemma:infinite descent sequence}:}  
	In a cyclic preproof (where there is at least one derivation path), 
	let $(\tau, \tau')$ be a pair of a derivation trace over a node pair $(\nu, \nu')$ of an invalid derivation path, where $\tau\in \{\sigma : [\alpha]\phi, \sigma : \la\alpha\ra\phi\}$ and $\nu\dddef (\Gamma\Rightarrow \tau, \Delta)$. 
	For any counter-example $\CT(\rho, \tau, \nu)$ of $\tau$, 
	there exists an evaluation $\rho'$ and a counter-example $\CT(\rho', \tau', \nu')$ of $\tau'$ such that $\CT(\rho', \tau', \nu')\mult \CT(\rho, \tau, \nu)$. Moreover, if $(\tau, \tau')$ is a progressive step, then $\CT(\rho', \tau', \nu')\pmult \CT(\rho, \tau, \nu)$.  
}

\begin{proof}[Proof of Lemma~\ref{lemma:infinite descent sequence}]
	Consider the rule application from node $\nu$, actually the only non-trivial cases are when it is an instance of rule $([\alpha])$ (namely ``case 1'') , rule $(\la\alpha\ra)$ (namely ``case 2'') or a substitution rule $(\sigma\Sub)$ (namely ``case 3'').

	\textbf{Case 1: }
	If from node $\nu$ rule $([\alpha])$ is applied with $\tau\dddef \sigma : [\alpha]\phi$ the target formula, 
	let $\tau' = (\sigma' : [\alpha']\phi)$ for some $(\alpha', \sigma')$, so $\nu' = (\Gamma\Rightarrow \tau', \Delta)$. 
	For any $\rho$ that violates $\mfr{P}(\nu)$, 
	$\CT(\rho, \tau, \nu)$ is obviously non-empty. 
	So $\CT(\rho, \tau', \nu')$ is also non-empty (since $\rho\models \Gamma$ and $\mfr{P}(\nu')$ is invalid). 
	By Definition~\ref{def:Program Properties}-\ref{item:Termination Finiteness}, $\CT(\rho, \tau, \nu)$ is also finite.  
	By the derivation $\vdash_{P_{\Oper,\Terminate}} (\Gamma\Rightarrow (\alpha, \sigma)\trans(\alpha', \sigma'),\Delta)$, 
	for each path $tr = \rho(\alpha',\sigma')...(\ter, \sigma_n)\in \CT(\rho, \tau', \nu')$ ($n\ge 1$), 
	$tr' = \rho(\alpha, \sigma)tr$ is a path in $\CT(\rho, \tau, \nu)$ that has $tr$ as its proper suffix.
	So we obtain the finiteness of $\CT(\rho, \tau', \nu')$ from the finiteness of $\CT(\rho, \tau, \nu)$, and thus we have 
	$\CT(\rho, \tau', \nu')\pmult \CT(\rho, \tau, \nu)$.

	\textbf{Case 2: }
	If from node $\nu$ rule $(\la\alpha\ra)$ is applied with $\tau\dddef \sigma : \la\alpha\ra\phi$ the target formula, 
	let $\tau' = (\sigma' : \la\alpha'\ra\phi)$ for some $(\alpha', \sigma')$, so $\nu' = (\Gamma\Rightarrow \tau', \Delta)$. 
	For any $\rho$ that violates $\mfr{P}(\nu)$, $\rho\models \Gamma$, 
	so both $\CT(\rho, \tau, \nu)$ and $\CT(\rho, \tau', \nu')$ are well defined. 
	By Definition~\ref{def:Program Properties}-\ref{item:Termination Finiteness}, $\CT(\rho, \tau, \nu)$ is finite.
	According to the derivation $\vdash_{P_{\Oper,\Terminate}} (\Gamma\Rightarrow (\alpha, \sigma)\trans(\alpha', \sigma'),\Delta)$, 
	for each path $tr = \rho(\alpha', \sigma')...(\ter, \sigma_n)\in \CT(\rho, \tau', \nu')$ ($n\ge 1$), 
	$tr' = \rho(\alpha,\sigma)tr$ is a path in $\CT(\rho, \tau, \nu)$ that has $tr$ as its proper suffix. 
	From the finiteness of $\CT(\rho, \tau, \nu)$, we have the finiteness of $\CT(\rho, \tau', \nu')$.
	Now consider two cases: 
	(1) If $\CT(\rho, \tau, \nu)$ is empty, then by the above $\CT(\rho, \tau', \nu')$ is also empty. So $\CT(\rho, \tau', \nu') = \CT(\rho, \tau, \nu)$;
	(2) If $\CT(\rho, \tau', \nu')$ is not empty, especially, when $(\tau, \tau')$ is a progressive step with the support of the derivation $\vdash_{P_{\Oper, \Terminate}} (\Gamma\Rightarrow (\alpha', \sigma')\termi, \Delta))$, 
	then from the above, 
	$\CT(\rho, \tau, \nu)$ must also be non-empty. 
	Moreover, in this case, we have $\CT(\rho, \tau', \nu')\pmult \CT(\rho, \tau, \nu)$. 
	
	\textbf{Case 3:}
	If from node $\nu$ a substitution rule $(\sigma \Sub)$ is applied, 
	let $\tau = \Sub(\sigma : [\alpha]\phi)$ be the target formula of $\nu$, 
	then $\tau' = \sigma : [\alpha]\phi$ and $\nu' = (\Gamma\Rightarrow \tau', \Delta)$. 
	The situation for $\tau\equiv \sigma: \la\alpha\ra\phi$ is similar and we omit it here. 
	For any evaluation $\rho$ that violates $\mfr{P}(\nu)$, 
	by Definition~\ref{def:Abstract Substitution} there exists a $\rho'\in \Eval$ such that for any formula $\psi\in \DLF$, 
	$\rho(\Sub(\psi))\equiv \rho'(\psi)$. 
	Especially, by the structureness of $\Sub$, there exist $\alpha_0, \sigma_0$ and $\phi_0$ such that 
	$\rho(\Sub(\sigma : [\alpha]\phi))\equiv \sigma_0 : [\alpha_0]\phi_0\equiv \rho'(\sigma : [\alpha]\phi)$. 
	Therefore, it is not hard to see that $\rho'$ violates $\mfr{P}(\nu')$ and 
	we have $\CT(\rho', \tau', \nu') = \CT(\rho, \tau, \nu)$.

	
	
	\ifx
	\textbf{Case 3:}
	If from node $\nu$ rule $(\sigma \textit{Sub})$ is applied, 
	let $\tau = (\sigma : [\alpha]\phi)[t/x]$ be the target formula of $\nu$, then $\tau' = \sigma : [\alpha]\phi$ and $\nu' = (\Gamma\Rightarrow \tau', \Delta)$. 
	The situation for $\tau\equiv \sigma: \la\alpha\ra\phi$ is similar and we omit it here. 
	For any evaluation $\rho$ that violates $\mfr{P}(\nu)$, 
	let $\rho'\dddef \rho[x\mapsto \rho(t)]$. 
	By the definition of substitutions in Section~\ref{section:Terms and Substitutions}, we can get that 
	for any term $u\in \TA$, 
	$\rho(u[t/x]) \equiv \rho|_{\FV(u)\setminus\FV_x(u)}(u)[\rho(t)/x] \equiv \rho'^*|_{\FV(u)\setminus\FV_x(u)}(u)[\rho(t)/x] \equiv \rho'^*(u)$. 
	Therefore, it is not hard to see that 
	$\rho'$ violates $\mfr{P}(\nu')$ and 
	we have $\CT(\rho', \tau', \nu') = \CT(\rho, \tau, \nu)$. 
	\fi
\end{proof}

From the ``case 2'' of the above proof, 
we see that derivation $\vdash_{P_{\Oper, \Terminate}} (\Gamma\Rightarrow (\alpha', \sigma')\termi, \Delta))$ plays the crucial role for proving the non-emptiness of the counter-examples in order to obtain a strict relation $\pmult$. 

\ifx
{\parindent 0pt
	\textbf{Content of Proposition~\ref{prop:lifting process 1}:}  
	Given any formulas $\phi, \psi$, if formula $\phi\to \psi$ is valid, 
	then rule 
	$$
	\begin{aligned}
		\infer[]
		{\Gamma\Rightarrow \sigma : \psi, \Delta}
		{\Gamma\Rightarrow \sigma : \phi, \Delta}
	\end{aligned}
	$$
	is sound for any configuration $\sigma\in \Conf_\std(\Gamma\cup \Delta\cup\{\phi, \psi\})$. 
}

\begin{proof}[Proof of Proposition~\ref{prop:lifting process 1}]
	Let $A = \Gamma\cup \Delta\cup\{\phi, \psi\}$. 
	Assume proposition $\mfr{P}(\Gamma\Rightarrow \sigma : \phi, \Delta)$ holds. 
	For any evaluation $\rho\in \Eval$, if $\rho\models \Gamma$ and $\rho\not\models \phi$ for all $\phi\in \Delta$, we prove $\rho\models \sigma : \psi$. 
	From the assumption we obtain $\rho\models \sigma : \phi$, which, by Definition~\ref{def:Semantics of Labelled LDL Formulas}, is $\rho, \sigma\models \phi$. 
	Because $\sigma$ is standard w.r.t. $A$, by Definition~\ref{def:Simple Configurations},  there exists a $\rho'\in \Eval$ such that $(\rho, \sigma)\se_A \rho'$, so  
	$\rho'\models \phi$, hence $\rho'\models \psi$ by the validity of $\phi\to \psi$. 
	So $\rho, \sigma\models \psi$, which is $\rho\models \sigma : \psi$. 
\end{proof}
\fi

{\parindent 0pt
	\textbf{Content of Proposition~\ref{prop:lifting process 2}:} 
	Given a sound rule of the form
	$$
	\begin{aligned}
		\infer[]
		{\Gamma\Rightarrow \Delta}
		{\Gamma_1\Rightarrow \Delta_1
			&...&
			\Gamma_n\Rightarrow \Delta_n}
	\end{aligned}, \mbox{ $n\ge 1$}, 
	$$
	in which all formulas are unlabelled, rule 
	$$
	\begin{aligned}
		\infer[]
		{\sigma : \Gamma\Rightarrow \sigma : \Delta}
		{\sigma : \Gamma_1\Rightarrow \sigma : \Delta_1
			&...&
			\sigma : \Gamma_n\Rightarrow \sigma : \Delta_n}
	\end{aligned}
	$$
	is sound for any configuration $\sigma\in \Conf_\free(\Gamma\cup \Delta\cup \Gamma_1\cup \Delta_1\cup...\cup\Gamma_n\cup\Delta_n)$. 
}

\begin{proof}[Proof of Proposition~\ref{prop:lifting process 2}]
	Let $A = \Gamma\cup \Delta\cup \Gamma_1\cup \Delta_1\cup...\cup\Gamma_n\cup\Delta_n$. 
	Assume formulas $\mfr{P}(\sigma : \Gamma_1\Rightarrow \sigma : \Delta_1)$,...,$\mfr{P}(\sigma : \Gamma_n\Rightarrow \sigma : \Delta_n)$ are valid, 
	we need to prove the validity of formula $\mfr{P}(\sigma : \Gamma\Rightarrow \sigma : \Delta)$. 
	First, notice that formula $\mfr{P}(\Gamma_i\Rightarrow \Delta_i)$ ($i\ge 1$) is valid, 
	because for any evaluation $\rho\in \Eval$, by Definition~\ref{def:free configurations}-\ref{item:free configurations cond 2} there exists a $\rho'\in \Eval$ satisfying that $(\rho', \sigma)\se_A \rho$. 
	So for any formula $\phi\in A$, $\rho\models \phi$ iff $\rho'\models \sigma : \phi$. 
	Therefore, formula $\mfr{P}(\Gamma\Rightarrow \Delta)$ is valid. 
	On the other hand, from that $\mfr{P}(\Gamma\Rightarrow \Delta)$ is valid, 
	we can get that $\mfr{P}(\sigma : \Gamma\Rightarrow \sigma : \Delta)$ is valid. 
	This is because for any evaluation $\rho\in \Eval$, by Definition~\ref{def:free configurations}-\ref{item:free configurations cond 1} there is a $\rho'\in \Eval$ such that $(\rho,\sigma)\se_A\rho'$, which means for any formula $\phi\in A$, $\rho\models\sigma : \phi$ iff $\rho'\models \phi$. 
\end{proof}

\section{Formal Definitions of \emph{While} Programs in $\LDL$}
\label{section:Formal Definitions of While Programs}

\subsection{Syntax  and Semantics of \emph{While} Programs}
We formally define program domain $\Domain{\WP}$ for \emph{While} programs. 
Below, given a set $A$, $\mcl{P}(A)$ represents its power set. 
Given a function $f : A \to B$, 
$f[x\mapsto v]$ represents the function that maps $x$ to value $v$, and maps the other variables to the same value as $f$.

\textbf{Expressions}.
The variable set $\Var_\WP$ ranges over the set of integer numbers $\mbb{Z}$. 

An expression $e$ is defined inductively as:
\begin{enumerate}[(1)]
	\item a variable $x\in \Var_\WP$, a constant $n\in \mbb{Z}$ are expressions;
	\item $e_1 \sim e_2$ where $\sim\in \{+, -, \times, /\}$ are expressions if $e_1$ and $e_2$ are expressions. 
\end{enumerate}
$+,-,\times,/$ are the usual arithmetic operators in domain $\mbb{Z}$. 
Denote the set of expressions as $\Exp_\WP$. 

\textbf{$\Prog_\WP$ \& $\Conf_\WP$ \& $\Fmla_\WP$}.
A formula $\phi\in \Fmla_\WP$ is an arithmetic first-order logical formula defined as follows:
$$
\phi\dddef e_1 \le e_2\ |\ \neg \phi\ |\ \phi\wedge \phi\ |\ \forall x.\phi. 
$$
Relation $\le$ is the usual ``less-than-equal relation'' in domain $\mbb{Z}$. 
Other logical operators such as $\to$, $\vee$, $\exists$ can be expressed by $\neg, \wedge$ and $\forall$ accordingly in the usual way. 
Other relations such as $=$ and $<$ can be expressed by $\le$ and logical operators accordingly. 

A program $\alpha\in \Prog_\WP$ is defined as follows in BNF forms: 
$$
\alpha\dddef x:=e\ |\ \alpha\seq \alpha\ |\ \Wif\ \phi\ \Wthen\ \alpha\ \Welse\ \alpha\ \Wend\ |\ \Wwhile\ \phi\ \Wdo\ \alpha\ \Wend, 
$$
where $\phi\in \Fmla_\WP$.  

A configuration $\sigma\in \Conf_\WP$ is defined as follows in BNF forms:
$$
\sigma \dddef x \mapsto e\ |\ x\mapsto e , \sigma, 
$$
where there is at most one $x\mapsto e$ for each variable $x\in \Var_\WP$. 
We use $\{\cdot\}$ to wrap a configuration $\sigma$ as: $\{\sigma \}$.  

Let $\Term_\WP\dddef \Exp_\WP\cup \Prog_\WP\cup \Conf_\WP\cup \Fmla_\WP$. 
With $\Prog_\WP, \Conf_\WP, \Fmla_\WP$ defined, we have $\LDL$ formulas $(\DLF)_\WP$ in program domain $\Domain{\WP}$ according to Definition~\ref{def:LDL Formulas}. 

\textbf{Free Variables}. 
For each term, its set of \emph{free variables} is captured as function 
$\FV : \Term_\WP\to \mcl{P}(\Var_\WP)$, which is inductively defined as follows:
\begin{enumerate}[(1)]
	\item $\FV(x)\dddef \{x\}$, $\FV(n)\dddef \emptyset$ where $n\in \mbb{Z}$;
	\item $\FV(e_1\sim e_2)\dddef \FV(e_1)\cup \FV(e_2)$, where $\sim\in \{+, -, \times, /\}$;
	\item $\FV(e_1\le e_2)\dddef \FV(e_1)\cup \FV(e_2)$;
	\item $\FV(\neg\phi)\dddef \FV(\phi)$;
	\item $\FV(\phi_1\wedge \phi_2)\dddef \FV(\phi_1)\cup \FV(\phi_2)$;
	\item $\FV(\forall x.\phi)\dddef \FV(\phi)\setminus\{x\}$;
	\item $\FV(x := e)\dddef \FV(e)\setminus \{x\}$;
	\item $\FV(\alpha\seq\beta)\dddef \FV(\alpha)\cup (\FV(\beta)\setminus\BD(\alpha))$;
	\item $\FV(\Wif\ \phi\ \Wthen\ \alpha\ \Welse\ \beta\ \Wend)\dddef \FV(\phi)\cup \FV(\alpha)\cup \FV(\beta)$;
	\item $\FV(\Wwhile\ \phi\ \Wdo\ \alpha\ \Wend)\dddef \FV(\phi)\cup \FV(\alpha)$;
	\item $\FV(x \mapsto e)\dddef \FV(e)$;
	\item $\FV(x\mapsto e, \sigma)\dddef \FV(x\mapsto e)\cup \FV(\sigma)$;
\end{enumerate}
where function $\BD: (\Prog_\WP\cup \Conf_\WP)\to \mcl{P}(\Var_\WP)$ returns the set of binding variables in programs and configurations, which is formally defined as:
\begin{enumerate}[(1)]
	\item $\BD(x := e)\dddef \{x\}$;
	\item $\BD(\alpha\seq \beta)\dddef \BD(\alpha)\cup \BD(\beta)$;
	\item $\BD(\Wif\ \phi\ \Wthen\ \alpha\ \Welse\ \beta\ \Wend)\dddef \BD(\alpha)\cup \BD(\beta)$;
	\item $\BD(\Wwhile\ \phi\ \Wdo\ \alpha\ \Wend)\dddef \BD(\alpha)$;
	\item $\BD(x\mapsto e)\dddef \{x\}$;
	\item $\BD(x\mapsto e, \sigma)\dddef \BD(x\mapsto e)\cup \BD(\sigma)$. 
\end{enumerate}
If $\FV(t)=\emptyset$ for a term $t\in \Term_\WP$, we call it a \emph{closed term}. 
For a set $A$ of terms, we use $\Clo(A)$ to denote its subset of all closed terms.

We extend function $\FV$ to program states and $\LDL$ formulas $(\DLF)_\WP$. 
For any program state $(\alpha, \sigma)\in \Prog_\WP\times \Conf_\WP$
or $\LDL$ formula $\phi\in (\DLF)_\WP$, $\FV((\alpha, \sigma))/\FV(\phi)$ is defined inductively as follows:

\ifx
, $\FV((\alpha, \sigma))$ is defined such that $$\FV((\alpha, \sigma))\dddef \FV(\sigma)\cup (\FV(\alpha)\setminus \BD(\sigma)).$$ 
For any $\LDL$ formula $\phi\in (\DLF)_\WP$, $\FV(\phi)$ is defined inductively as follows: 
\fi
\begin{enumerate}[(i)]
	\item $\FV((\alpha, \sigma))\dddef \FV(\sigma)\cup (\FV(\alpha)\setminus \BD(\sigma))$, if $(\alpha, \sigma)\in \Prog_\WP\times \Conf_\WP$;
	\item $\FV(F)$ is already defined, if $F\in \Fmla_\WP$;
	\item $\FV(\sigma : [\alpha]\phi)\dddef \FV(\sigma)\cup (\FV(\alpha)\setminus\BD(\sigma))\cup (\FV(\phi)\setminus\BD(\sigma))$;
	\item $\FV(\sigma : \phi)\dddef \FV(\sigma)\cup (\FV(\phi)\setminus \BD(\sigma))$, if $\phi$ is not in a form: $[\alpha]\psi$;
	\item $\FV(\neg\phi)\dddef \FV(\phi)$;
	\item $\FV(\phi_1\wedge \phi_2)\dddef \FV(\phi_1)\cup \FV(\phi_2)$. 
\end{enumerate}
A \emph{closed program state} $(\alpha, \sigma)\in \Prog_\WP\times \Conf_\WP$ is a program state satisfying that $\FV(\alpha, \sigma)=\emptyset$. 
A \emph{closed $\LDL$ formula} $\phi\in (\DLF)_\WP$ is a formula such that $\FV(\phi)=\emptyset$. 


\textbf{Substitutions}. 
Given a function $\eta: \Var_\WP\to \Exp_\WP$ that maps each variables to an expression,  
its extension $\eta^*_A: \Term_\WP\to \Term_\WP$ restricted on a set $A$ of variables, called a \emph{substitution}, maps each term $t\in\Term_\WP$ to a term by replacing each free variable $x\in A$ of $t$ with expression $\eta(x)$. 
$\eta^*_A$ is inductively defined as follows:
\begin{enumerate}[(1)]
	\item $\eta^*_A(x)\dddef \eta(x)$ if $x\in A$, $\eta^*_A(x)\dddef x$ otherwise;
	\item $\eta^*_A(n)\dddef n$ if $n\in \mbb{Z}$; 
	\item $\eta^*_A(e_1\sim e_2)\dddef \eta^*_{A}(e_1)\sim \eta^*_{A}(e_2)$, where $\sim\in \{+, -, \times, /\}$;
	\item $\eta^*_A(e_1\le e_2)\dddef \eta^*_{A}(e_1)\le \eta^*_{A}(e_2)$;
	\item $\eta^*_A(\neg\phi)\dddef \eta^*_{A}(\phi)$;
	\item $\eta^*_A(\phi_1\wedge \phi_2)\dddef \eta^*_{A}(\phi_1)\wedge \eta^*_{A}(\phi_2)$;
	\item $\eta^*_A(\forall x.\phi)\dddef \forall x.\eta^*_{A\setminus \{x\}}(\phi)$;
	\item $\eta^*_A(x := e)\dddef x := \eta^*_{A\setminus \{x\}}(e)$;
	\item $\eta^*_A(\alpha\seq\beta)\dddef \eta^*_{A}(\alpha)\seq \eta^*_{A\setminus \BD(\alpha)}(\beta)$;
	\item $\eta^*_A(\Wif\ \phi\ \Wthen\ \alpha\ \Welse\ \beta\ \Wend)\dddef \Wif\ \eta^*_{A}(\phi)\ \Wthen\ \eta^*_{A}(\alpha)\ \Welse\ \eta^*_{A}(\beta)\ \Wend$;
	\item $\eta^*_A(\Wwhile\ \phi\ \Wdo\ \alpha\ \Wend)\dddef \Wwhile\ \eta^*_{A}(\phi)\ \Wdo\ \eta^*_{A}(\alpha)\ \Wend$;
	\item $\eta^*_A(x\mapsto e)\dddef x\mapsto \eta^*_{A}(e)$;
	\item $\eta^*_A(x\mapsto e, \sigma)\dddef \eta^*_{A}(x\mapsto e), \eta^*_{A}(\sigma)$. 
	
	\ifx
	\item $\eta^*_A(x)\dddef \eta(x)$ if $x\in A$, $\eta^*_A(x)\dddef x$ otherwise;
	\item $\eta^*_A(n)\dddef n$ if $n\in \mbb{Z}$; 
	\item $\eta^*_A(e_1\sim e_2)\dddef \eta^*_{A\cap\FV(e_1)}(e_1)\sim \eta^*_{A\cap\FV(e_2)}(e_2)$, where $\sim\in \{+, -, \times, /\}$;
	\item $\eta^*_A(e_1\le e_2)\dddef \eta^*_{A\cap\FV(e_1)}(e_1)\le \eta^*_{A\cap\FV(e_2)}(e_2)$;
	\item $\eta^*_A(\neg\phi)\dddef \eta^*_{A\cap\FV(\phi)}(\phi)$;
	\item $\eta^*_A(\phi_1\wedge \phi_2)\dddef \eta^*_{A\cap\FV(\phi_1)}(\phi_1)\wedge \eta^*_{A\cap\FV(\phi_2)}(\phi_2)$;
	\item $\eta^*_A(\forall x.\phi)\dddef \forall x.\eta^*_{A\cap(\FV(\phi)\setminus \{x\})}(\phi)$;
	\item $\eta^*_A(x := e)\dddef x := \eta^*_{A\cap(\FV(e)\setminus \{x\})}(e)$;
	\item $\eta^*_A(\alpha\seq\beta)\dddef \eta^*_{A\cap\FV(\alpha)}(\alpha)\seq \eta^*_{A\cap(\FV(\beta)\setminus \BD(\alpha))}(\beta)$;
	\item $\eta^*_A(\Wif\ \phi\ \Wthen\ \alpha\ \Welse\ \beta\ \Wend)\dddef \Wif\ \eta^*_{A\cap\FV(\phi)}(\phi)\ \Wthen\ \eta^*_{A\cap\FV(\alpha)}(\alpha)\ \Welse\ \eta^*_{A\cap\FV(\beta)}(\beta)\ \Wend$;
	\item $\eta^*_A(\Wwhile\ \phi\ \Wdo\ \alpha\ \Wend)\dddef \Wwhile\ \eta^*_{A\cap\FV(\phi)}(\phi)\ \Wdo\ \eta^*_{A\cap\FV(\alpha)}(\alpha)\ \Wend$;
	\item $\eta^*_A(x\mapsto e)\dddef x\mapsto \eta^*_{A\cap(\FV(e)\setminus \{x\})}(e)$;
	\item $\eta^*_A(x\mapsto e, \sigma)\dddef \eta^*_{A\cap(\FV(x\mapsto e)\setminus \BD(x\mapsto e,\sigma))}(x\mapsto e), \eta^*_{A\cap(\FV(\sigma)\setminus \BD(x\mapsto e,\sigma))}(\sigma)$. 
	\fi
\end{enumerate}
For any term $t\in \Term_\WP$, 
we define $\eta^*(t)\dddef \eta^*_{\Term_\WP}(t)$. 
We often use $t[e/x]$ to denote a substitution $\eta^*_{\{x\}}(t)$, with $\eta(x) = e$. 

For a substitution $\eta^*(t)$ of a term $t\in \Term_\WP$, 
we always assume that it is \emph{admissible} in the usual sense that we guarantee that each free variable $y$ of a replacing term, say $\eta(x)$ for some variable $x$, is still free in $\eta^*(t)$ after the substitution by variable renaming if necessary. 


\ifx
Given $\Prog_\WP$, $\Conf_\WP$ and $\Fmla_\WP$, 
let $\Beha_\WP\dddef \{(\alpha, \sigma)\trans (\alpha', \sigma')\ |\ \alpha, \alpha'\in \Prog_\WP, \sigma, \sigma'\in \Prog_\WP\}\subseteq \Fmla_\WP$ and 
$\Termi_\WP\dddef \{(\alpha, \sigma)\termi\ |\ \alpha\in \Prog_\WP, \sigma\in \Conf_\WP\}\subseteq \Fmla_\WP$ be the set of program transitions and terminations. 
$\Clo(\Beha_\WP)$ and $\Clo(\Termi_\WP)$ can be defined according to Section~\ref{section:Program Behaviours}. 
\fi

We extend substitution $\eta^*_A$ to program states $\Prog_\WP\times \Conf_\WP$ and $\LDL$ formulas $(\DLF)_\WP$ as follows:
\begin{enumerate}[(i)]
	\item $\eta^*_A((\alpha, \sigma))\dddef (\eta^*_{A\setminus \BD(\sigma)}(\alpha), \eta^*_A(\sigma))$, if $(\alpha, \sigma)\in \Prog_\WP\times \Conf_\WP$;
	\item $\eta^*_A(F)$ is already defined, if $F\in \Fmla_\WP$;
	\item $\eta^*_A(\sigma : [\alpha]\phi)\dddef \eta^*_A(\sigma) : [\eta^*_{A\setminus \BD(\sigma)}(\alpha)]\eta^*_{A\setminus \BD(\sigma)}(\phi)$;
	\item $\eta^*_A(\sigma : \phi)\dddef \eta^*_A(\sigma) : \eta^*_{A\setminus \BD(\sigma)}(\phi)$, if $\phi$ is not in a form: $[\alpha]\psi$;
	\item $\eta^*_A(\neg\phi)\dddef \neg \eta^*_A(\phi)$;
	\item $\eta^*_A(\phi_1\wedge \phi_2)\dddef \eta^*_A(\phi_1)\wedge \eta^*_A(\phi_2)$.
\end{enumerate}

\textbf{$\Eval_\WP$ \& $\app_\WP$}. 
Let $s$ be a function $s : \Var_\WP\to \Clo(\Exp_\WP)$, then $s^*$ is an \emph{evaluation} of $\Eval_\WP$. 
Given a configuration $\sigma\in \Conf_\WP$, $\sigma(\cdot) : \Var_\WP\to 
\Exp_\WP$ is a function defined inductively as follows: 
\begin{enumerate}[(1)]
	\item $(x\mapsto e)(x)\dddef e$;
	\item $(x\mapsto e, \sigma)(y)\dddef (x\mapsto e)(y)$ if $y\equiv x$, 
	$(x\mapsto e, \sigma)(y)\dddef \sigma(y)$ otherwise. 
\end{enumerate}
For any $\sigma\in \Conf_\WP$ and $\phi\in \Fmla_\WP$, we define interpretation $\app_\WP(\sigma, \phi)\dddef \sigma^*(\phi)$. 

\ifx
\textbf{Equivalence $\cfeq$}. 
Given two configurations $\sigma_1, \sigma_2\in \Conf_\WP$, \emph{configuration equivalence} $\sigma_1\cfeq \sigma_2$ is defined such that $\sigma^*_1(t)\equiv \sigma^*_2(t)$ for any $t\in \Term_\WP$. 
\fi


\textbf{$\Prop_\WP$ \& $\Oper_\WP$ \& $\Terminate_\WP$}. 
Propositions $\Prop_\WP$ is defined as the set of all closed formulas of $\Fmla_\WP$. 
For a formula $\phi\in \Clo(\Fmla_\WP)$, $\boolsem_\WP(\phi) \dddef 1$ if formula $\phi$ is true in the theory of integer numbers.  
Table~\ref{table:Operational Semantics of While Programs} depicts the operational semantics of \emph{While} programs. 
For a program transition $(\alpha, \sigma)\trans(\alpha', \sigma')\in \Clo(\Fmla_\WP)$, $\boolsem_\WP((\alpha, \sigma)\trans(\alpha', \sigma'))\dddef 1$ if $(\alpha, \sigma)\trans(\alpha', \sigma')$ can be inferred according to Table~\ref{table:Operational Semantics of While Programs}. 
With $\Oper_\WP$ defined, we can obtain $\Terminate_\WP$ according to Section~\ref{section:Program Behaviours}. 

\begin{table}[tbhp]
	\begin{center}
		\noindent\makebox[\textwidth]{%
			\scalebox{0.9}{
				\begin{tabular}{c}
					\toprule
					$
					\infer[^{(o:x:=e)}]
					{(x:=e, \sigma)\trans(\ter, \sigma^x_{\sigma^*(e)})}
					{
					}
					$
					\ \
					$
					\infer[^{(o:;)}]
					{(\alpha_1 ; \alpha_2, \sigma)\trans(\alpha'_1 ; \alpha_2, \sigma')}
					{
						(\alpha_1, \sigma)\trans (\alpha'_1, \sigma')
						&
						\alpha'_1\not\equiv \ter
					}
					$
					\ \ 
					$
					\infer[^{(o:;\ter)}]
					{(\alpha_1; \alpha_2, \sigma)\trans(\alpha_2, \sigma')}
					{
						(\alpha_1, \sigma)\trans (\ter, \sigma')
					}
					$
					\\
					$
					\infer[^{(o:\textit{ite-1})}]
					{(\Wif\ \phi\ \Wthen\ \alpha_1\ \Welse\ \alpha_2\ \Wend, \sigma)\trans(\alpha'_1, \sigma')}
					{
						(\alpha_1, \sigma)\trans (\alpha'_1, \sigma')
						&
						\phi\mbox{ is true}
					}
					$
					\ \ 
					$
					\infer[^{(o:\textit{ite-2})}]
					{(\Wif\ \phi\ \Wthen\ \alpha_1\ \Welse\ \alpha_2\ \Wend, \sigma)\trans(\alpha'_2, \sigma')}
					{
						(\alpha_2, \sigma)\trans (\alpha'_2, \sigma')
						&
						\phi\mbox{ is false}
					}
					$
					\\
					$
					\infer[^{(o:\textit{wh-1})}]
					{(\textit{while}\ \phi\ \textit{do}\ \alpha\ \textit{end}, \sigma)\trans(\alpha'\ ;\ \textit{while}\ \phi\ \textit{do}\ \alpha\ \textit{end}, \sigma')}
					{
						(\alpha, \sigma)\trans(\alpha', \sigma')
						&
						\sigma^*(\phi)\mbox{ is true}
					}
					$
					\ \ 
					$
					\infer[^{(o:\textit{wh-1}\ter)}]
					{(\textit{while}\ \phi\ \textit{do}\ \alpha\ \textit{end}, \sigma)\trans(\textit{while}\ \phi\ \textit{do}\ \alpha\ \textit{end}, \sigma')}
					{
						(\alpha, \sigma)\trans(\ter, \sigma')
						&
						\sigma^*(\phi)\mbox{ is true}
					}
					$
					\\
					$
					\infer[^{(o:\textit{wh-2})}]
					{(\textit{while}\ \phi\ \textit{do}\ \alpha\ \textit{end}, \sigma)\trans(\ter, \sigma)}
					{
						\sigma^*(\phi)\mbox{ is false}
					}
					$
					\\
					\bottomrule
				\end{tabular}
			}
		}
	\end{center}
	\caption{Operational Semantics of \emph{While} Programs}
	\label{table:Operational Semantics of While Programs}
\end{table}

\subsection{Proof System $P_{\Oper_\WP, \Terminate_\WP}$}

\textbf{Proof System $P_{\Oper_\WP, \Terminate_\WP}$}. 
Table~\ref{table:Inference Rules for Program Behaviours of While programs} and~\ref{table:Inference Rules for Program Terminations of While programs} list the inference rules for program behaviours $\Oper_\WP$ and program terminations $\Terminate_\WP$ respectively. 
Here we omit the other rules in first-order logic (cf.~\cite{Harel00}, some of them are already shown in Table~\ref{table:General Rules for LDL}) that are sometimes necessary when making derivations in $P_{\Oper_\WP, \Terminate_\WP}$.

\begin{table}[tb]
	\begin{center}
		\noindent\makebox[\textwidth]{%
			\scalebox{0.9}{
				\begin{tabular}{c}
					\toprule
					$
					\infer[^{(x:=e)}]
					{\Gamma\Rightarrow (x:=e, \sigma)\trans(\ter, \sigma^x_{\sigma^*(e)}), \Delta}
					{
					}
					$
					\ \
					$
					\infer[^{(;)}]
					{\Gamma\Rightarrow (\alpha_1 ; \alpha_2, \sigma)\trans(\alpha'_1 ; \alpha_2, \sigma'), \Delta}
					{
						\Gamma\Rightarrow (\alpha_1, \sigma)\trans (\alpha'_1, \sigma'), \Delta
					}
					$
					\ \ 
					$
					\infer[^{(;\ter)}]
					{\Gamma\Rightarrow(\alpha_1; \alpha_2, \sigma)\trans(\alpha_2, \sigma'), \Delta}
					{
						\Gamma\Rightarrow (\alpha_1, \sigma)\trans (\ter, \sigma'), \Delta
					}
					$
					\\
					$
					\infer[^{(\textit{ite-1})}]
					{\Gamma\Rightarrow (\Wif\ \phi\ \Wthen\ \alpha_1\ \Welse\ \alpha_2\ \Wend, \sigma)\trans(\alpha'_1, \sigma'), \Delta}
					{
						\Gamma\Rightarrow (\alpha_1, \sigma)\trans (\alpha'_1, \sigma'), \Delta
						&
						\Gamma\Rightarrow \app(\sigma, \phi), \Delta
					}
					$
					\ \ 
					$
					\infer[^{(\textit{ite-2})}]
					{\Gamma\Rightarrow (\Wif\ \phi\ \Wthen\ \alpha_1\ \Welse\ \alpha_2\ \Wend, \sigma)\trans(\alpha'_2, \sigma'), \Delta}
					{
						\Gamma\Rightarrow (\alpha_2, \sigma)\trans (\alpha'_2, \sigma'), \Delta
						&
						\Gamma\Rightarrow \neg\app(\sigma, \phi), \Delta
					}
					$
					\\
					$
					\infer[^{(\textit{wh1})}]
					{\Gamma\Rightarrow (\textit{while}\ \phi\ \textit{do}\ \alpha\ \textit{end}, \sigma)\trans(\alpha';\ \textit{while}\ \phi\ \textit{do}\ \alpha\ \textit{end}, \sigma'), \Delta}
					{
						\Gamma, \app(\sigma, \phi)\Rightarrow (\alpha, \sigma)\trans(\alpha', \sigma'), \Delta
						&
						\Gamma\Rightarrow \app(\sigma, \phi), \Delta
					}
					$
					\\
					$
					\infer[^{(\textit{wh1}\ter)}]
					{\Gamma\Rightarrow (\textit{while}\ \phi\ \textit{do}\ \alpha\ \textit{end}, \sigma)\trans(\textit{while}\ \phi\ \textit{do}\ \alpha\ \textit{end}, \sigma'), \Delta}
					{
						\Gamma, \app(\sigma, \phi)\Rightarrow (\alpha, \sigma)\trans(\ter, \sigma'), \Delta
						&
						\Gamma\Rightarrow \app(\sigma, \phi), \Delta
					}
					$
					\ \ 
					$
					\infer[^{(\textit{wh2})}]
					{\Gamma\Rightarrow (\textit{while}\ \phi\ \textit{do}\ \alpha\ \textit{end}, \sigma)\trans(\ter, \sigma), \Delta}
					{
						\Gamma\Rightarrow \neg \app(\sigma, \phi), \Delta
					}
					$
					\\
					\bottomrule
				\end{tabular}
			}
		}
	\end{center}
	\caption{Inference Rules for Program Behaviours of \emph{While} programs}
	\label{table:Inference Rules for Program Behaviours of While programs}
\end{table}

The rules for program behaviours in Table~\ref{table:Inference Rules for Program Behaviours of While programs} are directly from the operational semantics of \emph{While} programs (Table~\ref{table:Operational Semantics of While Programs}). 
Their soundness and completeness w.r.t. $\Oper_\WP$ are trivial. 
Table~\ref{table:Inference Rules for Program Terminations of While programs}, \ref{table:Another Type of Inference Rules for Program Terminations of While programs} list two types of inference rules for program terminations. 
In Table~\ref{table:Inference Rules for Program Terminations of While programs}, 
termination $(\alpha, \sigma)\termi \sigma'$ means terminating with configuration $\sigma'$ as the result. In other words, there exists a path $(\alpha, \sigma)...(\ter, \sigma')$ over $\Oper_\WP$. 
In rule $(\termi \textit{wh} 1)$, where $I$ is an invariant of the loop body $\alpha$; 
$x$ is a termination factor of the while statement indicating its terminations. 
The soundness and completeness of the rules in Table~\ref{table:Inference Rules for Program Terminations of While programs} w.r.t. $\Terminate_\WP$ can be proved based on the boolean semantics $\boolsem_\WP$ in an inductively way according to the syntax of \emph{While} programs. 

\begin{table}[tb]
	\begin{center}
		\noindent\makebox[\textwidth]{%
			\scalebox{0.9}{
				\begin{tabular}{c}
					\toprule
					$
					\infer[^{(\termi x:=e)}]
					{\Gamma\Rightarrow (x:=e, \sigma)\termi \sigma^x_{\sigma^*(e)}, \Delta}
					{
					}
					$
					\ \
					$
					\infer[^{(\termi ;)}]
					{\Gamma\Rightarrow (\alpha_1 ; \alpha_2, \sigma)\termi \sigma', \Delta}
					{
						\Gamma\Rightarrow (\alpha_1, \sigma)\termi \sigma'', \Delta
						&
						\Gamma\Rightarrow (\alpha_2, \sigma'')\termi \sigma', \Delta
					}
					$
					\\
					$
					\infer[^{(\termi \textit{ite-1})}]
					{\Gamma\Rightarrow (\Wif\ \phi\ \Wthen\ \alpha_1\ \Welse\ \alpha_2\ \Wend, \sigma)\termi\sigma', \Delta}
					{
						\Gamma\Rightarrow (\alpha_1, \sigma)\termi \sigma', \Delta
						&
						\Gamma\Rightarrow \sigma^*(\phi), \Delta
					}
					$
					\ \ 
					$
					\infer[^{(\termi \textit{ite-2})}]
					{\Gamma\Rightarrow (\Wif\ \phi\ \Wthen\ \alpha_1\ \Welse\ \alpha_2\ \Wend, \sigma)\termi \sigma', \Delta}
					{
						\Gamma\Rightarrow (\alpha_2, \sigma)\termi\sigma', \Delta
						&
						\Gamma\Rightarrow \sigma^*(\neg\phi), \Delta
					}
					$
					\\
					$
					\infer[^{(\termi \textit{wh1})}]
					{\Gamma\Rightarrow (\textit{while}\ \phi\ \textit{do}\ \alpha\ \textit{end}, \sigma)\termi\sigma', \Delta}
					{
						\Gamma\Rightarrow \sigma^*(x > 0 \wedge I\wedge \phi), \Delta
						&
						\sigma^*(x = n\wedge I)\Rightarrow (\alpha, \sigma)\termi \sigma''\wedge \sigma''^*(x < n\wedge I)
						&
						\sigma'^*(x\le 0\wedge I)\Rightarrow \sigma'^*(\neg\phi)
					}
					$
					\\
					$
					\infer[^{(\termi \textit{wh2})}]
					{\Gamma\Rightarrow (\textit{while}\ \phi\ \textit{do}\ \alpha\ \textit{end}, \sigma)\termi\sigma, \Delta}
					{
						\Gamma\Rightarrow \sigma^*(\neg\phi), \Delta
					}
					$
					\ \ 
					$
					\infer[^{(\termi)}]
					{\Gamma\Rightarrow (\alpha, \sigma)\termi, \Delta}
					{
						\Gamma\Rightarrow (\alpha, \sigma)\termi\sigma', \Delta
					}
					$
					\\
					\bottomrule
				\end{tabular}
			}
		}
	\end{center}
	\caption{Inference Rules for Program Terminations of \emph{While} programs}
	\label{table:Inference Rules for Program Terminations of While programs}
\end{table}

The soundness of rule $(\Sub)$ in \emph{While}-programs domain is instantiated by choosing the substitution $(\cdot)[e/x]$ ($e\in \Exp_\WP$): 
$$
\begin{aligned}
	\infer[^{(\Sub)}]
	{\Gamma[e/x]\Rightarrow \Delta[e/x]}
	{\Gamma\Rightarrow\Delta}
\end{aligned}.
$$
To see its soundness, we only need to prove that function $(\cdot)[e/x]$ is just a `substitution' defined in Definition~\ref{def:Abstract Substitution}. 
We observe that for any evaluation $\sigma^*\in \Eval_\WP$, let $\rho\dddef \sigma^*[x\mapsto \sigma^*(e)]$. 
Then for any formula $\phi\in (\DLF)_\WP$, by the definition of the substitutions above, we can have 
$\sigma^*(\phi[e/x])\equiv \sigma^*_{\Var_\WP\setminus \{x\}}(\phi)[\sigma^*(e)/x]\equiv \rho_{\Var_\WP\setminus \{x\}}(\phi)[\sigma^*(e)/x]\equiv \rho(\phi)$. 

\ifx
\textbf{Extra Rules $(\Extra: \sigma\cfeq)$, $(\SExtra:\sigma\textit{Sub})$}. 
In the derivations of Section~\ref{section:Example One: A While Program}, we need the following extra rules:
$$
\begin{aligned}
	\infer[^{(\Extra:\sigma\cfeq)}]
	{\Gamma\Rightarrow \sigma : [\alpha]\phi, \Delta}
	{\Gamma\Rightarrow \sigma' : [\alpha]\phi, \Delta}
\end{aligned}, \mbox{if $\sigma\cfeq \sigma'$}, 
$$
and 
$$
\begin{aligned}
	\infer[^{(\SExtra : \sigma\textit{Sub})}]
	{\Gamma[e/x]\Rightarrow \Delta[e/x]}
	{\Gamma\Rightarrow\Delta}
\end{aligned}. 
$$
The soundness of rule $(\Extra:\sigma\cfeq)$ is according to the definition of $\cfeq$, and the fact that when $\sigma\cfeq \sigma'$, their behaviours are actually the same given any program based on the operational semantics of \emph{While} programs (Table~\ref{table:Operational Semantics of While Programs}). 
The soundness of rule $(\SExtra : \sigma\textit{Sub})$ can be proved directly according to the definition of substitution above. 

We prove the safeness of rule $(\SExtra : \sigma\textit{Sub})$ as follows:
\begin{proposition}[Safeness of Rule $(\SExtra : \sigma\textit{Sub})$]
	Let $(\tau, \tau')$, where $\tau\dddef (\sigma : [\alpha]\phi)[e/x]$ (resp. $\tau\dddef (\sigma : \la\alpha\ra\phi)[e/x]$), $\tau'\dddef \sigma : [\alpha]\phi$ (resp. $\tau'\dddef \sigma : \la\alpha\ra)$), be a CP pair appearing on the right side of rule $(\SExtra : \sigma\textit{Sub})$, then $(\tau, \tau')$ is critical and thus rule $(\SExtra : \sigma\textit{Sub})$ is safe. 
\end{proposition}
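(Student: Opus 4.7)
The plan is to unfold both the rule $(\SExtra:\sigma\textit{Sub})$ and the definition of a critical pair (from the commented-out Definition of Critical Pairs in Section~\ref{section:Construction of A Cyclic Preproof Structure}), and then verify the required conditions directly using the substitution lemma that already underlies the soundness proof of rule $(\Sub)$ (Theorem~\ref{theo:soundness of rules for LDL}). Concretely, by the inductive clauses defining $\eta^*_A$ on program states and $\LDL$ formulas in Appendix~\ref{section:Formal Definitions of While Programs}, one has $\tau \equiv \sigma[e/x] : [\alpha[e/x]]\phi[e/x]$ (and likewise for the $\la\cdot\ra$ case), so the components of $\tau'$ and $\tau$ match the pattern $(\sigma',[\alpha'],\phi')$ vs.\ $(\sigma,[\alpha],\phi)$ required by the first syntactic clause of criticality.

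Next I would exhibit the witness evaluation. Given any $\rho\in\Eval_\WP$ with $\rho\models \Gamma[e/x]$, set $\rho'\dddef \rho[x\mapsto \rho(e)]$. The key equation to establish is the substitution identity
\[
\rho(u[e/x])\equiv \rho'(u) \quad \text{for every } u\in\Term_\WP \cup (\DLF)_\WP.
\]
This is proved by induction on $u$, exactly as already done in the proof of Theorem~\ref{theo:soundness of rules for LDL} for rule $(\Sub)$; admissibility of the substitution ensures that no free variable is captured by binders inside $u$ (quantifiers, assignments, configuration mappings). Applying this identity to each formula $\psi\in\Gamma$ yields $\rho'(\psi)\equiv\rho(\psi[e/x])$, and hence $\rho'\models \Gamma$ from $\rho\models\Gamma[e/x]$. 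Applying it to the three components $\sigma,\alpha,\phi$ of $\tau'$ gives $\rho'(\sigma)\equiv\rho(\sigma[e/x])$, $\rho'(\alpha)\equiv\rho(\alpha[e/x])$, $\rho'(\phi)\equiv\rho(\phi[e/x])$, which are exactly the three identities demanded by the criticality clause on the pairs $(\sigma',\sigma),(\alpha',\alpha),(\phi',\phi)$. Since $(\tau,\tau')$ is a critical pair, the rule possesses one, and is therefore safe by definition. The $\la\alpha\ra$ case is mutatis mutandis.

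The main obstacle is really just the substitution lemma $\rho(u[e/x])\equiv\rho'(u)$ in the presence of binders coming from four different syntactic categories (quantifiers, assignments $x:=(\cdot)$, sequencing of programs which introduces $\BD(\alpha)$, and configuration mappings $x\mapsto(\cdot)$). The induction has to track the restriction index $A$ in $\eta^*_A$ carefully, especially at the clauses for $\alpha\seq\beta$ and for $\sigma : [\alpha]\phi$ where $\BD(\sigma)$ removes variables from $A$. Once that bookkeeping is done, however, the identity is straightforward and the rest of the verification is a one-line match against the criticality conditions; no appeal to operational semantics or to cyclic-proof machinery is required.
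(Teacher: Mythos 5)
Your proposal is correct and follows essentially the same route as the paper's own proof: both choose the witness evaluation $\rho'\dddef\rho[x\mapsto\rho(e)]$ and reduce everything to the substitution identity $\rho(t[e/x])\equiv\rho'(t)$ for arbitrary terms, which the paper establishes via the chain $\rho(t[e/x])\equiv(\rho|_{\dom(\rho)\setminus\{x\}}(t))[\rho(e)/x]\equiv\rho'(t)$ and you establish by the corresponding induction on term structure. Your additional explicit check that $\rho'\models\Gamma$ follows from $\rho\models\Gamma[e/x]$ is a detail the paper leaves implicit under its blanket claim for all terms, but it is the same argument.
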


\begin{proof}
	We only consider the case for modality $[\cdot]$, the case for modality $\la\cdot\ra$ is similar. 
	It is sufficient to prove that for any term $t\in \Term_\WP$ and evaluation $\rho_\WP\in \Eval_\WP$, there is an evaluation $\rho'_\WP\in \Eval_\WP$ such that 
	$\rho_\WP(t[e/x])\equiv \rho'_\WP(t)$. 
	Let $\rho'_\WP\dddef \rho_\WP[x\mapsto \rho_\WP(e)]$. 
	By the definition of substitution above, 
	we can actually have $\rho_\WP(t[e/x]) \equiv (\rho_\WP|_{\dom(\rho)\setminus \{x\}}(t))[\rho_\WP(e)/x]\equiv (\rho'_\WP|_{\dom(\rho')\setminus \{x\}}(t))[\rho_\WP(e)/x]\equiv \rho'_\WP(t)$. 
\end{proof}
\fi

\begin{table}[tb]
	\begin{center}
		\noindent\makebox[\textwidth]{%
			\scalebox{0.9}{
				\begin{tabular}{c}
					\toprule
					$
					\begin{aligned}
						\infer[^{(\termi \trans)}]
						{\Gamma\Rightarrow (\alpha, \sigma\{e\})\termi, \Delta}
						{\Gamma\Rightarrow (\alpha', \sigma'\{e'\})\termi, \Delta
							&
							\Gamma\Rightarrow (\alpha, \sigma)\trans(\alpha', \sigma'), \Delta
							&
							\Gamma\Rightarrow e' \ge 0, \Delta
							&
							\Gamma\Rightarrow e'\le e, \Delta
						}
					\end{aligned}
					$
					\\
					$
					\begin{aligned}
						\infer[^{(\Sub)}]
						{\Gamma[e/x]\Rightarrow \Delta[e/x]}
						{\Gamma\Rightarrow\Delta}
					\end{aligned}
					$
					\\
					\bottomrule
				\end{tabular}
			}
		}
	\end{center}
	\caption{Another Type of Inference Rules for Program Terminations of \emph{While} programs}
	\label{table:Another Type of Inference Rules for Program Terminations of While programs}
\end{table}

\textbf{An Alternative Cyclic Proof System for Program Terminations}.
The rules in Table~\ref{table:Inference Rules for Program Terminations of While programs} actually rely on the syntactic structures of \emph{While} programs. 
Table~\ref{table:Another Type of Inference Rules for Program Terminations of While programs} gives another possible set of rules for reasoning about program terminations in a cyclic approach like $\pfDLp$, which also relies on rule $(\Sub)$ and other rules in first-order logic. 
In rule $(\termi\trans)$, 
given a configuration $\sigma\in \Conf_\WP$, $\sigma\{e\}$ expresses that an expression $e$ called the \emph{termination factor} of $\sigma$ appears in $\sigma$, which indicates how far a program can terminate. 
Rule $(\Sub)$ helps constructing a configuration with a suitable form in order to derive a bud for each potential infinite proof branch. 
The soundness conditions of the cyclic proof system based on Table~\ref{table:Another Type of Inference Rules for Program Terminations of While programs} can be defined similarly as in Section~\ref{section:Construction of A Cyclic Preproof Structure}, where a progressive step of a derivation trace is a pair $((\alpha, \sigma\{e\})\termi, (\alpha', \sigma'\{e'\})\termi)$ of an instance of rule $(\termi \trans)$, but with the premise `$\Gamma\Rightarrow e' < e, \Delta$'.  

The soundness of rule $(\termi \trans)$ is obvious. 
Here we skip the detailed proof of the soundness of the cyclic proof system based on Table~\ref{table:Another Type of Inference Rules for Program Terminations of While programs} but only give a rough idea, which makes use of the well-foundedness of the less-than relation $<$ between integer numbers. 
Intuitively, starting from a state $(\alpha_1, \sigma_1\{e_1\})$, if $(\alpha_1, \sigma_1)\not\termi$, then by the definition of `cyclic preproof' (Defnition~\ref{def:Cyclic Preproof}) we have an infinite sequence of relations: $e_1\ge e_2\ge...\ge e_n\ge ...$ with expressions $e_1,...,e_n\ge 0$. And by the definition of progressive step, among the sequence there is an infinite number of relation `$>$', thus violating the well-foundedness of relation $<$.  
We conjecture that such a cyclic proof system is also complete. More analysis will be given in future work.

\ifx
The soundness of rule $(\termi \trans)$ is obvious. 
The soundness of the cyclic proof system can be proved according to the well-foundedness of relation $<$ in the set of natural numbers:  
Intuitively, starting from a state $(\alpha_1, \sigma_1\{e_1\})$, if $(\alpha_1, \sigma_1)\not\termi$, then by the definition of `cyclic preproof' (Defnition~\ref{def:Cyclic Preproof}) we have an infinite sequence of relations: $e_1\ge e_2\ge...\ge e_n\ge ...$ with expressions $e_1,...,e_n\ge 0$. And by the definition of progressive step, among the sequence there is an infinite number of relation `$>$', thus violating the well-foundedness of relation $<$.  
We conjecture that such a cyclic proof system is also complete. More analysis will be given in future work. 
\fi

\section{Cyclic Deduction of A Synchronous Program}
\label{section:Example Two: A Synchronous Loop Program}

This example shows that we can use $\LDL$ to verify a program whose ``loop structures'' are implicit. That is where $\LDL$ is really useful as the loop information can be tracked in the cyclic proof structures of $\LDL$.  
Consider the synchronous program $E\in \Prog_\E$ (Table~\ref{table:An Example of Program Structures}) written in Esterel~\cite{Berry92}:
$$E \dddef \Etrap\ A \parallel B\ \Eend, $$
where
$
A \dddef
\Eloop\ (\Eemit\ S(0)\ ;\ x := x - S\ ;\ \Eif\ x = 0\ \Ethen\ \Eexit\ \Eend\ ;\ \Epause)\ \Eend
$, 
$
B \dddef
\Eloop\ (\Eemit\ S(1)\ ;\ \Epause)\ \Eend
$
are two programs running in parallel.

Different from \textit{while} programs, the behaviour of a synchronous program is characterized by a sequence of \emph{instances}.
At each instance, several (atomic) executions of a program may occur. 
The value of each variable is unique in an instance. 
When several programs run in parallel, 
their executions at one instance are thought to occur simultaneously. 
In this manner, the behaviour of a parallel synchronous program is deterministic. 

\ifx
A synchronous program model guarantees that when several programs run in parallel, 
the status of each variable in the programs at the end of an instance is unique,
regardless of the orders of all executions during that instance.
Therefore, at each instance, the behaviour of a parallel synchronous program is deterministic.
\fi

In this example, the key word $\Epause$ marks the end of an instance. 
At the beginning of each instance, the values of all signals are set to $\bot$, representing `absent' state. 
The $\Eloop$ statements in programs $A$ and $B$ are executed repeatedly for an infinite number of times until some $\Eexit$ statement is encountered. 
At each instance, program $A$ firstly emits signal $S$ with value $0$ and substracts the local variable $x$ with the current value of $S$; 
then checks if $x = 0$. 
While program $B$ emits signal $S$ with value $1$. 
The value of signal $S$ in one instance should be the sum of all of its emitted values by different current programs. 
So the value of $S$ should be $1 + 0 = 1$. 
The whole program $E$ continues executing until $x = 0$ is satisfied, when $\Eexit$ terminates the whole program by jumping out of the $\Etrap$ statement.  

As an instance, 
when initializing $x$ as value $3$, 
the values of all variables at the end of each instance are listed as follows:
\begin{itemize}
	\item instance 1: $(x = 3, S = 1)$;
	\item instance 2: $(x = 2, S = 1)$;
	\item instance 3: $(x = 1, S = 1)$;
	\item instance 4: $(x = 0, S = 1)$.
\end{itemize}

\ifx
A configuration in Esterel is a stack~\cite{Butucaru10}, allowing expressing local variables with same names.
For example, $\{x\mapsto 5 \Split S\mapsto 1 \Split S \mapsto 2\}$ represents a configuration in which there are 3 variables: $x$, and two $S$s with different values.
We use ``$\Split$'' instead of ``$,$'' to remind that it is a stack, with the right-most side as the top of the stack. 
$\app(\sigma)$ has a similar interpretation as in Example 1, 
but because of the stack structure of $\sigma$, for a variable, say $x$, its value is only 
determined by the value of the top-most variable $x$ of $\sigma$. 
For instance, $\app(\{x\mapsto 5\Split S\mapsto 1\Split S\mapsto 2\})$ maps 
$S$ to $2$ rather than $1$. 
\fi

In a parallel Esterel program, the executions of concurrent programs are usually dependent on each other to maintain the consistency of simultaneous executions, imposing special orders of atomic executions in one instance. 
In the program $E$ above, for instance, 
the assignment $x := x - S$ can only be executed after all values of $S$ in both programs $A$ and $B$ are collected.
In other words, the assignment $x := x - S$ can only be executed after $\Eemit\ S(0)$ and $\Eemit\ S(1)$. 
This characteristic of Esterel programs makes direct compositional reasoning impossible because 
the executions between parallel programs in an instance are not true interleaving. 
One has to symbolically execute the parallel program as a whole in order to obtain the right execution orders in each instance. 

In the following we prove the property 
$$
\nu_2 \dddef (v > 0)\Rightarrow \{x\mapsto v\Split S\mapsto \bot\} : \la E\ra\true, 
$$
which says that under configuration $\{x\mapsto v, S\mapsto\bot\}$ (with $v$ a free variable), if $v > 0$, then $E$ can finally terminate (satisfying $\true$). 

\ifx
Parallel Esterel programs are usually not compositional, in the sense that the executions of one concurrent program may depend on the executions of another. 
So it is not possible to reason about their executions separately. 
In the program $E$ above, for instance, 
the value of $S$ in program $A$ relies on the execution of $\Eemit\ S(1)$ of program $B$ in the same instance. 
Therefore, the assignment $x := x - S$ can only be executed after $\Eemit\ S(1)$. 
This actually makes separated reasoning of $A$ and $B$ impossible. 
More analysis can be found in the example in~\cite{Gesell12}.

Appendix~\ref{section:Other Details of Example Two} shows the deduction procedure based on Esterel program transitions for a property of program $E$ captured by modality $\la\cdot\ra$ in $\LDL$.   
\fi

\ifx
Following the program $E$ in Section~\ref{section:Example Two: A Synchronous Loop Program},
we prove the property below based on the symbolic executions of program $E$ according to its operational semantics: 
$$
P_2 \dddef \{x\mapsto v\Split S\mapsto \bot\} : (x > 0)\Rightarrow{} \{x\mapsto v\Split S\mapsto \bot\} : \la E\ra\true,
$$
which says that under any configuration (with $v$ a free variable), if $x > 0$, then $E$ can finally terminate (with satisfying $\true$). 
\fi

\begin{table}[tb]
	\noindent\makebox[\textwidth]{%
		\scalebox{0.9}{
			\begin{tabular}{l|l}
				\toprule
				\begin{tabular}{l}
					\begin{tikzpicture}[->,>=stealth', node distance=3cm]
						\node[draw=none] (txt2) {
							$
								\infer[^{(\la\alpha\ra)}]
								{\mbox{$\nu_2 : 1$}}
								{\infer[^{(\la\alpha\ra)}]
									{2}
									{\infer[^{(\la\alpha\ra)}]
										{3}
										{\infer[^{(\sigma\textit{Cut})}]
											{4}
											{
												\infer[^{(\textit{Wk} R)}]
												{5}
												{
													\infer[^{(\textit{Ter})}]
													{17}
													{}
												}
												&
												\infer[^{(\sigma\vee L)}]
												{6}
												{
													\infer[^{(\la\alpha\ra)}]
													{7}
													{
														\infer[^{(\la\alpha\ra )}]
														{12}
														{
															\infer[^{(\textit{LE})}]
															{13}
															{
																\infer[^{(\Sub)}]
																{14}
																{  
																	\infer[^{(\textit{LE})}]
																	{15}
																	{16}
																}
															}
														}
													}
													&
													\infer[^{(\la\alpha\ra)}]
													{8}
													{
														\infer[^{(\la\ter\ra)}]
														{9}
														{
															\infer[^{(\textit{Int})}]
															{10}
															{\infer[^{(\textit{Ter})}]
																{11}
																{}
															}
														}
													}
												}
											}
										}   
									}
								}
							$
						};
						

						\path
						;
						
						
						\draw[dotted,thick,red] ([xshift=-0.7cm, yshift=1.75cm]txt2.center) -- 
						([xshift=-1.5cm, yshift=1.75cm]txt2.center) --
						([xshift=-1.5cm, yshift=-1.75cm]txt2.center) --
						([xshift=-1.25cm, yshift=-1.75cm]txt2.center);
					\end{tikzpicture}
				\end{tabular}
				&
				\begin{tabular}{l}
					Definitions of other symbols:\\
					$A \dddef
					\Eloop\ (\Eemit\ S(0)\ ;\ x := x - S\ ;\ \Eif\ x = 0\ \Ethen\ \Eexit\ \Eend\ ;\ \Epause)\ \Eend$\\
					$B \dddef
					\Eloop\ (\Eemit\ S(1)\ ;\ \Epause)\ \Eend$\\
					$A' \dddef (\Eif\ x = 0\ \Ethen\ \Eexit\ \Eend\ ;\ \Epause)$\\
					$\sigma_1(v) \dddef \{x\mapsto v\Split S\mapsto \bot\}$\\
					$\sigma_2(v) \dddef \{x\mapsto v\Split S\mapsto 0\}$\\
					$\sigma_3(v) \dddef \{x\mapsto v\Split S\mapsto 1\}$\\
					$\sigma_4(v) \dddef \{x\mapsto v-1\Split S\mapsto 1\}$\\
					$\sigma_5(v) \dddef \{x\mapsto v-1\Split S\mapsto \bot\}$\\
				\end{tabular}
				
				\\
				\midrule
				\multicolumn{2}{l}{
					\begin{tabular}{l l l l}
						1: & $v > 0$ & $\Rightarrow$ & \ul{$\sigma_1(v) : \la \Etrap\ A\parallel B\ \Eend\ra\true$}
						\\
						2: & $v > 0$ & $\Rightarrow$ & \ul{$\sigma_2(v) : \la \Etrap \ ((x := x - S\ ; A')\ ;\ A)\parallel B)\ \Eend\ra\true$}
						\\
						3: & $v > 0$ & $\Rightarrow$ & \ul{$\sigma_3(v) : \la \Etrap \ ((x := x - S\ ; A')\ ;\ A)\parallel (\Epause\ ;\ B)\ \Eend\ra\true$}
						\\
						4: & $v > 0$ & $\Rightarrow$ & \ul{$\sigma_4(v) : \la \Etrap \ (A'\ ;\ A)\parallel (\Epause\ ;\ B)\ \Eend\ra\true$}
						\\
						5: &  $v > 0$ & $\Rightarrow$ & $\sigma_4(v) : \la \Etrap \ (A'\ ;\ A)\parallel (\Epause\ ;\ B)\ \Eend\ra\true, v -1 \neq 0\vee v - 1 = 0$
						\\
						17: &  $v > 0$ & $\Rightarrow$ & $v -1 \neq 0\vee v - 1 = 0$
						\\
						\midrule
						6: & $v > 0, x - 1 \neq 0\vee x - 1 = 0$ & $\Rightarrow$ & \ul{$\sigma_4(v) : \la \Etrap \ (A'\ ;\ A)\parallel (\Epause\ ;\ B)\ \Eend\ra\true$}
						\\
						7: & $v > 0, v - 1\neq 0$ & $\Rightarrow$ & \ul{$\sigma_4(v) : \la \Etrap \ (A'\ ;\ A)\parallel (\Epause\ ;\ B)\ \Eend\ra \true$}
						\\
						12:& $v > 0, v - 1\neq 0$ & $\Rightarrow$ & \ul{$\sigma_4(v) : \la \Etrap \ (\Epause\ ;\ A)\parallel (\Epause\ ;\ B)\ \Eend\ra \true$}
						\\
						13:& $v > 0, v - 1\neq 0$ & $\Rightarrow$ & \ul{$\sigma_5(v) : \la \Etrap \ A\parallel B\ \Eend\ra \true$}
						\\
						14: & $(v - 1) + 1 > 0, v - 1 \neq 0$ & $\Rightarrow$ & 
						\ul{$\sigma_5(v) : \la \Etrap \ A\parallel B\ \Eend\ra \true$}
						\\
						15: & $v + 1 > 0, v \neq 0$ & $\Rightarrow$ & \ul{$\sigma_1(v) : \la \Etrap \ A\parallel B\ \Eend\ra \true$}
						\\
						16:& $v > 0$ & $\Rightarrow$ & \ul{$\sigma_1(v) : \la \Etrap \ A\parallel B\ \Eend\ra \true$}
						\\
						\midrule
						8: & $v > 0, v - 1 = 0$ & $\Rightarrow$ & $\sigma_4(v) : \la \Etrap \ (A'\ ;\ A)\parallel (\Epause\ ;\ B)\ \Eend\ra \true$
						\\
						9: & $v > 0, v - 1 = 0$ & $\Rightarrow$ & $\sigma_5(v) : \la \ter\ra \true$
						\\
						10: & $v > 0, v - 1 = 0$ & $\Rightarrow$ & $\sigma_5(v) : \true$
						\\
						11: & $v > 0, v - 1 = 0$ & $\Rightarrow$ & $\true$
						\\
					\end{tabular}
				}
				\\
				\bottomrule
			\end{tabular}
		}
	}
	\caption{Derivations of Property $\nu_2$}
	\label{figure:The derivation of Example 2}
\end{table}

The derivations of $\nu_2$ is depicted in Table~\ref{figure:The derivation of Example 2}. 
We omit the sub-proof-procedures of all program transitions when applying rule $(\la\alpha\ra)$, which depends on the operational semantics of Esterel programs (cf.~\cite{Berry92}).

From node 2 to 3 is a progressive step, where we omit the derivations of  
the termination $(\Etrap \ ((x := x - S\ ; A')\ ;\ A)\parallel (\Epause\ ;\ B)\ \Eend, \sigma_3(v))\termi$, which depends on the operational semantics of Esterel (cf.~\cite{Berry92}). 
It is not hard to see that this program does terminate as the value of variable $x$ decreases by $1$ (by executing $x := x - S$) in each loop so that statement $\Eexit$ will finally be reached. 
From node 13 to 14 and node 15 to 16, rule
$$
\begin{aligned}
	\infer[^{(\textit{LE})}]
	{\Gamma, \phi\Rightarrow \Delta}
	{\Gamma, \phi'\Rightarrow\Delta},
\end{aligned}
\ \ \mbox{if $\phi\to \phi'\in \Fmla$ is valid} 
$$
is applied, which can be derived by the following derivations:
$$
\begin{aligned}
	\infer[^{(\textit{Cut})}]
	{\Gamma, \phi\Rightarrow \Delta}
	{
		\infer[^{(\textit{Wk} L)}]
		{\Gamma, \phi, \phi'\Rightarrow\Delta}
		{\Gamma, \phi'\Rightarrow \Delta}
		&
		\infer[^{(\textit{Ter})}]
		{\Gamma, \phi\Rightarrow \phi', \Delta}
		{}
	}
\end{aligned}
$$
From node 14 to 15, rule 
$$
\begin{aligned}
	\infer[^{(\Sub)}]
	{\Gamma[e/x]\Rightarrow\Delta[e/x]}
	{\Gamma\Rightarrow \Delta}  
\end{aligned}
$$
is applied, whose substitution function is $(\cdot)[e/x]$ as introduced in Section~\ref{section:Examples of Term Structures}. 
Observe that sequent 14 can be written as: 
$$
(v + 1 > 0)[v-1/v], (v \neq 0)[v-1/v] \Rightarrow 
(\sigma_1(v) : \la \Etrap \ A\parallel B\ \Eend\ra \true)[v-1/v]. 
$$

\ifx
From node 16, we can obtain node 17 by applying $(\sigma\textit{Cut})$ (to generate $v > 0$) and $(\sigma\textit{Wk}L)$. 
Intuitively, one can easily see that $v + 1 > 0\wedge v \neq 0$ is logical equivalent to $v > 0$ in the theory of integer numbers. 
From node 11, node 12 is obtained by applying rules $(\sigma\textit{Int} L)$ and $(\textit{Int})$. 
From sequent 15 to 16 rule $(\sigma \textit{Sub})$ is applied, 
since 15 can also be written as: 
$(v + 1 > 0)[v- 1/v], (v \neq 0)[v-1/v]\Rightarrow (\sigma_1(v) : \la \Etrap \ A\parallel B\ \Eend\ra \true)[v-1/v]$.

The whole proof relies on a safe extra rule 
$$
\begin{aligned}
	\infer[^{(\SExtra:\sigma\textit{Sub})}]
	{\Gamma[e/x]\Rightarrow\Delta[e/x]}
	{\Gamma\Rightarrow \Delta}  
\end{aligned}
, 
$$
whose soundness and safeness can be analyzed similarly as in \emph{While} programs (see Appendix~\ref{section:Formal Definitions of While Programs}), which we omit in this paper. 
From node 3 to 4 is a progressive step, where we omit the derivations of  
the termination $(\Etrap \ ((x := x - S\ ; A')\ ;\ A)\parallel (\Epause\ ;\ B)\ \Eend, \sigma_3(v))\termi$, which depends on the operational semantics of Esterel (cf.~\cite{Berry92}). 
It is not hard to see that this program does terminate as the value of variable $x$ decreases by $1$ in each loop so that statement $\Eexit$ will finally be reached. 
The derivation from node 6 to 8 can be obtained by applying 
rules $(\sigma\textit{Int}L)$ and $(\sigma\textit{Wk}L)$. 
From node 16, we can obtain node 17 by applying $(\sigma\textit{Cut})$ (to generate $v > 0$) and $(\sigma\textit{Wk}L)$. 
Intuitively, one can easily see that $v + 1 > 0\wedge v \neq 0$ is logical equivalent to $v > 0$ in the theory of integer numbers. 
From node 11, node 12 is obtained by applying rules $(\sigma\textit{Int} L)$ and $(\textit{Int})$. 
From sequent 15 to 16 rule $(\sigma \textit{Sub})$ is applied, 
since 15 can also be written as: 
$(v + 1 > 0)[v- 1/v], (v \neq 0)[v-1/v]\Rightarrow (\sigma_1(v) : \la \Etrap \ A\parallel B\ \Eend\ra \true)[v-1/v]$. 
\fi

Sequent 16 is a bud with sequent 1 as its companion. 
The whole preproof is cyclic as 
the only derivation path: $1,2,3,4,6,7,12,13,14,15,16,1,...$ has 
a progressive trace whose elements are underlined in Table~\ref{figure:The derivation of Example 2}. 

\ifx
is progressive as the derivation step from 3 to 4 is progressive (with $v - 1 < v$).  
So all infinite derivation traces are progressive. 

The derivation from sequent 2 to 3 comes from extra theory $\mcl{E}$, namely rule $(\mcl{E})$, which is based on the fact that $\sigma_3$ actually stores the same values for $n$ and $s$ as $\sigma_2$, because by replacing replacing $n$ with its stored value $v$ in the expression $((n+v+1)(n-v))/2$, we have  $((v+v+1)(v-v))/2 = 0$. 
So $\sigma_3$ actually maps $s$ to $0$. 
This observation, however, relies on the definition of interpretation $\app$ in the example. 
Configuration $\sigma_3(v)$ constructed in node 3 is crucial, as starting from it, we can find a bud node --- 11 ---
that is identical to node 4. 
The derivation from sequent 4 to \{5, 6\} provides a lemma: 
$\sigma_3(v) : (n > 0\vee n\le 0)$, which is obvious valid. 
From node 5, we can obtain 17 by applying rule $(\textit{Wk})$ and rule $(\textit{Int})$ in sequence. 
From node 15, node 16 can be obtained through $(\textit{Cut})$ and $(\sigma \textit{Wk} L)$, a derived rule corresponding to $(\textit{Wk})$ for the left-side derivations of sequents. 
Similarly, node 11 is obtained from 10 by applying $(\sigma \textit{Int} L)$ and $(\textit{Int})$. 
From sequent 14 to 15 rule $(\sigma \textit{Sub})$ is applied, 
since 14 can also be written as: 
$(v + 1\ge 0)[v - 1/v], (v + 1 > 0)[v - 1/v]\Rightarrow (\sigma_3 : [\textit{while}\
(n > 0)\
\textit{do}\
\alpha_1\
\textit{end}\ ]\phi_1)[v-1/v]$.

The derivations of $P_2$ is depicted in Table~\ref{figure:The derivation of Example 2}. 
Instances of rule $(\la\alpha\ra)$ rely on the program transitions listed as follows, which are based on the operational semantics of Esterel introduced in for example~\cite{Butucaru10}:
\begin{itemize}
	\item from node 1 to 2: $(\Etrap\ L\ \Eend, \sigma_1(v)\{v\})\trans (\Etrap\ (A_1\parallel B);\ L\ \Eend, \sigma_2(v)\{v\})$
	\item from node 2 to 3: $(\Etrap\ (A_1\parallel B);\ L\ \Eend, \sigma_2(v)\{v\})\trans (\Etrap\ (A_1\parallel \Epause);\ L\ \Eend, \sigma_3(v)\{v\})$
	\item from node 3 to 4: $(\Etrap\ (A_1\parallel \Epause);\ L\ \Eend, \sigma_3(v)\{v\})\trans (\Etrap\ (A'\parallel \Epause);\ L\ \Eend, \sigma_4(v)\{v-1\})$
	\item from node 5 to 10: $(\Etrap\ (A'\parallel \Epause);\ L\ \Eend, \sigma_4(v)\{v-1\})\trans (\Etrap\ (\Epause\parallel \Epause);\ L\ \Eend, \sigma_4(v)\{v-1\})$
	\item from node 6 to 7: $(\Etrap\ (A'\parallel \Epause);\ L\ \Eend, \sigma_5(v)\{v-1\})\trans (\ter, \sigma_4(v)\{v-1\})$
	\item from node 10 to 11: $(\Etrap\ (\Epause\parallel \Epause);\ L\ \Eend, \sigma_4(v)\{v-1\})\trans(\Etrap\ L\ \Eend, \sigma_5(v)\{v-1\})$
\end{itemize}

The termination factors $v$ in $E$ are natural numbers, with 
the ``less-than relation'' $<$ between natural numbers as the well-founded relation. 
From node 4, we get two branches from 5 and 6 by considering two conditions respectively: $x = 0$ and $x \neq 0$. 
Node 12 is obtained from node 11 by substituting variable $v$ with $v + 1$ based on the following observations in sequent 11: 
(1) $\sigma_5(v)$ is just $\sigma_1(v-1)$; 
(2) $\sigma_1(v-1) : (x > 0)$ is logically equivalent to $(\sigma_4(v) : (x \neq 0)) \wedge (\sigma_1(v) : (x > 0))$. 

Sequent 12 is a bud with 1 as its companion. 
The whole preproof is progressive as the derivation step from 3 to 4 is progressive (with $v - 1 < v$).  
So all infinite derivation traces are progressive. 
\fi

\end{document}